\newcommand{\blind}{0}
\newcommand\blfootnote[1]{%
	\begingroup
	\renewcommand\thefootnote{}\footnote{#1}%
	\addtocounter{footnote}{-1}%
	\endgroup
}
\newtheorem*{proposition}{Proposition}
\begin{document}

	\def\spacingset#1{\renewcommand{\baselinestretch}%
		{#1}\small\normalsize} \spacingset{1}

	
	\if0\blind
	{
		\title{\bf Econometric modelling and forecasting of intraday electricity prices}
		\author{Michał Narajewski\hspace{.2cm}\\
			University of Duisburg-Essen\\
			and \\
			Florian Ziel \\
			University of Duisburg-Essen}
		\maketitle
	} \fi
	
	\if1\blind
	{
		\bigskip
		\bigskip
		\bigskip
		\begin{center}
			{\LARGE\bf Title}
		\end{center}
		\medskip
	} \fi
	
	\bigskip
	\begin{abstract}
		In the following paper, we analyse the ID$_3$-Price in the German Intraday Continuous electricity market using an econometric time series model. A multivariate approach is conducted for hourly and quarter-hourly products separately. We estimate the model using lasso and elastic net techniques and perform an out-of-sample, very short-term forecasting study. The model's performance is compared with benchmark models and is discussed in detail. Forecasting results provide new insights to the German Intraday Continuous electricity market regarding its efficiency and to the ID$_3$-Price behaviour. 
		\blfootnote{This research article was partially supported by the German Research Foundation (DFG, Germany) and the National Science Center (NCN, Poland) through BEETHOVEN grant no. 2016/23/G/HS4/01005.\\
		\textcopyright 2019. This manuscript version is made available under the CC-BY-NC-ND 4.0 license \url{http://creativecommons.org/licenses/by-nc-nd/4.0/}}
	\end{abstract}
	
	\noindent%
	{\it Keywords:}  elastic net, electricity price, intraday market, lasso, variable selection
	\vfill
	
	\newpage
	\spacingset{1.45} 
	
	
	\section{Introduction}
	The constant development of the weather-dependent renewable energy production in Germany requires a flexible market in which power plants can balance their production forecast errors that may be caused by changing, unpredicted weather conditions. The introduction of intraday electricity markets addresses these problems and lets market participants trade energy continuously until 30 minutes before the delivery begins in the whole market and until 5 minutes before the delivery begins within respective control zones. Although the intraday markets' popularity grows rapidly, the corresponding literature does not follow its pace.
		
	While the electricity price forecasting (EPF) in day-ahead markets is willingly researched, there are, to the best of our knowledge, only a few articles regarding the forecasting of intraday electricity prices. To be specific, \cite{andrade2017probabilistic} performed a probabilistic price forecasting of electricity prices, and \cite{monteiro2016short} carried out a forecasting of intraday electricity prices using artificial neural networks. Both these papers are based on the Spanish market data. Recently \cite{uniejewski2018understanding} conducted research regarding the forecasting of intraday electricity prices that is close to our direction. They carried out a very-short term price forecasting of the ID$_3$-Price index for hourly products in the EPEX German Intraday Continuous market. There is definitely more literature on the intraday electricity markets regarding other topics than the EPF. \cite{ziel2017modeling}, \cite{pape2016fundamentals} or \cite{gonzalez2015impact} investigate the impact of fundamental regressors on the formation of intraday prices. On the other hand, \cite{Kiesel2017} or \cite{aid2016optimal} focus their research on bidding behaviour in the intraday market.
	
	The following paper aims to take a closer look at the electricity price formation in the intraday market. We want to understand better the intraday market itself and the processes that drive the price formation of both hourly and quarter-hourly Intraday Continuous products. Therefore, we focus our attention on the ID$_3$-Price index. We model it in a~multivariate manner, which is a well-known technique in the electricity price forecasting, see \cite{weron2014electricity}. We utilize an autoregressive approach, but we also make use of the continuity of the Intraday Continuous market. Our goal is to take advantage of all the information that is available on the market. Additionally, as external regressors, we take into consideration the results of the Day-Ahead and Intraday Auctions, and the imbalance volume from the balancing market. Let us note that we do not make use of any fundamental regressors, e.g. wind or solar forecast errors.
	
	In the next section, we shortly explain the intraday market rules and the function of the ID$_3$-Price index. We briefly analyze the aforementioned ID$_3$, and, based on it, we define a more general intraday price measure called~$_x\text{ID}_y$. Then, descriptive statistics are presented and the stationarity of the ID$_3$ prices is examined. In the third section, we discuss a variance stabilization transformation, following the recommendations of \cite{Uniejewski2018}, and we describe the model estimation techniques, i.e. the least absolute shrinkage and selection operator (lasso) of \cite{Tibshirani1996} and the elastic net regularization of \cite{Zou05regularizationand}. Then, we propose a full information model and present benchmark models. In the fourth section, we describe a forecasting study, utilized error measures, the \cite{diebold1995comparing} test, and a measure of the importance of coefficients as in \cite{Ziel2016}. In the fifth section, we present results, and we conduct an in-depth discussion of these. We compare the forecasts of the considered models, and we take a closer look at the variable selection for the best performing models. We close the following paper with a conclusion.
	
	\section{Market Description}
	
	 Trading in the German Intraday Continuous market begins every day at 15:00 for hourly products and at 16:00 for quarter-hourly products of the following day. The Intraday Continuous market is preceded by the Day-Ahead Auction and the Intraday Auction, which take place daily at 12:00 and 15:00, respectively, see \cite{EPEX2018}. For a better visualisation see Figure \ref{fig:market}. We describe there only the aforementioned products, but there is more trading taking place daily in the German electricity market, for instance, a forward market, a balancing auction, or an EXAA auction. For more details see \cite{Viehmann2017}.
	
	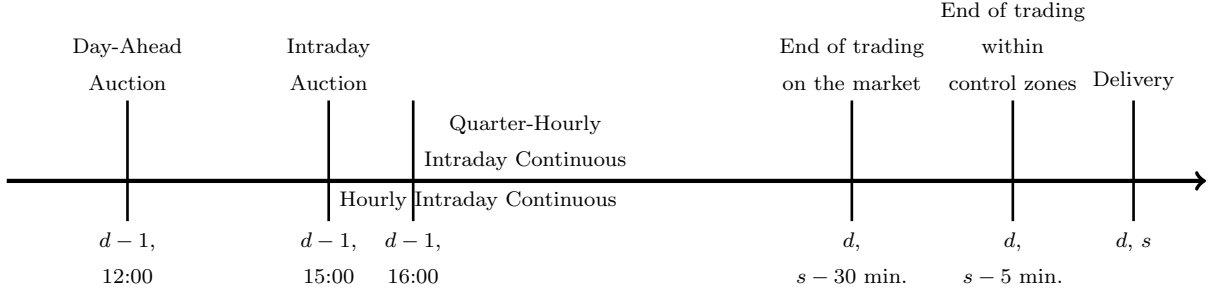
\begin{figure*}[!t]
		\begin{tikzpicture}[scale=1.07]
			\draw [->] [ultra thick] (0,0) -- (14.9,0);
			\draw [line width = 1] (1.5,1) -- (1.5, -0.5);
			\node [align = center, below, font = \scriptsize] at (1.5, -0.5) {$d-1$,\\ 12:00};
			\node [align = center, above, font = \scriptsize] at (1.5,1) {Day-Ahead\\Auction};
			\draw [line width = 1] (4,1) -- (4, -0.5);
			\node [align = center, below, font = \scriptsize] at (4, -0.5) {$d-1$,\\ 15:00};
			\node [align = center, above, font = \scriptsize] at (4,1) {Intraday\\Auction};
			\node [align = center, below right, font = \scriptsize] at (4,0) {Hourly Intraday Continuous};
			\draw [line width = 1] (5.05,1) -- (5.05, -0.5);
			\node [align = center, below, font = \scriptsize] at (5.05, -0.5) {$d-1$,\\ 16:00};
			\node [align = center, above right, font = \scriptsize] at (5.05,0) {Quarter-Hourly\\ Intraday Continuous};
			
			\draw [line width = 1] (10.5,1) -- (10.5, -0.5);
			\node [align = center, above, font = \scriptsize] at (10.5,1) {End of trading\\ on the market};
			\node [align = center, below, font = \scriptsize] at (10.5, -0.5) {$d$,\\ $s - 30$ min.};
			
			\draw [line width = 1] (12.5,1) -- (12.5, -0.5);
			\node [align = center, above, font = \scriptsize] at (12.5,1) {End of trading\\ within\\ control zones};
			\node [align = center, below, font = \scriptsize] at (12.5, -0.5) {$d$,\\ $s - 5$ min.};
			\draw [line width = 1] (14,1) -- (14, -0.5);
			\node [align = center, above, font = \scriptsize] at (14,1) {Delivery};
			\node [align = center, below, font = \scriptsize] at (14, -0.5) {$d$, $s$};
 		\end{tikzpicture}
		\caption{The daily routine of the German electricity market. $d$ corresponds to the day of the delivery and $s$ corresponds to the hour of the delivery.}
		\label{fig:market}
	\end{figure*}

	In the forward and day-ahead markets, a term "Price" is pretty straightforward, but concerning the intraday market it is not that clear what one means when speaking of an "Intraday Price". Considering the last transaction's price as the product's current price may be misleading. The volatility of the prices is highly dependent on the volume of traded energy — the smaller the volume is, the more scattered the prices can be. This pattern often results in temporal jumps of prices. Thus, several price measures were introduced by EPEX: Price Index, ID$_3$-Price, and ID$_1$-Price. The measures are applied to each product separately, so it gives us 24 values of each for the hourly products and 96 values for the quarter-hourly products. The Price Index is a volume-weighted average of the prices of the transactions in the whole trading period, ID$_3$-Price is a volume-weighted average of the prices of the transactions during the last 3 hours of trading, and the ID$_1$-Price is analogous to the ID$_3$-Price averaging the last hour of the prices instead of 3 hours. The transactions that take place in the respective control zones later than 30 minutes before the delivery are not taken into account in the calculation of these indices.
	
	\subsection{ID$_3$-Price}
	
	In the following article, we focus our attention on the ID$_3$-Price because of the importance of this index. It serves as an underlying for the German Intraday Cap/Floor Futures. With these financial instruments, market participants can hedge against positive or negative price spikes in the German electricity market, see \cite{EEX2018}. Let $b(d,s)$ be a start of the delivery of product $s$ on day~$d$. By $\mathbb{T}_3^{d,s} = \left[b(d,s) - 3, b(d,s) - 0.5\right)$ we denote a time frame between 3 hours and 30 minutes before the start of the delivery, where $[x,y)$ stands for a half-open interval. The EPEX definition of the ID$_3$-Price is as follows.
	
	\begin{equation}
		\text{EPEX ID}_3^{d,s} := \frac{1}{\sum_{k \in \mathbb{T}_3^{d,s}\cap \mathcal{T}^{d,s}} V_k^{d,s}} \sum_{k \in \mathbb{T}_3^{d,s}\cap \mathcal{T}^{d,s}} V_k^{d,s}P_k^{d,s},
		\label{eq:ID3}
	\end{equation}
		where $\mathcal{T}^{d,s}$ is a set of timestamps of transactions regarding the product $s$ on day $d$, $V_k^{d,s}$ and $P_k^{d,s}$ are the volume and the price of $k$-th trade within the transaction set $\mathbb{T}_3^{d,s}\cap \mathcal{T}^{d,s}$ respectively. 
		
		For the calculation of the ID$_3$-Price domestic and cross-border transactions are taken into account while the so-called cross-trades, i.e. trades within the same counterparty, are excluded. In the case of no trades within the $\mathbb{T}_3^{d,s}$ period, the averaging window is extended to the whole trading period of the product $s$ on day $d$. If no trades at all are present, then for quarter-hourly products the respective Intraday Auction value is used and for hourly products, the respective Day-Ahead Auction result is used.
		
		In the purpose of our analysis, we want to reconstruct the EPEX ID$_3$ as well as it is possible. Unfortunately, the data that is available to market participants do not consist of the information whether each transaction was a cross-trade or not. Besides, we disregard the block trades, which are not that common in Intraday Continuous market and are associated only with a small volume of traded energy. Since we aim at a very short-term price forecasting, we want to be able to use all the information available on the market at the time of forecasting. The price measures constructed by EPEX tell us the price level either for the full period of trading or the last few hours of trading. To get to know the price value of a product at a particular time during the trading period, we define an $_x\text{ID}_y$ function as follows.
		
		\begin{equation}
			{}^{}_x\text{ID}_y^{d,s} := \frac{1}{\sum_{k \in \mathbb{T}_{x,y}^{d,s}\cap \mathcal{T}^{d,s}} V_k^{d,s}} \sum_{k \in \mathbb{T}_{x,y}^{d,s}\cap \mathcal{T}^{d,s}} V_k^{d,s}P_k^{d,s},	
			\label{eq:xIDy}
		\end{equation}
		where $\mathbb{T}_{x,y}^{d,s} = \left[b(d,s) - x - y, b(d,s) - x \right)$, $x \ge 0$ and $y > 0$.

		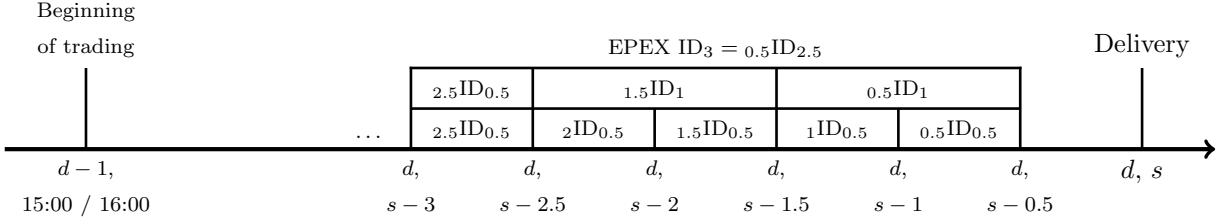
\begin{figure*}[t!]
			\begin{tikzpicture}[scale=1.08]
			\draw [->] [ultra thick] (0,0) -- (14.9,0);
			
			\node [align = center, below, font = \scriptsize] at (1, 0) {$d-1$,\\ 15:00 / 16:00};
			\node [align = center, below, font = \scriptsize] at (5, 0) {$d$,\\$s - 3$};
			\node [align = center, below, font = \scriptsize] at (6.5, 0) {$d$,\\$s - 2.5$};
			\node [align = center, below, font = \scriptsize] at (8, 0) {$d$,\\$s - 2$};
			\node [align = center, below, font = \scriptsize] at (9.5, 0) {$d$,\\$s - 1.5$};
			\node [align = center, below, font = \scriptsize] at (11, 0) {$d$,\\$s - 1$};
			\node [align = center, below, font = \scriptsize] at (12.5, 0) {$d$,\\$s - 0.5$};
			\node [align = center, below, font = \footnotesize] at (14, 0) {$d$, $s$};
			
			\draw [line width = 1] (1, 1) -- (1,0);
			\draw [line width = 1] (5, 1.015) -- (5,0);
			\draw [line width = 1] (6.5, 1) -- (6.5,0);
			\draw [line width = 1] (8, 0.5) -- (8,0);
			\draw [line width = 1] (9.5,1) -- (9.5, 0);
			\draw [line width = 1] (11,0.5) -- (11, 0);
			\draw [line width = 1] (12.5,1.015) -- (12.5, 0);
			\draw [line width = 1] (14,1) -- (14, 0);
		
			\draw [line width = 1] (4.985, 1) -- (12.515,1);
			\draw [line width = 1] (5, 0.5) -- (12.5,0.5);
			
			\node [align = center, above, font = \scriptsize] at (1,1) {Beginning \\of trading};
			\node [align = center, above, font = \scriptsize] at (4.5,0) {\dots};
			\node [align = center, above, font = \footnotesize] at (14,1) {Delivery};
			
			\node [align = center, above, font = \scriptsize] at (5.75,0) {$_{2.5}\text{ID}_{0.5}$};
			\node [align = center, above, font = \scriptsize] at (7.25,0) {$_{2}\text{ID}_{0.5}$};
			\node [align = center, above, font = \scriptsize] at (8.75,0) {$_{1.5}\text{ID}_{0.5}$};
			\node [align = center, above, font = \scriptsize] at (10.25,0) {$_{1}\text{ID}_{0.5}$};
			\node [align = center, above, font = \scriptsize] at (11.75,0) {$_{0.5}\text{ID}_{0.5}$};
			
			\node [align = center, above, font = \scriptsize] at (5.75,0.5) {$_{2.5}\text{ID}_{0.5}$};
			\node [align = center, above, font = \scriptsize] at (8,.5) {$_{1.5}\text{ID}_{1}$};
			\node [align = center, above, font = \scriptsize] at (11,0.5) {$_{0.5}\text{ID}_{1}$};
			
			\node [align = center, above, font = \scriptsize] at (8.75,1) {EPEX ID$_3 =   {}_{0.5}\text{ID}_{2.5}$};		
			\end{tikzpicture}
			\caption{Illustration of combining $_x\text{ID}_y$ on longer time frames using multiple $_x\text{ID}_y$ of shorter time frames. $d$~corresponds to the day of delivery and $s$ corresponds to the hour of delivery.}
			\label{fig:xIDyproperty}
		\end{figure*}

	For the calculation of the $_x\text{ID}_y$ we use the same transaction types as EPEX does in the calculation of their indices, but we change its behaviour in the case of no trades in the considered time frame~$\mathbb{T}_{x,y}^{d,s}$. That is to say, in the case of no trades instead of extending the averaging window to the whole trading period, we set the $_x\text{ID}_y$ price to the price of the last transaction that occurred before the time frame $\mathbb{T}_{x,y}^{d,s}$. If no trades are present before the considered time frame, we use the Intraday Auction and Day-Ahead Auction values, similarly as it is done for the EPEX ID$_3$.
One can see that with the definition (\ref{eq:xIDy}) in most cases $\text{EPEX ID}_3 = {}_{0.5}\text{ID}_{2.5}$ and $\text{EPEX ID}_1 = {}_{0.5}\text{ID}_{0.5}$. These would differ only in the case of no transactions in the considered time frame, which is a rare event. Thus, in purpose of our analysis when mentioning ID$_3$, we will have $_{0.5}\text{ID}_{2.5}$ in our minds.

Let us note that the $_x\text{ID}_y$ possesses a so-called weighted additivity property. This means that if we consider a disjoint split of the $\mathbb{T}_{x,y}^{d,s}$ period
\begin{equation}
\begin{aligned}
\mathbb{T}_{x,y}^{d,s}  =  \dot{\bigcup_{j}} \mathbb{T}_{x_j,y_j}^{d,s} 
= \dot{\bigcup_{j}} \left[b(d,s) - x_j - y_j, b(d,s) - x_j \right)		
\end{aligned}
\end{equation}
for $j \in \{0, 1, \dots, J\}$, where $x_0 + y_0 = x + y$ and $x_J = x$. Then

\begin{equation}
{}_{x}\text{ID}^{d,s}_y = \frac{  \sum_{j} {}_{x_j}\text{ID}^{d,s}_{y_j} \mathbb{V}_{x_j,y_j}^{d,s}}{\sum_{j} \mathbb{V}_{x_j,y_j}^{d,s}},
\end{equation}
where $\mathbb{V}_{x,y}^{d,s} = \sum_{k \in \mathbb{T}_{x,y}^{d,s} \cap \mathcal{T}^{d,s}} V_k^{d,s}$. The proof can be found in the Appendix. This property is useful in a computational optimization and can be helpful in a better understanding of the relation between the $_x\text{ID}_y$ for different $x$ and $y$. An~example is shown in Figure~\ref{fig:xIDyproperty}. Naturally, we can split the $\mathbb{T}_{x,y}^{d,s}$ period to time frames that are not equally long and we can continue constructing $_x\text{ID}_y$ until the beginning of trading.
	

\subsection{Descriptive statistics}
\begin{wrapfigure}[10]{r}{0.48\textwidth} 
	\includegraphics[width = 1\linewidth]{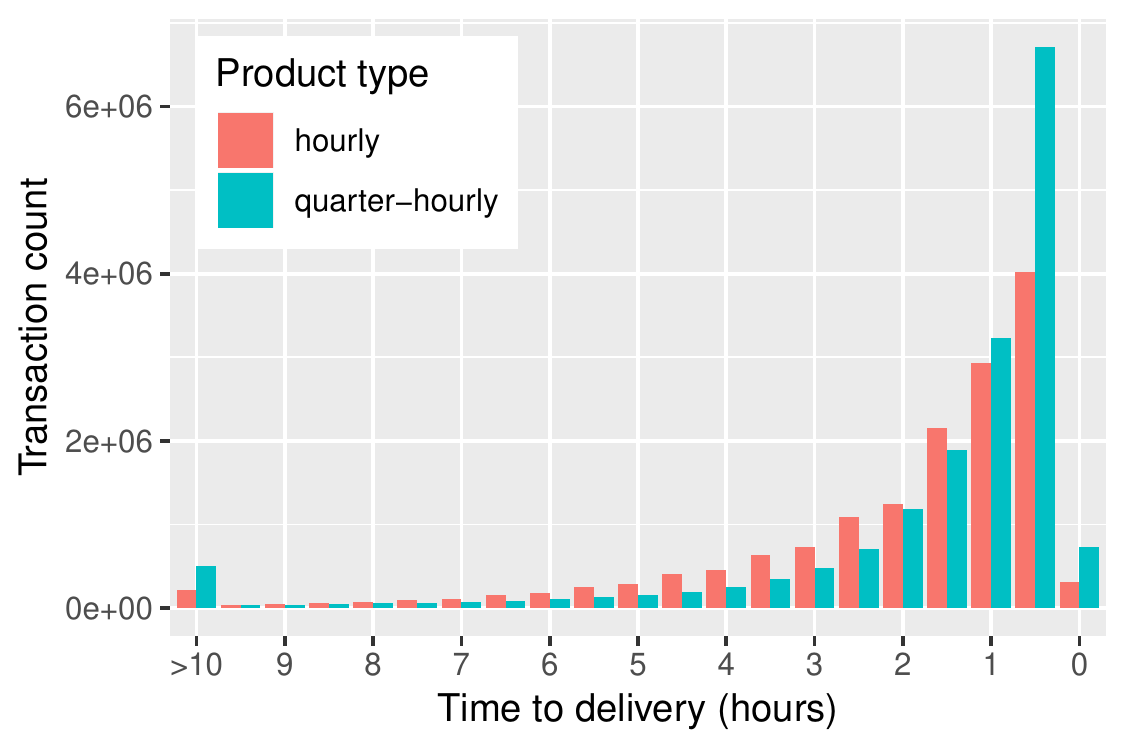}
	\caption{Distribution of transactions in the intraday market over time to delivery.}
	\label{fig:numberoftrans}
\end{wrapfigure}
In the purpose of our analysis, we use the data regarding the Intraday Continuous transactions. The data consist of hourly and quarter-hourly products, and they span the date range from 01.01.2015 to 29.09.2018. Besides, we use the corresponding data regarding the Day-Ahead Auction, Intraday Auction, and Balancing Volume as external regressors. Figure \ref{fig:numberoftrans} shows the distribution of transactions in the German intraday market depending on the time to the delivery. Let us note that most of the transactions of both considered types happen in the last hours before the delivery of the product. This distribution is even more skewed for quarter-hourly products. Over $70\%$ of all hourly and over $80\%$ of all quarter-hourly trades take place during the ID$_3$ time frame. Thus, considering the ID$_3$ as a price measure for the intraday market is a reasonable idea.
\begin{table}[!b]
\centering
\begingroup\small
\begin{tabular}{rrrrrrr}
	\hline
	& Min. & 1st Qu. & Median & Mean & 3rd Qu. & Max. \\ 
	\hline
	hourly & 6 & 280 & 421 & 472.19 & 603 & 19479 \\ 
	quarter-hourly & 0 & 70 & 109 & 129.72 & 172 & 1434 \\ 
	\hline
\end{tabular}
\endgroup
\caption{Summary statistics of the number of transactions traded in the hourly and quarter-hourly German Intraday Continuous market per product}
\label{tab:transactions}
\end{table}

\begin{table}[!t]
	\centering
	\begingroup\small
	\begin{tabular}{rrrrrrr}
		\hline
		& Min. & 1st Qu. & Median & Mean & 3rd Qu. & Max. \\ 
		\hline
		Day-Ahead Auction & 14948.8 & 23528.58 & 26842.05 & 27596.18 & 31008.43 & 51465.5 \\ 
		Intraday Hourly & 12.3 & 2517.8 & 3546.3 & 3773.52 & 4781.85 & 15173.2 \\ 
		Intraday Auction & 4.22 & 77.62 & 117.72 & 142.54 & 182.35 & 1239 \\ 
		Intraday Quarter-Hourly & 0 & 69.22 & 114.11 & 128.92 & 171.41 & 939.25 \\ 
		\hline
		\hline
	\end{tabular}
	\endgroup
	\caption{Summary statistics of the energy volume (MWh) traded in the German Day-Ahead and Intraday markets per product}
	\label{tab:volumes}
\end{table}

Table~\ref{tab:transactions} presents the basic summary statistics regarding the number of transactions that are traded in the hourly and quarter-hourly Intraday Continuous market. We observe that, on the average, there are around 472 transactions per product in the hourly intraday market, and at the same time, on the average, there are around 130 transactions per product in the quarter-hourly intraday market. Bearing in mind that there are 4 times more quarter-hourly products than the hourly ones, it is clear that, on the average, there is more trading taking place in the quarter-hourly intraday market. Moreover, we see that it is not only theoretically possible that there are no trades on a particular product. Let us note that there are also instances with a huge number of trades, comparing to the average.  In Table~\ref{tab:volumes}, we report the basic summary statistics of the volume traded in the intraday market and we compare them with the statistics of the volume traded in the day-ahead market. Obviously, the day-ahead market is incomparably bigger in terms of the traded volume than the intraday market, but the German Intraday Continuous market gains the relevance every year, what is depicted in Figure \ref{fig:volumetrend}.

\begin{wrapfigure}{r}{0.3\textwidth} 
	\includegraphics[width = 1\linewidth]{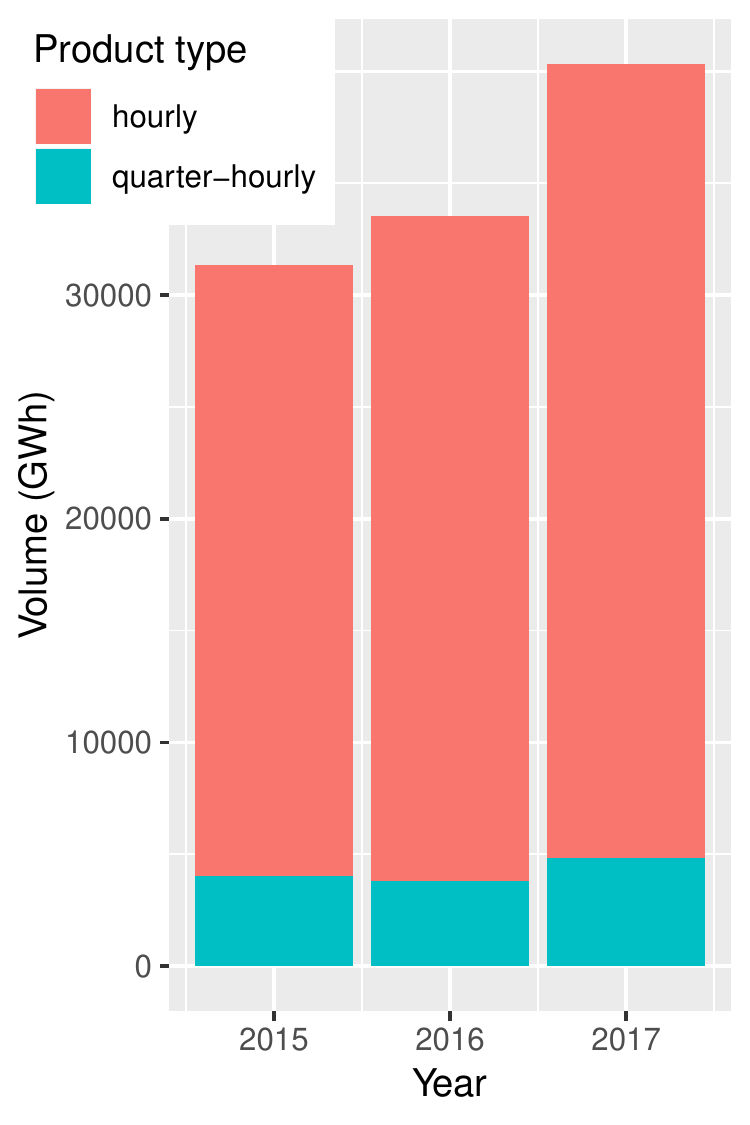}
	\caption{Energy volume traded in the Intraday Continuous market over years}
	\label{fig:volumetrend}
\end{wrapfigure}

	\begin{figure}[b!]
	\centering
	\subfloat{
		\includegraphics[width = 0.65\linewidth]{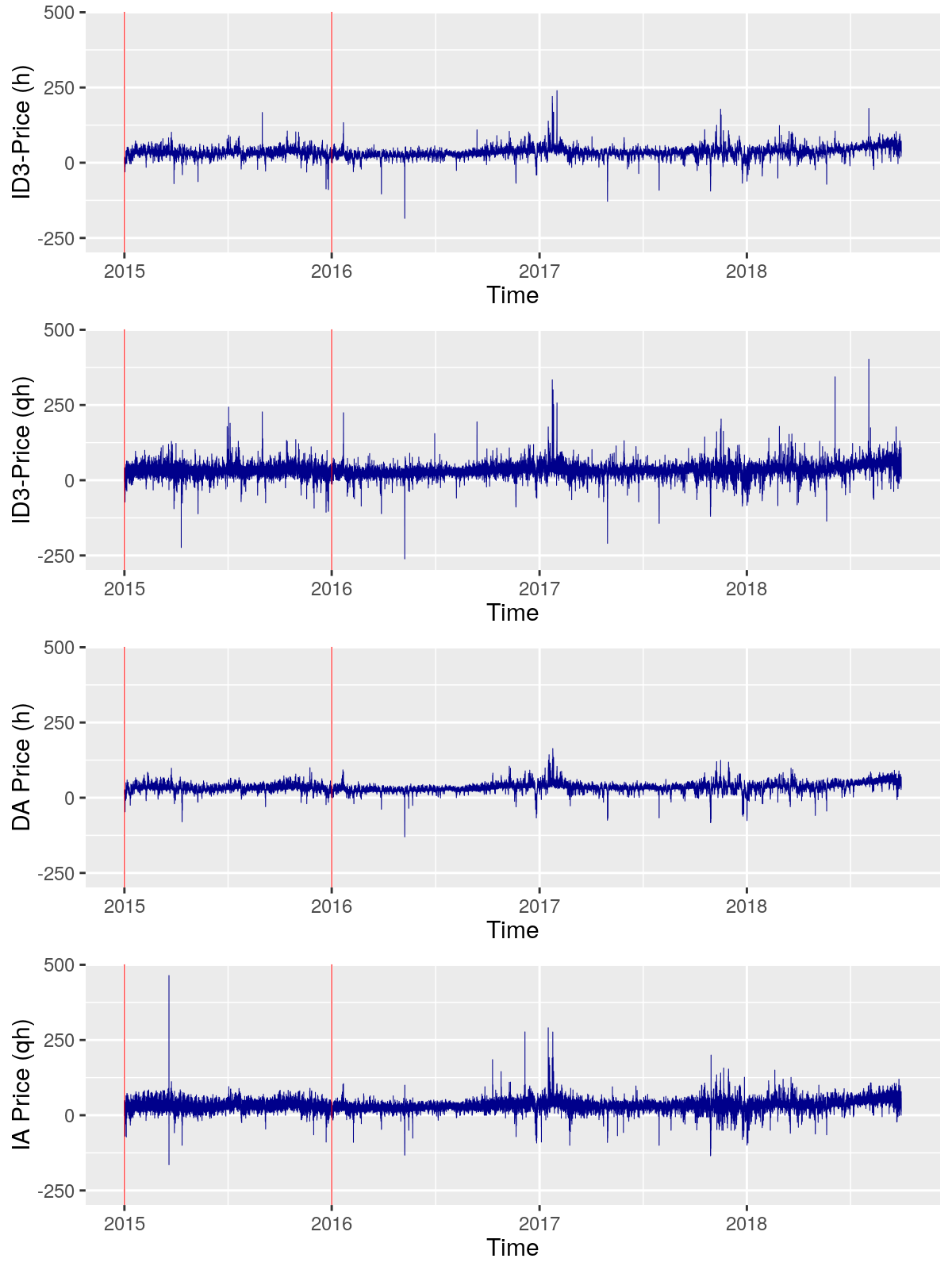}
	}
	~ 
	\subfloat{
		\includegraphics[width = 0.33\linewidth]{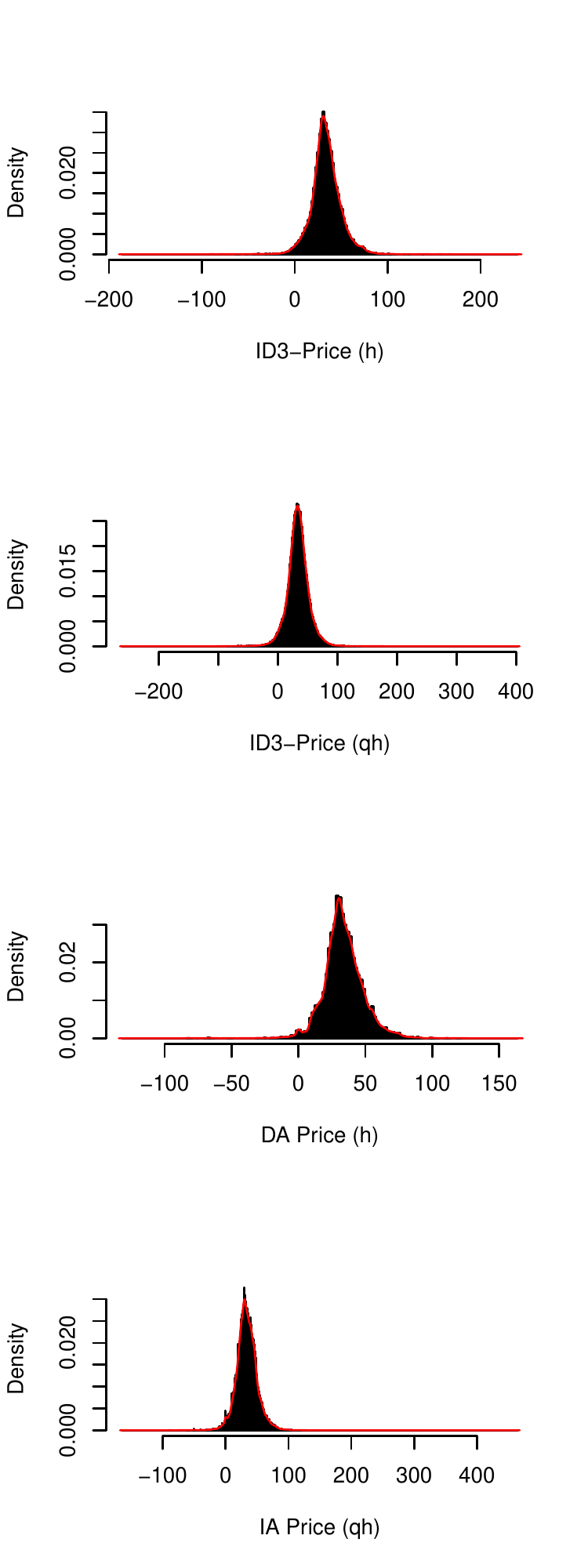}
	}
	\caption{ID$_3$-Price of hourly products (top), ID$_3$-Price of quarter-hourly products (second top), Day-Ahead Price (third top) and Intraday Auction Price (bottom) over time. Red lines indicate the initial rolling window period. The histograms with kernel density estimates are presented on the right.}
	\label{fig:ID3overtime}
\end{figure}

	Figure \ref{fig:ID3overtime} presents the ID$_3$-Price, the Day-Ahead Price, and the Intraday Auction Price of hourly and quarter-hourly products over time. Let us note that the variance of prices is substantial and the outliers occurrence is quite often. Still, the quarter-hourly products tend to exhibit higher variance of prices and outliers appearance frequency than the hourly products. This behaviour is even better visible in Figure~\ref{fig:ID3weekly}. We present there the weekly sample mean of: the ID$_3$-Price for both considered product types, the Day-Ahead Price, and the Intraday Auction Price. Let us note that Figure \ref{fig:ID3weeklyh} is smoother than Figure \ref{fig:ID3weeklyqh}.	The latter one exhibits a so-called jigsaw pattern, which is broadly explained by \cite{Kiesel2017}. Based on these plots, it is obvious that the ID$_3$ price for hourly products is less volatile. It is worth mentioning that, weekly, the intraday prices perform similarly to the day-ahead prices. The latter ones are well-described in the literature, see e.g. \cite{Ziel2015}. In Figure \ref{fig:ID3weekly}, we observe that the prices on the average behave almost identically from Tuesday to Friday, regardless of the product type. The prices on Monday are very similar to those between Tuesday and Friday, despite the night hours. Analogously to the day-ahead prices, we observe a weekend effect, which means that the prices are lower on Saturday, and on Sunday they are even lower than on Saturday.
	
	\begin{figure*}[b!]
		\centering
		\subfloat[]{
			\includegraphics[width = 0.475\linewidth]{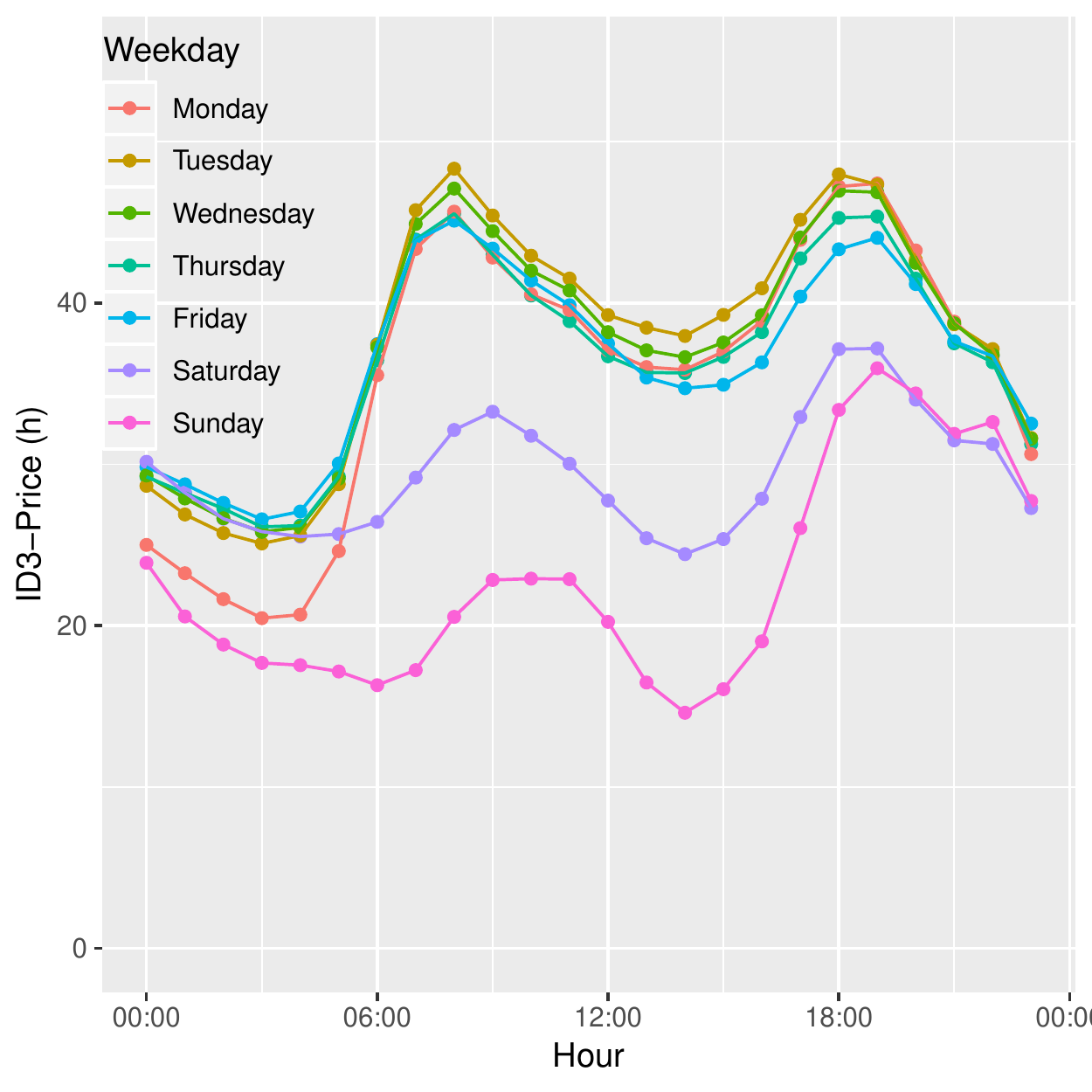}
			\label{fig:ID3weeklyh}
		}
		~ 
		\subfloat[]{
			\includegraphics[width = 0.475\linewidth]{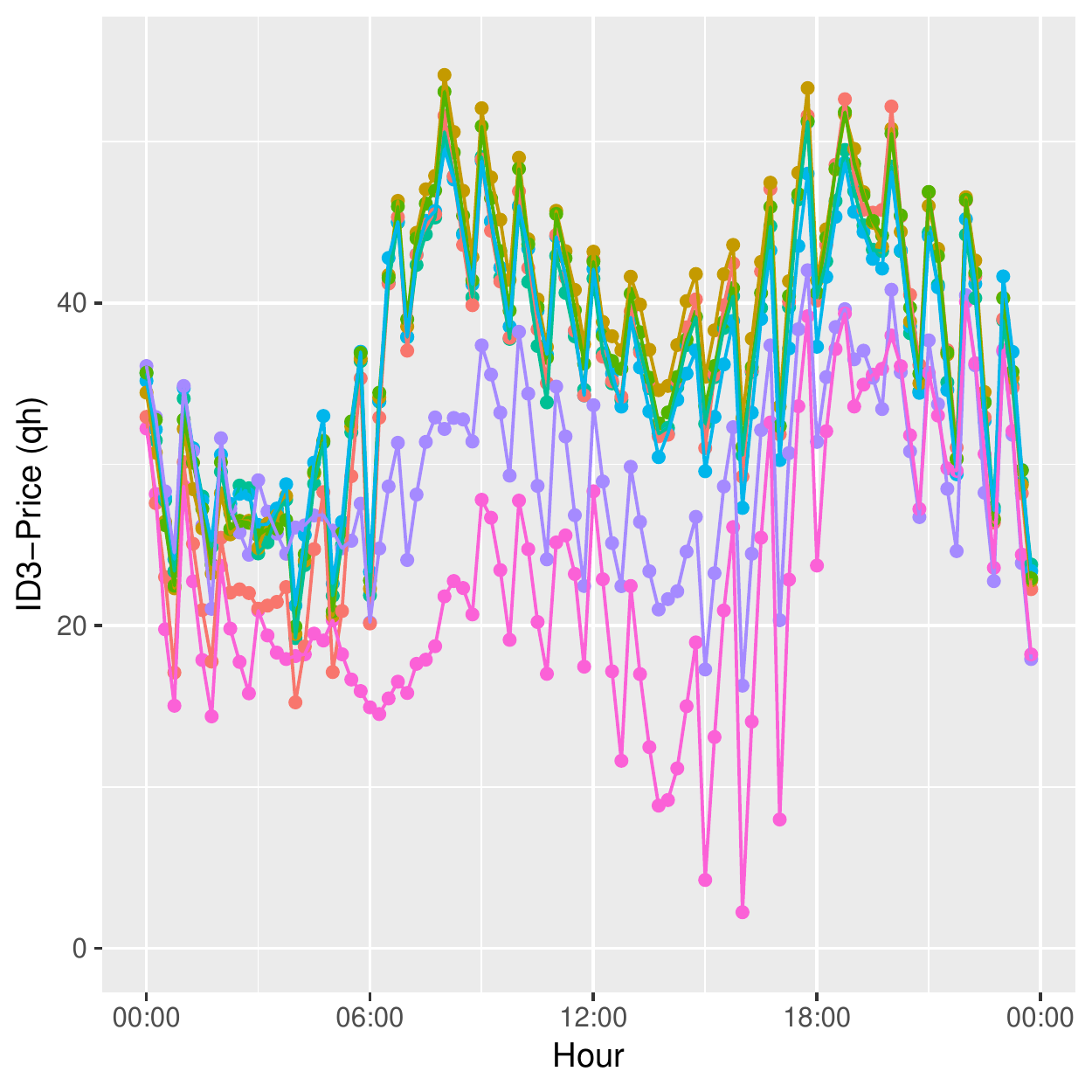}
			\label{fig:ID3weeklyqh}
		}
		\hspace{0mm}
		\subfloat[]{
			\includegraphics[width = 0.475\linewidth]{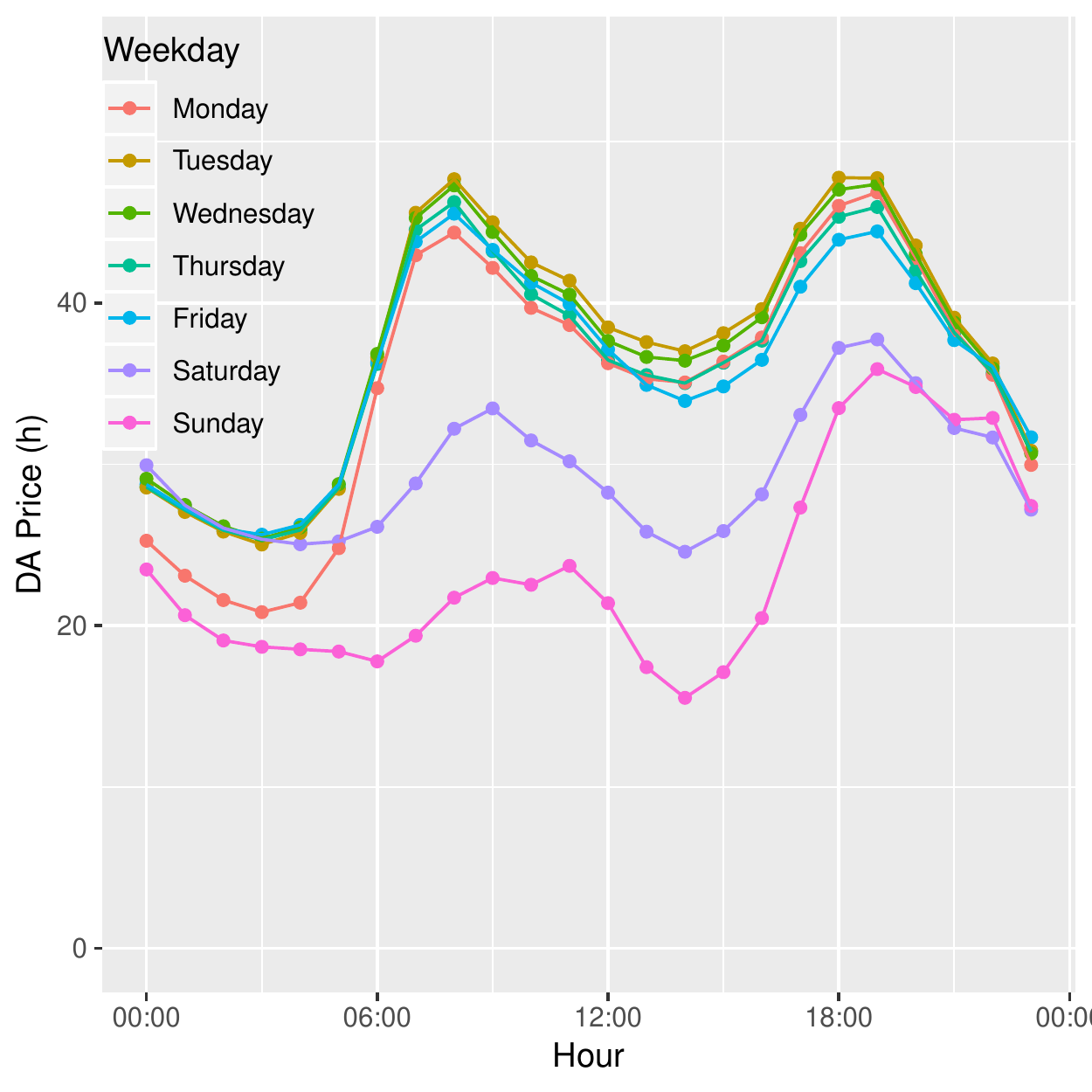}
			\label{fig:DAweeklyh}
		}
		\subfloat[]{
			\includegraphics[width = 0.475\linewidth]{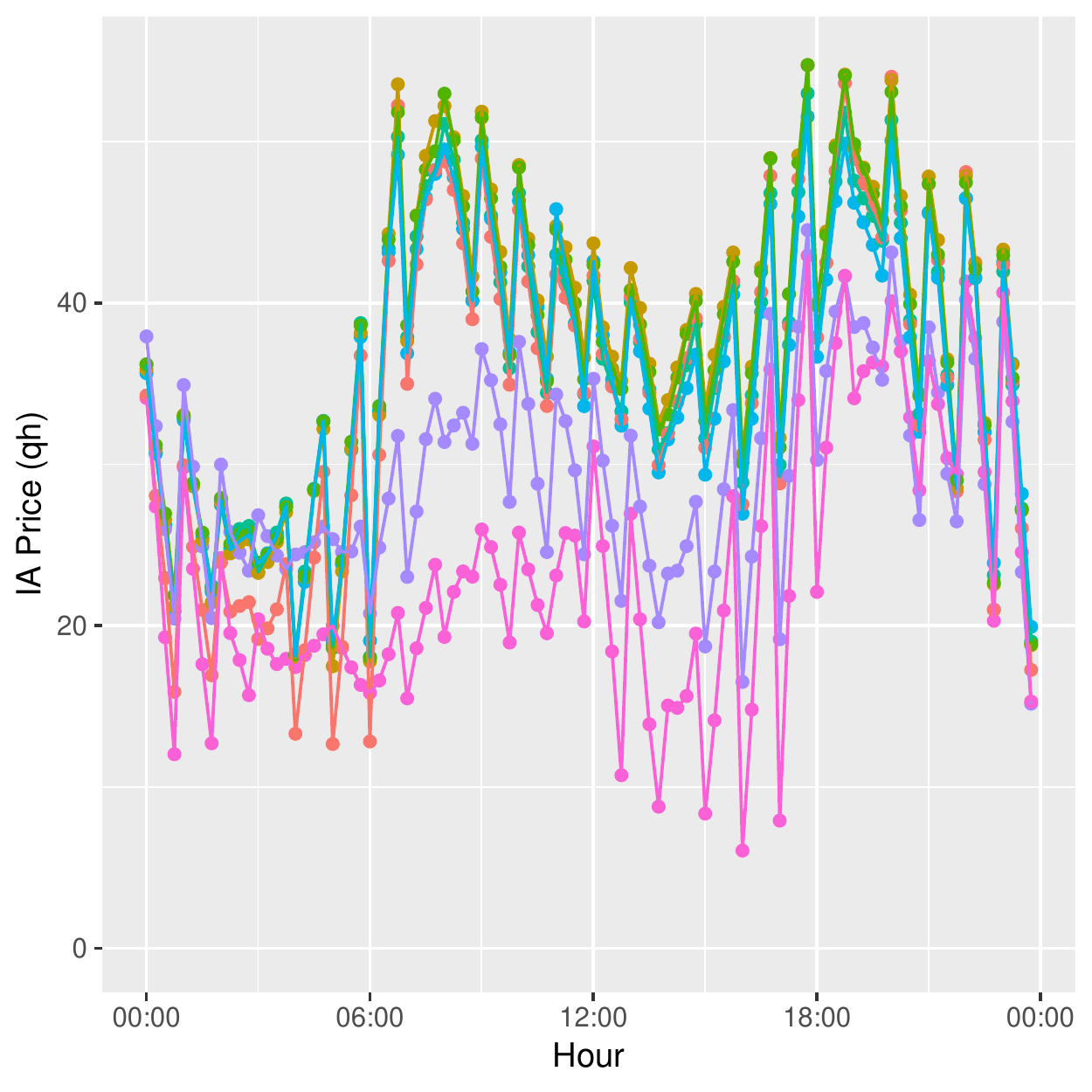}
			\label{fig:IAweeklyqh}
		}
		\caption{Weekly sample mean of: (a) ID$_3$-Price for hourly products, (b) ID$_3$-Price for quarter-hourly products, (c) Day-Ahead Price and (d) Intraday Auction Price.}
		\label{fig:ID3weekly}
	\end{figure*}

	In order to gain meaningful insights regarding the stationarity of the ID$_3$ prices, we perform three tests. The first one is the Dickey-Fuller test (\citealp{dickey1979distribution}), the second is the Augmented Dickey-Fuller test (\citealp{said1984testing}) and the third test is the Phillips-Perron (\citealp{phillips1988testing}). All of them evaluate the null hypothesis that a unit root is present in an autoregressive model against the alternative that the considered data is stationary or trend-stationary. The Dickey-Fuller test is the basic one, while the Augmented Dickey-Fuller and Phillips-Perron tests are extensions to the first one. The tests are applied to the ID$_3$ prices of every product separately. This means that we run each test 120 times. Resulting p-values are smaller than 0.01 for every test and every ID$_3$ series. Due to this, we reject the null hypothesis that there is a unit root in any of the ID$_3$ price series. 
	
	\section{Model description}
	
	\subsection{Data transformation}
	The ID$_3$ index represents high volatility and exhibits price spikes, which can be seen in Figure \ref{fig:ID3overtime}. This may lead to biased model estimation and inaccurate forecasts. \cite{Uniejewski2018} have shown that the usage of the variance stabilizing methods results in higher quality forecasts. Thus, before the model estimation, we apply a median normalization and an asinh transformation to stabilize the variance. These are not new to the electricity price forecasting and are used in many research papers.
	
	The median normalization of price $P_{d,s}$ is given by the formula 
	\begin{equation}
	p_{d,s} = \frac{1}{\text{MAD}(\mathbf{P}_{s})/z_{0.75}}(P_{d,s} - \text{Med}(\mathbf{P}_{s})),
	\label{eq:mediannorm}
	\end{equation}
	where $\text{Med}(\mathbf{P}_{s})$ is the median of $P_{d,s}$ in the $D =365$-day calibration sample, $\text{MAD}(\mathbf{P}_{s})$ is the median absolute deviation around the sample median in the calibration sample and $z_{0.75}$ is the $75\%$ quantile of the standard normal distribution. Here we adjust the MAD dividing it by $z_{0.75}$ to ensure its asymptotical consistency to the standard deviation. We use the median normalization because of its robustness, which is useful when dealing with heavy tailed data.
	
	The area hyperbolic sine (asinh) transformation is given by the formula
	\begin{equation}
    Y_{d,s} = \text{asinh}(p_{d,s}) = \log \left(p_{d,s} + \sqrt{p_{d,s}^2+1}\right),
	\label{eq:asinh}
	\end{equation}
	where $p_t$ is the normalized price. If we worked on a market with strictly positive electricity prices, a logarithmic transformation would be sufficient. Since the German electricity market allows for negative prices, this is no longer an option. The asinh transformation can handle all real values and has a logarithmic tail behaviour as the $\log$, but for both positive and negative values. Thus, it solves all issues concerning the heavy tails in the data.
	 In the paper of \cite{Uniejewski2018} it is shown that, considering the quality of the forecasts, it performs very well among the other variance stabilization methods. The only problem with this transformation is that it is non-linear and the backward transformation is not that obvious as usual. To be specific, as \cite{Uniejewski2018} mentioned, $\sinh (\mathbb{E}(Y_{d,s})) \neq \mathbb{E} (\sinh(Y_{d,s})) = \mathbb{E} (p_{d,s})$. In the literature (e.g. \citealp{uniejewski2018understanding}; \citealp{ziel2018day}) this problem is often either ignored or it is assumed, that the values of $Y_{d,s}$ are close to 0, where the asinh is approximately linear. In this paper we take two approaches to this problem: in the first one, we do it the mathematically incorrect way, i.e. we assume that
	\begin{equation}
		\widehat{\mathbb{E}(P_{d,s})} = \widehat{\mathbb{E} (p_{d,s})} \cdot \widehat{b} + \widehat{a}  \approx  \sinh (\widehat{\mathbb{E}(Y_{d,s})}) \cdot \widehat{b} + \widehat{a}, 
		\label{eq:incorrectway}
	\end{equation}
	where $\widehat{\mathbb{E}(Y_{d,s})}$ is the corresponding forecast of $Y_{d,s}$, $\widehat{b}$ and $\widehat{a}$ are the adjusted sample MAD and the sample median from the equation (\ref{eq:mediannorm}), respectively. In the second approach, we do it the correct way
	\begin{equation}
		\widehat{\mathbb{E}(P_{d,s})} = \widehat{\mathbb{E} (p_{d,s})} \cdot  \widehat{b} + \widehat{a} =  \int \sinh(x) d \widehat{F}_{Y_{d,s}} \cdot  \widehat{b} + \widehat{a},
		\label{eq:correctway}
	\end{equation}
	where $\widehat{F}_{Y_{d,s}}$ is an empirical cumulative distribution function of $Y_{d,s}$. We estimate the cumulative distribution function by $\widehat{F}_{Y_{d,s}}(t)  =  \frac{1}{D} \sum_{j=-D+d}^{d-1} \mathds{1}_{\widehat{\mathbb{E}(Y_{d,s})} + \widehat{\varepsilon}_{j,s} \le t}$, where  $\widehat{\varepsilon}_{j,s}$ are the in-sample residuals. Therefore, the correct backward transformation (\ref{eq:correctway}) comes to
	\begin{equation}
		\widehat{\mathbb{E}(P_{d,s})} = \int \sinh(x) d \widehat{F}_{Y_{d,s}} \cdot  \widehat{b} + \widehat{a} = \frac{1}{D} \sum_{j = -D + d}^{d-1} \sinh\left(\widehat{\mathbb{E}(Y_{d,s})} + \widehat{\varepsilon}_{j,s}\right)  \cdot  \widehat{b} + \widehat{a}.
	\end{equation}
	
	\subsection{Estimation techniques}\label{sec:estimationtechnique}
	In the following paper, we consider only linear models, thus we utilize 3 estimation methods: the ordinary least squares (OLS), the least absolute shrinkage and selection operator (lasso) and the elastic net, which is a linear combination of the lasso and ridge regressions. We use the OLS estimation only with very simple models, while the lasso and elastic net methods with more complex models that contain a very big number of regressors. The OLS is a standard, well-known estimation method of linear models, therefore we focus our attention on the latter ones.
	
	The lasso method, which was introduced by \cite{Tibshirani1996}, is a regularized model estimation technique. It is often used in the literature in the sake of variable selection, e.g. by \cite{Ziel2016}, \cite{Uniejewski2018efficient} or \cite{uniejewski2018understanding}. Thanks to the lasso's shrinkage property, we can easily handle models with many parameters. Let us assume that we possess a model in an OLS representation as following
	\begin{equation}
	Y_{d,s} = \boldsymbol{X}^{'}_{d,s} \boldsymbol{\beta}_{s}  + \varepsilon_{d,s},
	\label{eq:modelrewritten}
	\end{equation}
	where  $\boldsymbol{X}^{'}_{d,s}$ is a vector of the input regressors and $\boldsymbol{\beta}_{s}$ is a vector of the corresponding coefficients. Let us note that we perform a median normalization and an asinh transformation on all input regressors $\boldsymbol{X}^{'}_{d,s}$ and for all regressors we calculate the MAD around the sample median in the calibration sample excluding those observations that are equal to the corresponding median. This operation does not change much for the continuous variables, but helps to preserve the dummy variables.  Since the lasso technique requires the regressors to be additionally standardized, i.e. with 0 mean and the variance equal to 1, we introduce it with
	\begin{equation}
	Y_{d,s} = \widetilde{\boldsymbol{X}}^{'}_{d,s} \widetilde{\boldsymbol{\beta}}_{s}  + \widetilde{\varepsilon}_{d,s}.
	\label{eq:modellassostd}
	\end{equation}
	We perform this scaling using the corresponding sample mean and standard deviation. Having $\widetilde{\boldsymbol{\beta}}_{s}$, we can easily calculate $\boldsymbol{\beta}_{s}$ of (\ref{eq:modelrewritten}) by rescaling. The lasso estimation method is a penalized regression approach that uses an $L_1$ penalty on the number of parameters. Let D be the number of observable days. Then the lasso estimator $\widehat{\widetilde{\beta}}_s^{\text{lasso}}$ is given by
	\begin{equation}
	\widehat{\widetilde{\beta}}_s^{\text{lasso}} = \arg\min_{\boldsymbol{\beta}} \left\{ \sum_{d = 1}^{D}(Y_{d,s} - \widetilde{\boldsymbol{X}}^{'}_{d,s} \boldsymbol{\beta})^2 + \lambda_s \sum_{i =1}^{p} |\beta_i| \right \},
	\label{eq:lasso}
	\end{equation}
	where $\lambda_s$ is a tunable parameter and $p$ is a number of regressors.
	
	The elastic net method, which was introduced by \cite{Zou05regularizationand}, can be considered as a correction of the lasso method that overcomes some of the latter's limitations. The difference to the lasso estimator is that the elastic net linearly combines the $L_1$ and $L_2$ penalties of the lasso and ridge methods. The elastic net estimator $\widehat{\widetilde{\beta}}_s^{\text{elnet}}$ is given by
	\begin{equation}
	\widehat{\widetilde{\beta}}_s^{\text{elnet}} = \arg\min_{\boldsymbol{\beta}} \left\{ \sum_{d = 1}^{D}(Y_{d,s} - \widetilde{\boldsymbol{X}}^{'}_{d,s} \boldsymbol{\beta})^2 +  \alpha \lambda_s \sum_{i =1}^{p} |\beta_i| + \frac{1-\alpha}{2} \lambda_s \sum_{i = 1 }^{p} \beta_i^2 \right\},
	\end{equation}
	where $\alpha$ is an elastic net mixing parameter. Let us note that if we set $\alpha = 1$, then the estimator $\widehat{\widetilde{\beta}}_s^{\text{elnet}}$ becomes in fact the lasso one. Subsequently, we fix the $\alpha$ parameter to $0.5$, so the elastic net method uses the lasso and ridge penalties evenly.
	
	A crucial parameter for the lasso and elastic net estimators is the $\lambda_s$. The larger the value is, the more variables are included in the model, so a proper tuning exercise of this parameter is essential. In the literature appear many approaches, but we utilize the one described by \cite{Ziel2016}. That is to say, since the estimation algorithm is very fast, we utilize an exponential grid $\Lambda = \{ \lambda_i = 2^i|i \in \mathcal{G} \}$, where $\mathcal{G}$ is an equidistant grid from -10 to 4 of length 100 and for each out-of-sample iteration we compute the model for all $\lambda_i$. In each iteration, we choose the tuning parameter $\lambda_s \in \Lambda$ based on the minimization of the Bayesian information criterion (BIC). As the BIC is regarded as a conservative information criterion, it is suitable for the high-dimensional regression setting that we are considering. For the implementation of the lasso and elastic net methods we use the \verb|R| package \verb|glmnet| developed by \cite{friedman2010glmnet}.
	
	\subsection{Full information models}
	As we mentioned before, our goal is to build a model that uses all the information that is available on the market at the time of forecasting, which is in our case 3 hours and 15 minutes before the delivery. To be specific, we want to forecast the value of the ID$_3$ just before its time interval. For this purpose, we construct for each product a linear model
	\begin{equation}
	\label{eq:ourmodel}
	\begin{aligned}
	\text{ID}_3^{d,s}  = &  \sum_{j \in \{-1,0,1\}} \sum_{k = 1}^{24} \sum_{x \in \mathcal{I}_{H}(j,k)} \beta_{j,k,x}^{(1)} \: \prescript{H}{x}{\text{ID}}_{0.25}^{d-j, k}  + \sum_{j = 2}^{14} \sum_{k=1}^{24} \beta_{j,k}^{(2)} \: \prescript{H}{}{\text{ID}}_{3}^{d-j, k} \\
	& + \sum_{j \in \{-1,0,1\}} \sum_{k = 1}^{96} \sum_{x \in \mathcal{I}_{QH}(j,k)} \beta_{j,k,x}^{(3)} \: { } \prescript{QH}{x}{\text{ID}}_{0.25}^{d-j, k}  + \sum_{j = 2}^{14} \sum_{k=1}^{96} \beta_{j,k}^{(4)} \: \prescript{QH}{}{\text{ID}}_{3}^{d-j, k} \\
	& + \sum_{j = -1}^{14} \sum_{k = 1}^{24} \beta_{j,k}^{(5)} \text{DA}^{d-j, k} + \sum_{j = -1}^{14} \sum_{k = 1}^{96} \beta_{j,k}^{(6)} \text{IA}^{d-j, k} + \sum_{j=1}^{7} \beta_j^{(7)}\text{DoW}_j^d   \\ 
	&+ \sum_{j = 0}^{14} \sum_{k = 1}^{96} \beta_{j,k}^{(8)} \text{BV}^{d-j, k} + \varepsilon^{d,s},
	\end{aligned}
	\end{equation}
	where $\mathcal{I}_{i}(j,k) = \{ 0, 0.25, 0.5, \dots, b(d-j,k) - c_i(d-j) - 0.25, b(d-j,k) - c_i(d-j) \}$ for $i \in \{H,QH\}$ and $c_i(d)$ stands for the beginning of trading of a product type $i$ on day $d$. Model (\ref{eq:ourmodel}) consists of 8 main components, excluding an error term. The first one is a set of all $_{x}\text{ID}_{0.25}$ values between the beginning of trading and the time of forecasting. That is to say, we split this time frame, similarly as is presented in Figure \ref{fig:xIDyproperty}, but we use denser, 15 minutes grid. We perform this for all hourly products of the previous day, the current day, and of the next day, if the trading of the following day products has already begun, i.e. after 15:00 the current day.

	\begin{sidewaysfigure}
		\includegraphics[width=1.02\linewidth, height=.4\linewidth]{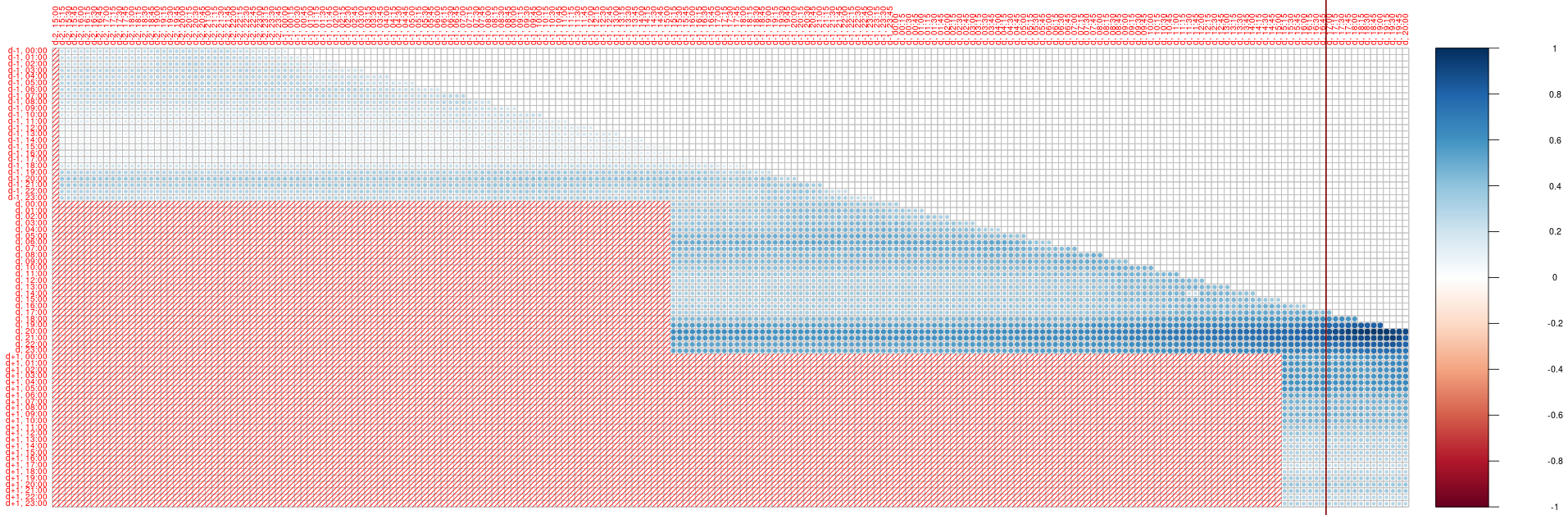}
		\caption{Correlation between the ID$_3$ of an hourly product with the delivery at 20:00 and $_{x}\text{ID}_{0.25}$ of hourly products with delivery times specified on the Y-axis. Time points $t$ are depicted on the X-axis, and they determine the value of $x$ lag. Each dot represents a single correlation at a corresponding time point.}
		\label{fig:ID3corplot}
	\end{sidewaysfigure}
	
	An illustration of this set is shown in Figure \ref{fig:ID3corplot}. This figure presents a correlation between the ID$_3$ of an hourly product with the delivery at 20:00 and all available values of the above described $_{x}\text{ID}_{0.25}$. In red, we crossed out the time points, for which the trading has not yet begun. White triangle in the top-right area of the plot indicates the time points, for which the trading is already finished. In this example, we would forecast at 16:45, that is 3 hours and 15 minutes before the delivery. The time of forecasting is highlighted by a dark-red vertical line. Everything that is on the left of this line, is the information that we get from the first component of our model. Let us note that in this example trading for the following day has begun. Thus, we can use in the model the first volume-weighted prices of the following day. Let us also notice that the correlations are quite high, especially if we look at the most recent value of the product with the delivery at 20:00, which is $_{3.25}\text{ID}_{0.25}$.
	
	The second component of the model are the values of ID$_3$ of all hourly products for all days between two weeks and two days before the current day. These values can supply our model with some weekly or bi-weekly information. The third and fourth components of the model are quarter-hourly equivalents of the first and second components, respectively. The fifth component is a set of the Day-Ahead Auction results. It consists of the day-ahead prices of all hourly products for the last two weeks, the current day and the following day if they are already announced. The sixth component of the model is an Intraday Auction equivalent of the fifth one and the seventh component consists of the day of the week dummies. 
	
	The eighth component is a set of quarter-hourly balancing volumes. Market participants often look at the balancing market when trading in the intraday market, so some information from this market may be useful for our model. Thus, we include the quarter-hourly balancing volumes from two weeks to 30 minutes before the time of forecasting. As a balancing volume, we use the sum of imbalances of all German Transmission System Operators. This data is published every quarter-hour, 15 minutes after the end of the delivery, e.g. the imbalance volume for the delivery between 16:15 and 16:30 is announced at 16:45. Therefore, in the model, the most recent balancing volume is the one with the start of the delivery 30 minutes before the time of forecasting. Let us note that the reason for using balancing volumes instead of balancing prices is the fact that the latter ones are published with a delay of two months (\citealp{Viehmann2017}). It is worth to mention that in the Intraday Continuous market, there are time-to-maturity effects which affect heavily the price formation. However, we are only interested forecasting the ID$_3$, thus the time-to-maturity is always the same in this study, i.e. 3 hours and 15 minutes. For that reason, we cannot use it as a regressor.
	
	Each of the aforementioned components supplies our model with a very big amount of regressors, especially the $_{x}\text{ID}_{0.25}$ components. Due to this, the model contains at least 16580 explanatory variables (in the case of products with delivery at 00:00) and at most~26259 (in the case of a quarter-hourly product with delivery at 23:45). Thus, in the purpose of the model estimation, we utilize the lasso and elastic net methods. We expect the most recent price of the corresponding product $s$, i.e. ${}_{3.25}^{ }\text{ID}_{0.25}^{d,s}$ to be the most informative for the model, so we want to favour this regressor. Therefore, we perform the model estimation in three ways. The first way, we do not penalize the model for the size of the corresponding coefficient $\widehat{\beta}^{(i)}_{0,s,3.25}$, where $i \in \{1,3\}$, depending on the product type. The second way, we fix the corresponding coefficient to 1, i.e. $\widehat{\beta}^{(i)}_{0,s,3.25} =1$. The third way, we do not interfere in the coefficient estimation. 	To summarize, we estimate our model in 3 ways, using 2 methods and 2 approaches to the backward transformation. In total, this gives us 12 versions of this model. We abbreviate them with \textbf{FI.X.Y.Z}, where \textbf{FI} stands for full information, \textbf{X} $\in \{\text{lasso},\text{elnet}\}$ indicate the estimation method, \textbf{Y} $\in \{\text{notpen},\text{penal}, \text{fixed} \}$ describe the way of the coefficient estimation and \textbf{Z} $\in \{\text{IC},\text{C}\}$ indicate the approach to the backward transformation of the asinh. For instance, \textbf{FI.lasso.penal.C} stands for the full information model estimated using the lasso with a standard penalty and correctly back-transformed.
	
	\subsection{Benchmark models}
	The first benchmark model that we utilize is the corresponding day-ahead price, i.e. 
	\begin{equation}
		\widehat{\text{ID}}_3^{d,s} = \text{DA}^{d,s} \text{ or } \widehat{\text{ID}}_3^{d,s} = \text{IA}^{d,s},
	\end{equation}
	depending on the product type. We denote it as \textbf{Naive.DA}. This model is based on the assumption that both the intraday prices and the day-ahead prices are determined by the same factors. It has been already used as a benchmark model in the forecasting of intraday prices by \cite{uniejewski2018understanding}. The second benchmark model is a new approach, but at the same time, a very intuitive one. To be specific, we define it by the aforementioned most recent 15-minutes price of the corresponding product, i.e.
	\begin{equation}
		\widehat{\text{ID}}_3^{d,s} = {}_{3.25}^{ }\text{ID}_{0.25}^{d,s}.
		\label{eq:naiveMR}
	\end{equation}
	Based on Figure \ref{fig:ID3corplot} we expect it to be a good benchmark model, and we denote it as \textbf{Naive.MR1}. The third benchmark model is a little modification to the second one. We take into account 2.5 hours of the most recent transactions instead of 15 minutes. We denote it by \textbf{Naive.MR2} and the model formula is given by 
	\begin{equation}
		\widehat{\text{ID}}_3^{d,s} = {}_{3.25}^{ }\text{ID}_{2.5}^{d,s}.
	\end{equation}
	
	The fourth benchmark model is based on the \textbf{ARX}  model by \cite{uniejewski2018understanding}, which, on the other hand, is inspired by the \textbf{expert$_{\textbf{DoW,nl}}$} model of \cite{ziel2018day} and is given by the formula
	\begin{equation}
	\begin{aligned}
		\text{ID}_3^{d,s} = &  \beta_1 \text{ID}_3^{d-\mathds{1}_{(s \le 4)},(s-4)\bmod S +1} + \beta_2 \text{ID}_3^{d-1,s} + \beta_3 \text{ID}_3^{d-2,s} + \beta_4 \text{ID}_3^{d-7,s} + \beta_5 {}_{3.25}^{ }\text{ID}_{0.25}^{d,s}\\ & + \beta_6 \text{DA}^{d,s} + \sum_{j = 1}^{7} \beta_{6+j}\text{DoW}_j^d + \varepsilon^{d,s},
	\end{aligned}
	\label{eq:ARX}
	\end{equation}
	where $S$ is a number of products of a given type. Model (\ref{eq:ARX}) holds for hourly products, in the case of quarter-hourly products we swap $\text{DA}^{d,s}$ with $\text{IA}^{d,s}$. Let us note that $\text{ID}_3^{d-\mathds{1}_{(s \le 4)},(s-4)\bmod S +1}$ is the most recent observed $\text{ID}_3$ value at the time of forecasting. $\text{ID}_3^{d-1,s}$, $\text{ID}_3^{d-2,s}$ and $\text{ID}_3^{d-7,s}$ account for the autoregressive effects of the previous days, i.e. the same product yesterday, two days ago and a week ago. The difference between our modification of the \textbf{ARX} model and the one used by \cite{uniejewski2018understanding} is the most recent value. We consider three versions of the \textbf{ARX} model. In order to understand the importance of modelling using transformed data, we apply this model to non-transformed prices and asinh-transformed prices with an incorrect and a correct backward transformation. We denote them by \textbf{ARX.non}, \textbf{ARX.asinhIC} and \textbf{ARX.asinhC}, respectively.
	
	The last, but not the least benchmark is a slightly modified lasso-estimated model of \cite{uniejewski2018understanding}. The mentioned modification is an addition of the respective quarter-hourly products to its formula. The research of \cite{uniejewski2018understanding} was done only for the hourly products, therefore the quarter-hourly ones are not considered there. Since the lasso technique can easily handle a big amount of regressors, instead of constructing two separate models for different product types, we consider one that contains more information. The formula is as follows
	\begin{equation}
	\begin{aligned}
	\text{ID}_3^{d,s} = & \sum_{i, j = d-7, s}^{d, s-4}\beta_{i,j}^{(1)}\prescript{H}{}{\text{ID}}_3^{i,j} + \sum_{i, j = d-7, s}^{d+1, 24}\beta_{i,j}^{(2)}\text{DA}^{i,j} + \sum_{j=1}^{7} \beta_j^{(3)}\text{DoW}_j^d\\ 
	& + \sum_{i, j = d-7, s}^{d, s-13}\beta_{i,j}^{(4)}\prescript{QH}{}{\text{ID}}_3^{i,j}  + \sum_{i, j = d-7, s}^{d+1, 96}\beta_{i,j}^{(5)}\text{IA}^{i,j} + \varepsilon^{d,s}.
	\end{aligned}
	\end{equation}
	The original notation was adjusted, so it fits our convention. Let us note that this model uses the values of the Day-Ahead and Intraday Auction prices if they are already published, similarly as we do in the full information model. Due to more regressors present in the model, to avoid overestimation, we do not use their best performing model, i.e. \textbf{LASSO($\mathbf{\lambda_6}$)}, where $\mathbf{\lambda_6} = 10^{-\frac{13}{6}}$. Instead, we tune the $\lambda_s$ parameter as described in Section \ref{sec:estimationtechnique}, but using the same $\Lambda$ grid as in \cite{uniejewski2018understanding}, i.e. $\Lambda = \{10^{-\frac{19-i}{6}}| i \in \{1,\dots, 10\}\}$. We denote this model as \textbf{Lasso.AR}.
	\section{Forecasting Study and Evaluation}
		
	Since we deal with time-series data, we utilize a rolling window scheme. This approach is taken in the majority of electricity price forecasting studies. For a meaningful forecast evaluation, we consider a $D = 365$-day window size. The initial data range is highlighted with red lines in Figure \ref{fig:ID3overtime}. In our research, we apply only multivariate models, which in our case results in 120 models (24 for hourly and 96 for quarter-hourly products). The aim of this paper is a very short term forecasting, i.e. we want to forecast the ID$_3$-Price 3 hours and 15 minutes before the delivery of the corresponding product. This means that for each product we fit a model to the data from 01.01.2015 to 31.12.2015 and we forecast the ID$_3$-Price for the next day. Then we move our window forward by one day and repeat the exercise until the end of the out-of-sample set. Our out-of-sample data span the date range from 01.01.2016 to 29.09.2018, which gives us $N = 1003$ days of meaningful forecasts.
	
	We utilize the mean absolute error (MAE) and the root mean squared error (RMSE) as the forecasting measures. The RMSE is the optimal least square problems measure, yet it is sensitive to outliers. Thus, we apply also the MAE, which is more robust, but it is designed to measure the performance of forecasting the median, while we forecast the mean. The MAE and the RMSE are given by
	
	\begin{equation}
		\text{MAE} = \frac{1}{S\cdot N}\sum_{i = 1}^{N} \sum_{s=1}^{S} \left|\text{ID}_3^{D+i,s} - \widehat{\text{ID}}_3^{D+i,s} \right|,
	\end{equation}
	
	\begin{equation}
		\text{RMSE} = \sqrt{\frac{1}{S \cdot N} \sum_{i = 1}^{N} \sum_{s=1}^{S} \left|\text{ID}_3^{D+i,s} - \widehat{\text{ID}}_3^{D+i,s} \right|^2}.
	\end{equation}
	Let us note that we evaluate the forecasts separately for hourly and quarter-hourly products. Moreover, in purpose of better understanding models' performance over the day, we calculate also the MAE and the RMSE for each product. We denote them by $\text{MAE}_s$ and $\text{RMSE}_s$. They are defined as follows
	\begin{equation}
	\text{MAE}_s = \frac{1}{N}\sum_{i = 1}^{N} \left|\text{ID}_3^{D+i,s} - \widehat{\text{ID}}_3^{D+i,s} \right|,
	\end{equation}
	
	\begin{equation}
	\text{RMSE}_s = \sqrt{\frac{1}{N} \sum_{i = 1}^{N} \left|\text{ID}_3^{D+i,s} - \widehat{\text{ID}}_3^{D+i,s} \right|^2}.
	\end{equation}
	
	The RMSE and the MAE are widely used in the EPF literature to provide a ranking of models, e.g. by \cite{Ziel2016} or by \cite{uniejewski2018understanding}. However, these cannot draw statistically significant conclusions on the outperformance of the forecasts of the considered models. Therefore, we also calculate the \cite{diebold1995comparing} test, which tests forecasts of model $A$ against forecasts of model $B$. In the following paper, we compute the multivariate version of the DM test as in \cite{ziel2018day}. This is in contrast to the majority of the EPF literature, where the DM test is performed separately for each product, see \cite{weron2014electricity}. The multivariate DM test results in only one statistic for each model that is computed based on the $S$-dimensional vector of errors for each day, where $S \in \{24,96\}$ for hourly and quarter-hourly products, respectively. Therefore, denote $\widehat{\varepsilon}_{A,d} = [\widehat{\varepsilon}_{A,d,1}, \widehat{\varepsilon}_{A,d,2}, \dots, \widehat{\varepsilon}_{A,d,S}]'$ and $\widehat{\varepsilon}_{B,d} = [\widehat{\varepsilon}_{B,d,1}, \widehat{\varepsilon}_{B,d,2}, \dots, \widehat{\varepsilon}_{B,d,S}]'$ the vectors of the out-of-sample errors for day d of the models $A$ and $B$, respectively. The multivariate loss differential series 
	\begin{equation}
		\Delta_{A,B,d} = ||\widehat{\varepsilon}_{A,d}||_i  - ||\widehat{\varepsilon}_{B,d}||_i  
	\end{equation}
	defines the difference of errors in $||\cdot||_i$ norm, i.e. $||\widehat{\varepsilon}_{A,d}||_i = \left(\sum_{s=1}^{S} |\widehat{\varepsilon}_{A,d,s}|^i \right)^{1/i} $, where $i \in \{1,2\}$. For each model pair, we compute the $p$-value of two one-sided DM tests. The first one is with the null hypothesis $\mathcal{H}_0: \mathbb{E}(\Delta_{A,B,d}) \le 0$, i.e. the outperformance of the forecasts of model B by the forecasts of model A. The second test is with the reverse null hypothesis $\mathcal{H}_0^R: \mathbb{E}(\Delta_{A,B,d}) \ge 0$, i.e. the outperformance of the forecasts of model A by those of model B. Let us note that these tests are complementary, and we perform them using two norms: $||\cdot||_1$ and $||\cdot||_2$. Naturally, we assume that the loss differential series is covariance stationary.

	For a better understanding of the full information model, we perform a coefficient analysis of the best performing one. We can easily study the variable selection thanks to the lasso and elastic net shrinkage and regularization properties. As the measure of the importance of a standardized parameter $\widetilde{\beta}_{s,i}$ we consider the fraction of the absolute standardized parameter of a model to the sum of all absolute standardized parameters as in \cite{Ziel2016}
	\begin{equation}
		\iota_{s,i} = \frac{|\widetilde{\beta}_{s,i}|}{\sum_{j=1}^{p}|\widetilde{\beta}_{s,j}|}.
	\end{equation}
	Naturally, $0 \le \iota_{s,i} \le 1$ and $\sum_{i=1}^{p}\iota_{s,i} = 1$. The larger $\iota_{s,i}$, the larger the relative impact of the corresponding parameter to the ID$_3$-Price of product $s$. We estimate the values of $\iota_{s,i}$ by applying the plug-in principle to the estimators $\widehat{\widetilde{\beta}}_s$ and compute the corresponding sample mean across the rolling window. 
	
		\section{Results and Discussion}
	\subsection{Forecast evaluation}
	
	\begin{table}[ht!]
		\centering
		\begingroup\small
		\begin{tabular}{rrrrr}
			\hline
			& $\text{MAE}_H$ & $\text{RMSE}_H$ & $\text{MAE}_{QH}$ & $\text{RMSE}_{QH}$ \\ 
			\hline
			\textbf{Naive.DA} & \cellcolor[rgb]{1,0.621,0.5} {5.0042 \scriptsize{(.04)}} & \cellcolor[rgb]{1,0.621,0.5} {7.9653 \scriptsize{(.146)}} & \cellcolor[rgb]{1,0.621,0.5} {7.643 \scriptsize{(.028)}} & \cellcolor[rgb]{1,0.621,0.5} {11.701 \scriptsize{(.122)}} \\ 
			\textbf{Naive.MR1} & \cellcolor[rgb]{0.503,0.901,0.5} {\underline{3.3343} \scriptsize{(.026)}} & \cellcolor[rgb]{0.504,0.901,0.5} {\underline{5.278} \scriptsize{(.113)}} & \cellcolor[rgb]{1,0.589,0.5} {7.706 \scriptsize{(.029)}} & \cellcolor[rgb]{1,0.616,0.5} {11.714 \scriptsize{(.185)}} \\ 
			\textbf{Naive.MR2} & \cellcolor[rgb]{0.789,0.996,0.5} {3.5006 \scriptsize{(.028)}} & \cellcolor[rgb]{0.823,1,0.5} {5.5877 \scriptsize{(.12)}} & \cellcolor[rgb]{1,0.689,0.5} {7.511 \scriptsize{(.028)}} & \cellcolor[rgb]{1,0.725,0.5} {11.433 \scriptsize{(.178)}} \\ 
			\textbf{ARX.non} & \cellcolor[rgb]{0.794,0.998,0.5} {3.5033 \scriptsize{(.027)}} & \cellcolor[rgb]{0.68,0.96,0.5} {\underline{5.4421} \scriptsize{(.099)}} & \cellcolor[rgb]{1,0.964,0.5} {6.977 \scriptsize{(.026)}} & \cellcolor[rgb]{0.655,0.952,0.5} {\underline{10.529} \scriptsize{(.121)}} \\ 
			\textbf{ARX.asinhIC} & \cellcolor[rgb]{0.688,0.963,0.5} {3.4416 \scriptsize{(.027)}} & \cellcolor[rgb]{0.733,0.978,0.5} {\underline{5.4925} \scriptsize{(.116)}} & \cellcolor[rgb]{0.9,1,0.5} {6.858 \scriptsize{(.027)}} & \cellcolor[rgb]{0.855,1,0.5} {\underline{10.627} \scriptsize{(.128)}} \\ 
			\textbf{ARX.asinhC} & \cellcolor[rgb]{0.679,0.96,0.5} {3.4364 \scriptsize{(.028)}} & \cellcolor[rgb]{0.658,0.953,0.5} {\underline{5.4217} \scriptsize{(.109)}} & \cellcolor[rgb]{0.979,1,0.5} {6.896 \scriptsize{(.024)}} & \cellcolor[rgb]{0.5,0.9,0.5} {\textbf{10.462} \scriptsize{(.118)}} \\ 
			\textbf{Lasso.AR} & \cellcolor[rgb]{1,0.776,0.5} {4.4619 \scriptsize{(.037)}} & \cellcolor[rgb]{1,0.766,0.5} {7.1529 \scriptsize{(.156)}} & \cellcolor[rgb]{1,0.896,0.5} {7.109 \scriptsize{(.027)}} & \cellcolor[rgb]{1,0.847,0.5} {11.116 \scriptsize{(.131)}} \\ 
			\textbf{FI.lasso.notpen.IC} & \cellcolor[rgb]{0.571,0.924,0.5} {\underline{3.3738} \scriptsize{(.027)}} & \cellcolor[rgb]{0.655,0.952,0.5} {\underline{5.4191} \scriptsize{(.115)}} & \cellcolor[rgb]{1,0.942,0.5} {7.02 \scriptsize{(.027)}} & \cellcolor[rgb]{1,0.93,0.5} {10.902 \scriptsize{(.131)}} \\ 
			\textbf{FI.lasso.notpen.C} & \cellcolor[rgb]{0.537,0.912,0.5} {\underline{3.3538} \scriptsize{(.026)}} & \cellcolor[rgb]{0.552,0.917,0.5} {\underline{5.3229} \scriptsize{(.111)}} & \cellcolor[rgb]{0.968,1,0.5} {6.891 \scriptsize{(.026)}} & \cellcolor[rgb]{0.688,0.963,0.5} {\underline{10.543} \scriptsize{(.131)}} \\ 
			\textbf{FI.lasso.fixed.IC} & \cellcolor[rgb]{0.5,0.9,0.5} {\textbf{3.3325} \scriptsize{(.027)}} & \cellcolor[rgb]{0.5,0.9,0.5} {\textbf{5.2739} \scriptsize{(.112)}} & \cellcolor[rgb]{1,0.664,0.5} {7.56 \scriptsize{(.028)}} & \cellcolor[rgb]{1,0.691,0.5} {11.52 \scriptsize{(.179)}} \\ 
			\textbf{FI.lasso.fixed.C} & \cellcolor[rgb]{0.511,0.904,0.5} {\underline{3.3391} \scriptsize{(.027)}} & \cellcolor[rgb]{0.501,0.9,0.5} {\underline{5.2751} \scriptsize{(.111)}} & \cellcolor[rgb]{1,0.674,0.5} {7.541 \scriptsize{(.028)}} & \cellcolor[rgb]{1,0.704,0.5} {11.486 \scriptsize{(.178)}} \\ 
			\textbf{FI.lasso.penal.IC} & \cellcolor[rgb]{1,0.984,0.5} {3.7364 \scriptsize{(.033)}} & \cellcolor[rgb]{1,0.921,0.5} {6.2817 \scriptsize{(.144)}} & \cellcolor[rgb]{1,0.934,0.5} {7.035 \scriptsize{(.028)}} & \cellcolor[rgb]{1,0.795,0.5} {11.251 \scriptsize{(.132)}} \\ 
			\textbf{FI.lasso.penal.C} & \cellcolor[rgb]{0.884,1,0.5} {3.5803 \scriptsize{(.031)}} & \cellcolor[rgb]{1,0.968,0.5} {6.0132 \scriptsize{(.141)}} & \cellcolor[rgb]{0.525,0.908,0.5} {\underline{6.72} \scriptsize{(.027)}} & \cellcolor[rgb]{0.973,1,0.5} {10.703 \scriptsize{(.137)}} \\ 
			\textbf{FI.elnet.notpen.IC} & \cellcolor[rgb]{0.572,0.924,0.5} {\underline{3.3741} \scriptsize{(.027)}} & \cellcolor[rgb]{0.655,0.952,0.5} {\underline{5.4194} \scriptsize{(.113)}} & \cellcolor[rgb]{1,0.79,0.5} {7.314 \scriptsize{(.028)}} & \cellcolor[rgb]{1,0.764,0.5} {11.33 \scriptsize{(.127)}} \\ 
			\textbf{FI.elnet.notpen.C} & \cellcolor[rgb]{0.537,0.912,0.5} {\underline{3.3543} \scriptsize{(.027)}} & \cellcolor[rgb]{0.552,0.917,0.5} {\underline{5.323} \scriptsize{(.115)}} & \cellcolor[rgb]{1,0.857,0.5} {7.185 \scriptsize{(.027)}} & \cellcolor[rgb]{1,0.909,0.5} {10.957 \scriptsize{(.125)}} \\ 
			\textbf{FI.elnet.fixed.IC} & \cellcolor[rgb]{0.501,0.9,0.5} {\underline{3.3331} \scriptsize{(.026)}} & \cellcolor[rgb]{0.5,0.9,0.5} {\underline{5.2742} \scriptsize{(.111)}} & \cellcolor[rgb]{1,0.638,0.5} {7.61 \scriptsize{(.028)}} & \cellcolor[rgb]{1,0.665,0.5} {11.588 \scriptsize{(.171)}} \\ 
			\textbf{FI.elnet.fixed.C} & \cellcolor[rgb]{0.513,0.904,0.5} {\underline{3.3401} \scriptsize{(.026)}} & \cellcolor[rgb]{0.502,0.901,0.5} {\underline{5.2754} \scriptsize{(.113)}} & \cellcolor[rgb]{1,0.644,0.5} {7.599 \scriptsize{(.028)}} & \cellcolor[rgb]{1,0.674,0.5} {11.565 \scriptsize{(.177)}} \\ 
			\textbf{FI.elnet.penal.IC} & \cellcolor[rgb]{1,0.998,0.5} {3.6878 \scriptsize{(.033)}} & \cellcolor[rgb]{1,0.952,0.5} {6.1059 \scriptsize{(.141)}} & \cellcolor[rgb]{1,0.951,0.5} {7.002 \scriptsize{(.027)}} & \cellcolor[rgb]{1,0.839,0.5} {11.138 \scriptsize{(.126)}} \\ 
			\textbf{FI.elnet.penal.C} & \cellcolor[rgb]{0.859,1,0.5} {3.5588 \scriptsize{(.03)}} & \cellcolor[rgb]{1,0.994,0.5} {5.8723 \scriptsize{(.131)}} & \cellcolor[rgb]{0.5,0.9,0.5} {\textbf{6.712} \scriptsize{(.026)}} & \cellcolor[rgb]{0.852,1,0.5} {\underline{10.625} \scriptsize{(.129)}} \\ 
			\hline
		\end{tabular}
		\endgroup
		\caption{MAE and RMSE values for the considered models. The corresponding estimated standard deviations are given in parenthesis. Bolded values indicate the lowest error in each column. The models that are not significantly worse than the best (indicated by the 2-sigma range of the best model) are underlined.}
		\label{tab:maermse}
	\end{table}
	
	\begin{figure*}[b!]
		\centering
		\subfloat[]{
			\includegraphics[width = 0.475\linewidth]{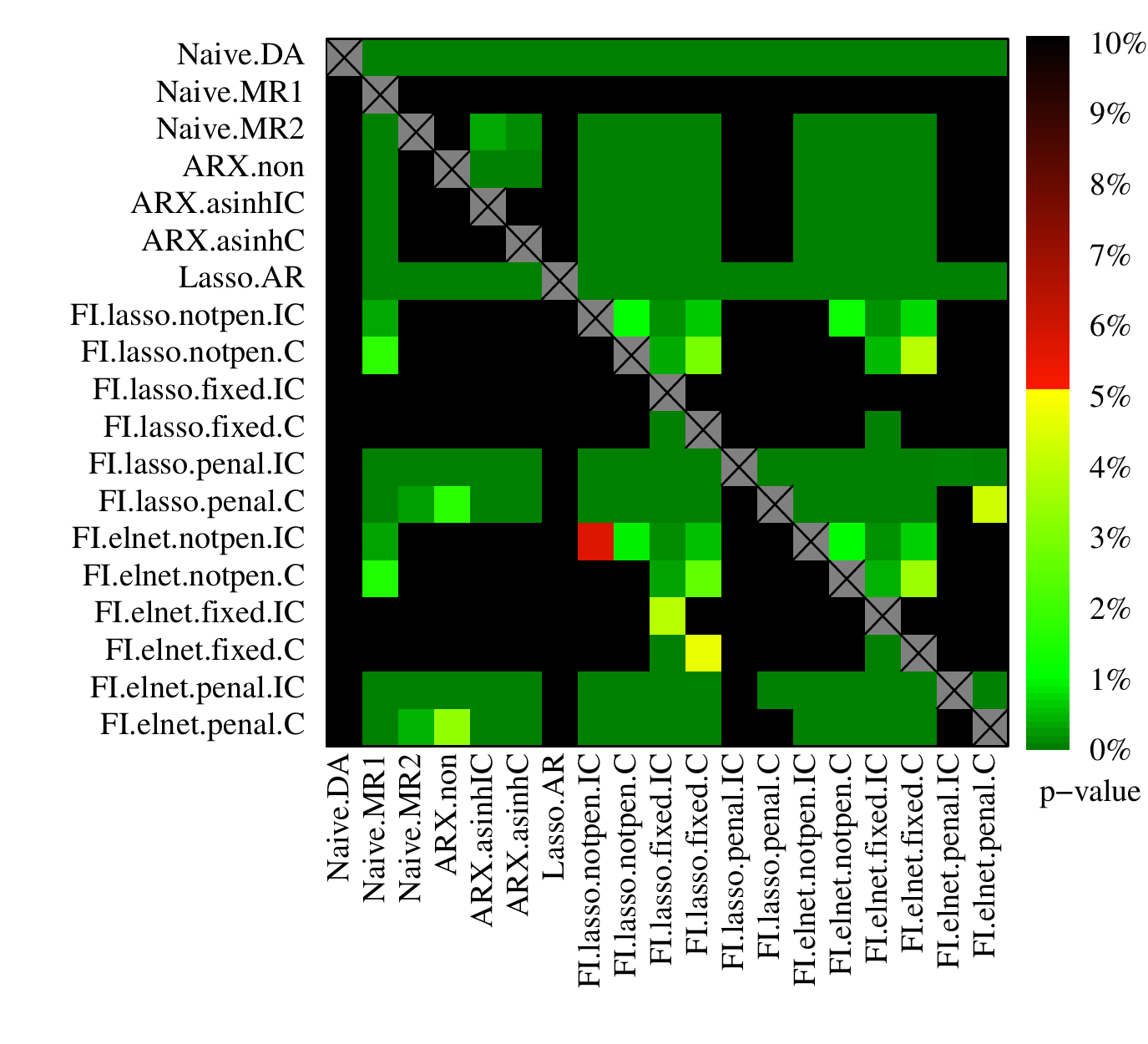}
			\label{fig:DM_RMSE_h}
		}	
		\subfloat[]{
			\includegraphics[width = 0.475\linewidth]{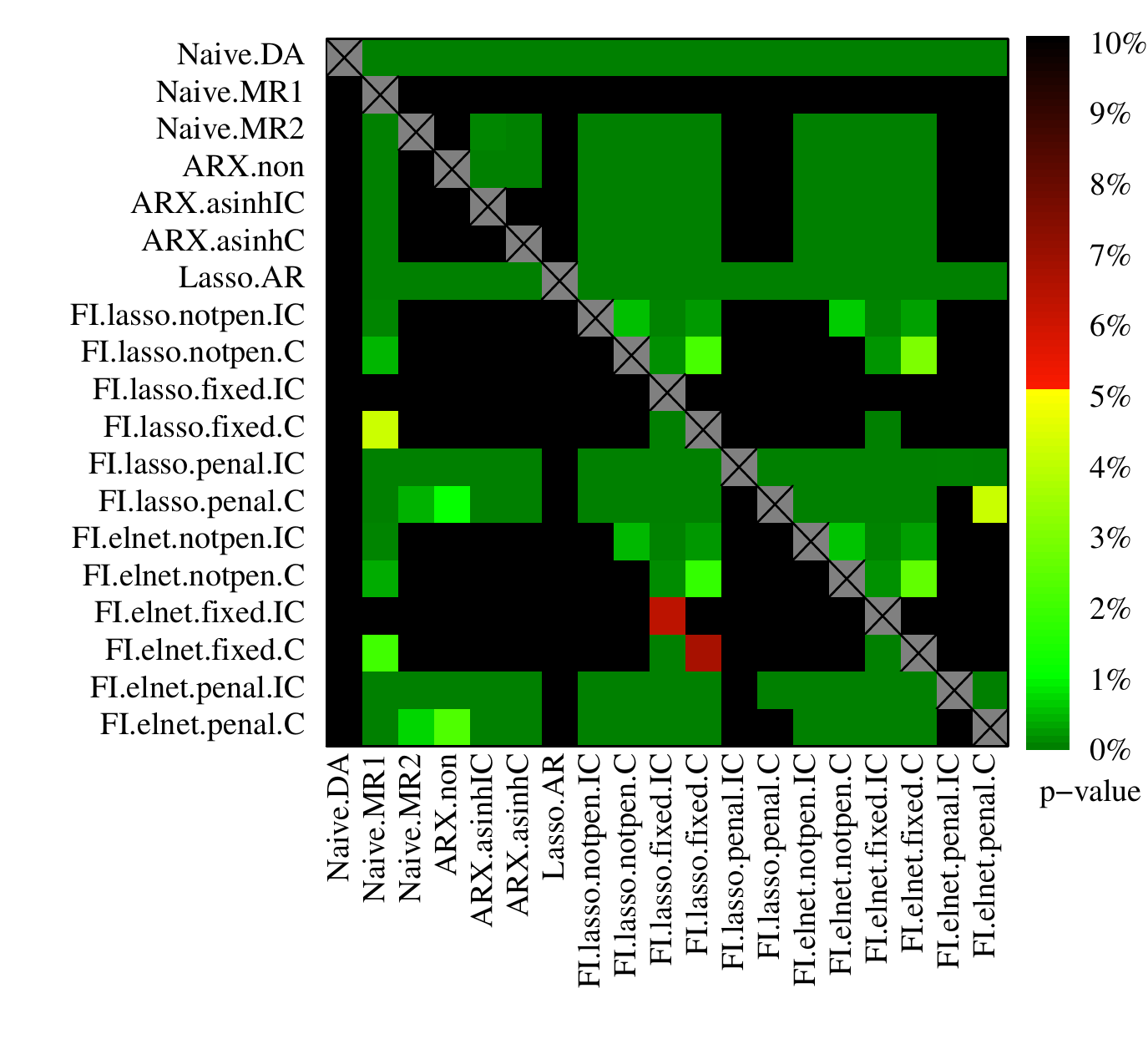}
			\label{fig:DM_MAE_h}
		}
		\caption{Results of the Diebold-Mariano test for the hourly products. (a) presents the $p$-values for the $||\cdot||_2$ norm, (b) the values for the $||\cdot||_1$ norm. The figures use a heat map to indicate the range of the $p$-values. The closer they are to zero ($\to$ dark green), the more significant the difference is between forecasts of X-axis model (better) and forecasts of the Y-axis model (worse).}
		\label{fig:DMtest_h}
	\end{figure*}

	In Table \ref{tab:maermse}, we present the MAE and RMSE values for all considered models. Next to the error values, we report the corresponding standard deviations estimated using 1000 bootstrap replications. We calculate the errors separately for the hourly and quarter-hourly products. Figures \ref{fig:DMtest_h} and \ref{fig:DMtest_qh} present the $p$-values of the DM test for the hourly and quarter-hourly products, respectively. The results for the hourly products are quite surprising. The \textbf{FI.lasso.fixed.IC} model gives the lowest error for both MAE and RMSE, but these values are not significantly different from the errors of the naive model (\ref{eq:naiveMR}). Based on the DM test results shown in Figure \ref{fig:DMtest_h}, we observe that none of the considered models is significantly better than the aforementioned naive. Let us recall that the \textbf{Naive.MR1} models the ID$_3$ price with a weighted-average price of the transactions that take place not earlier than 15 minutes before the forecasting time. This is an indication of a weak-form efficiency of the market, as based on the market data there is no arbitrage possible for risk-neutral traders, similarly to established financial markets. The same behaviour was captured by \cite{ziel2015forecasting} in the German-Austrian day-ahead markets. To be specific, they found the naive EXAA model to be the best for the results of the EPEX Day-Ahead Auction. The naive EXAA model uses as the predictor simply the price of the EXAA, which publishes the results 1 hour and 40 minutes before the submission in the EPEX Day-Ahead Auction.

	\begin{figure*}[t!]
		\centering
		\subfloat[]{
			\includegraphics[width = 0.475\linewidth]{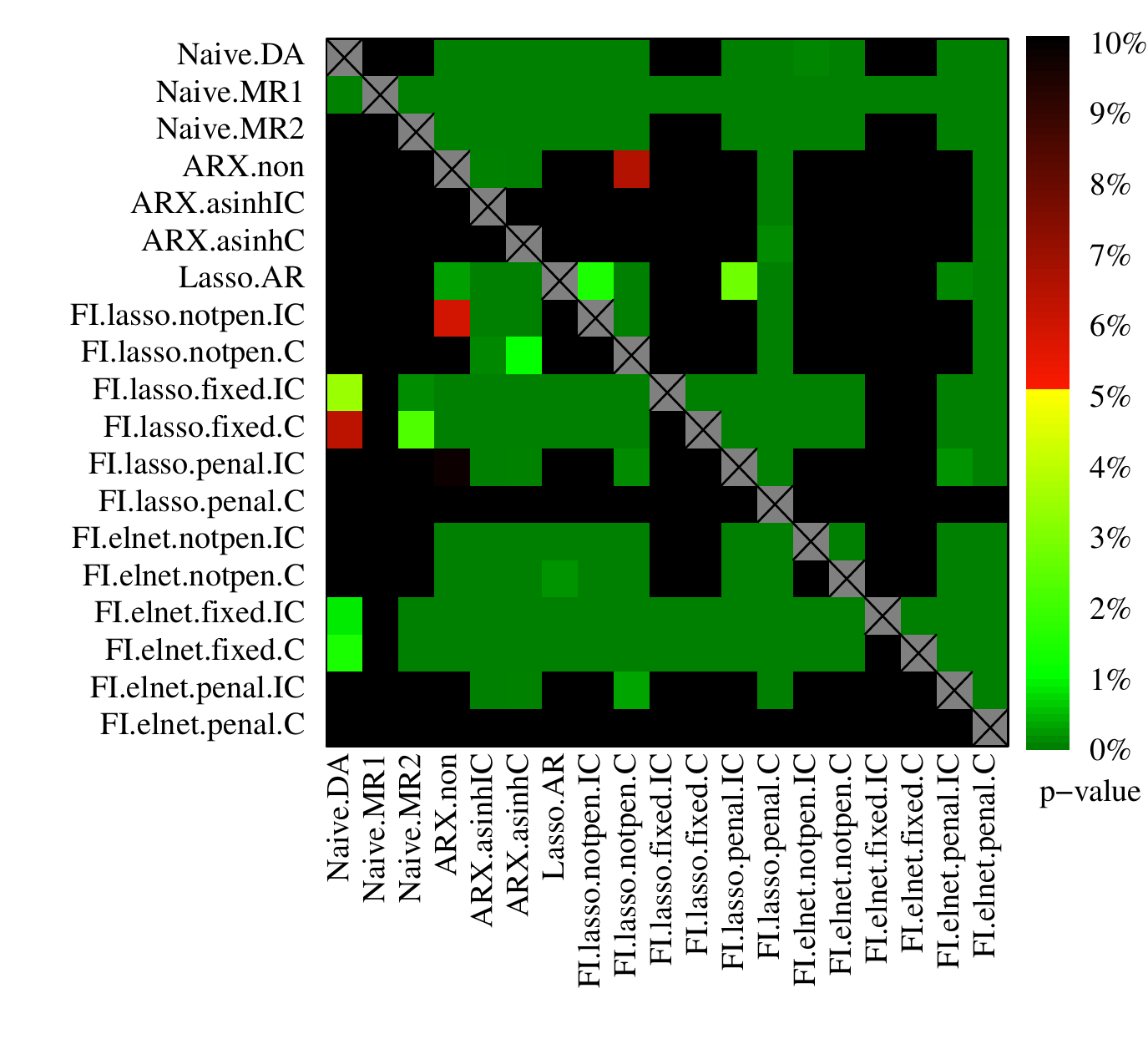}
			\label{fig:DM_RMSE_qh}
		}%
		~ 
		\subfloat[]{
			\includegraphics[width = 0.475\linewidth]{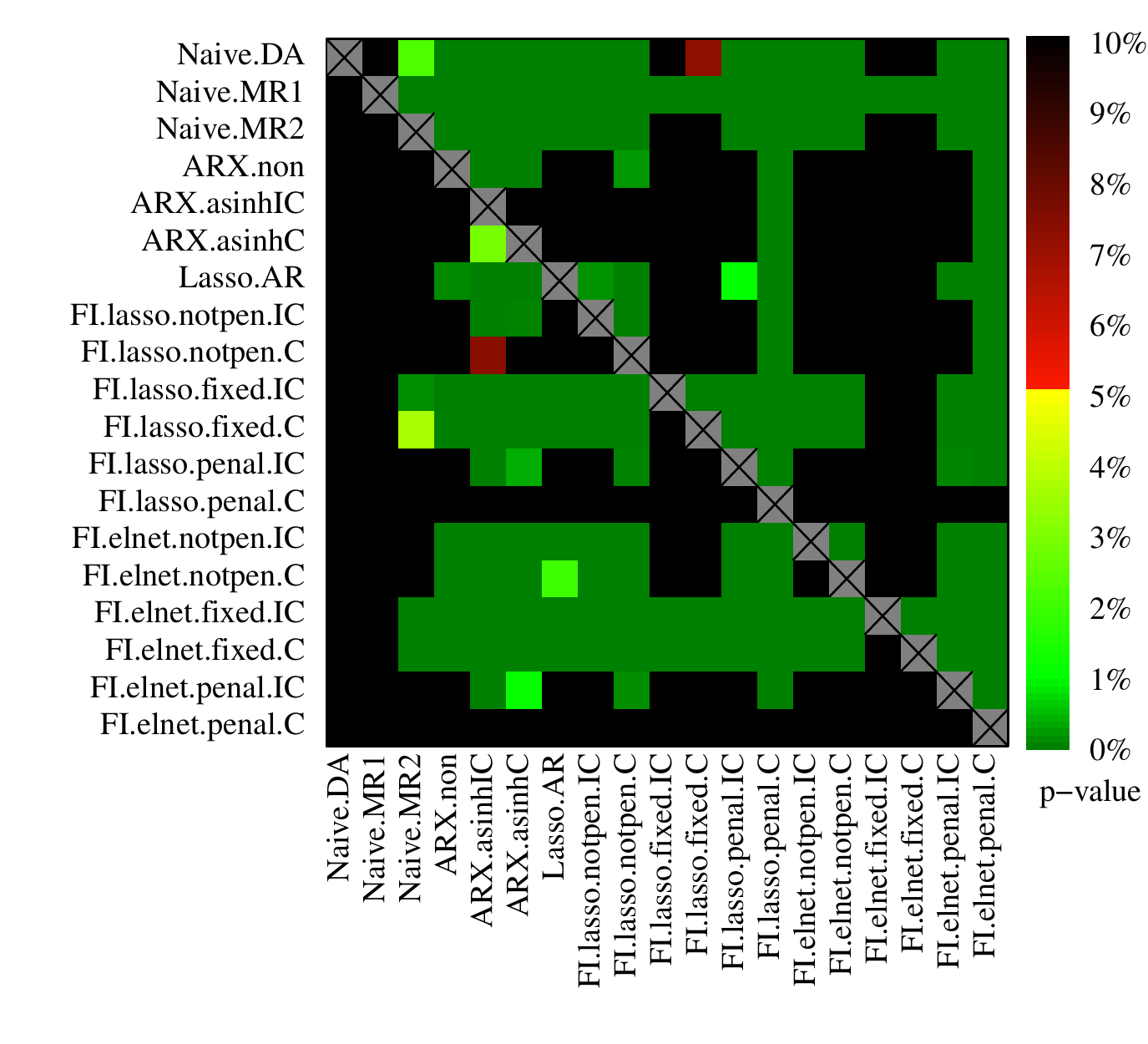}
			\label{fig:DM_MAE_qh}
		}
		\caption{Results of the Diebold-Mariano test for the quarter-hourly products. (a) presents the $p$-values for the $||\cdot||_2$ norm, (b) the values for the $||\cdot||_1$ norm. The figures use a heat map to indicate the range of the $p$-values. The closer they are to zero ($\to$ dark green), the more significant the difference is between forecasts of X-axis model (better) and forecasts of the Y-axis model (worse).}
		\label{fig:DMtest_qh}
	\end{figure*}

	The other full information models, which do not penalize for using the most recent value or have the corresponding coefficient fixed to~1, perform not much worse. In terms of MAE and RMSE, all of them are not significantly different from the \textbf{FI.lasso.fixed.IC} and of course than the \textbf{Naive.MR1}. Based on the DM test, the forecasts of the models with the most recent value coefficient set to 1 are not significantly worse than the forecasts of the \textbf{Naive.MR1}. On the other hand, the \textbf{FI.lasso.fixed.IC} appears to be significantly the best among all the considered models, excluding the naive most recent value. The expert models perform very well too, while the \textbf{Naive.DA} and the \textbf{Lasso.AR} perform the worst. Let us note that these models are the only ones that do not contain the most recent value, which turns out to explain the final ID$_3$ price very well. Let us emphasize that the correct way of the backward transformation in most cases returns lower errors and significantly better forecasts than the incorrect way.
	
	Figure \ref{fig:RMSEMAEh} shows models' performance for the hourly products separately over the day. We see there clearly that the models, that do not consist of the \textbf{Naive.MR1}, perform the worst in every hour. The ones that use the most recent value, but penalize its usage, give better results, but still, they are worse than the models that favour the most recent value or the  \textbf{Naive.MR1} itself. Among these, it is not easy to distinguish the best one based on Figure \ref{fig:RMSEMAEh}. Moreover, all models perform better during night hours than day hours, especially the peak ones.
	
	The situation is slightly different for the quarter-hourly products. Based on Table \ref{tab:maermse} the \textbf{Naive.MR1} does not perform that well. Also, the DM test results, presented in Figure \ref{fig:DMtest_qh}, indicate its poor performance. This suggests that the quarter-hourly intraday market is not weak-form efficient and still an autoregressive or deterministic structure can be found there. In terms of the MAE, we receive the lowest error for the full information model estimated using the elastic net with a standard penalty and correctly back-transformed, i.e. \textbf{FI.elnet.penal.C}. Very similar and not significantly different value is received for the \textbf{FI.lasso.penal.C} model. In terms of the RMSE, the best performing model turns out to be the asinh-transformed and correctly back-transformed expert model. Although, the other expert models and the aforementioned full information models have their RMSEs in the 2-sigma range of the \textbf{ARX.asinhC}. Based on Figure \ref{fig:DMtest_qh}, the models whose forecasts significantly outperform other models' forecasts are the \textbf{FI.lasso.penal.C} and the \textbf{FI.elnet.penal.C}. According to the results of the DM test, the forecasts of these models are not significantly different. The other models that perform well are the \textbf{ARX} models, while the worst performing are the naives and the full information models that have the most recent value coefficient fixed to~1.

		\begin{figure*}[t]
		\centering
		\subfloat[]{
			\includegraphics[width = 0.475\linewidth]{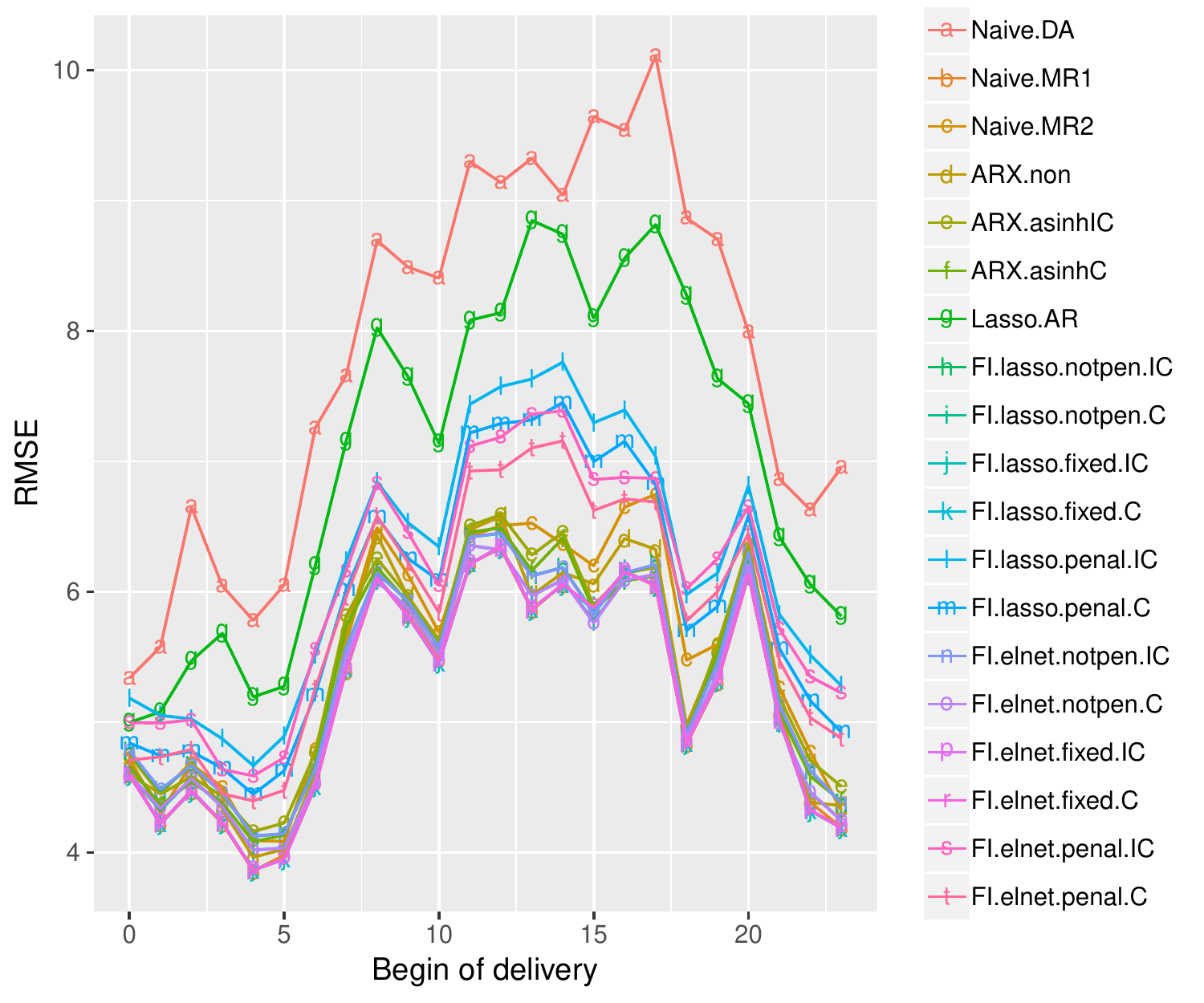}
			\label{fig:RMSEh}
		}%
		~ 
		\subfloat[]{
			\includegraphics[width = 0.475\linewidth]{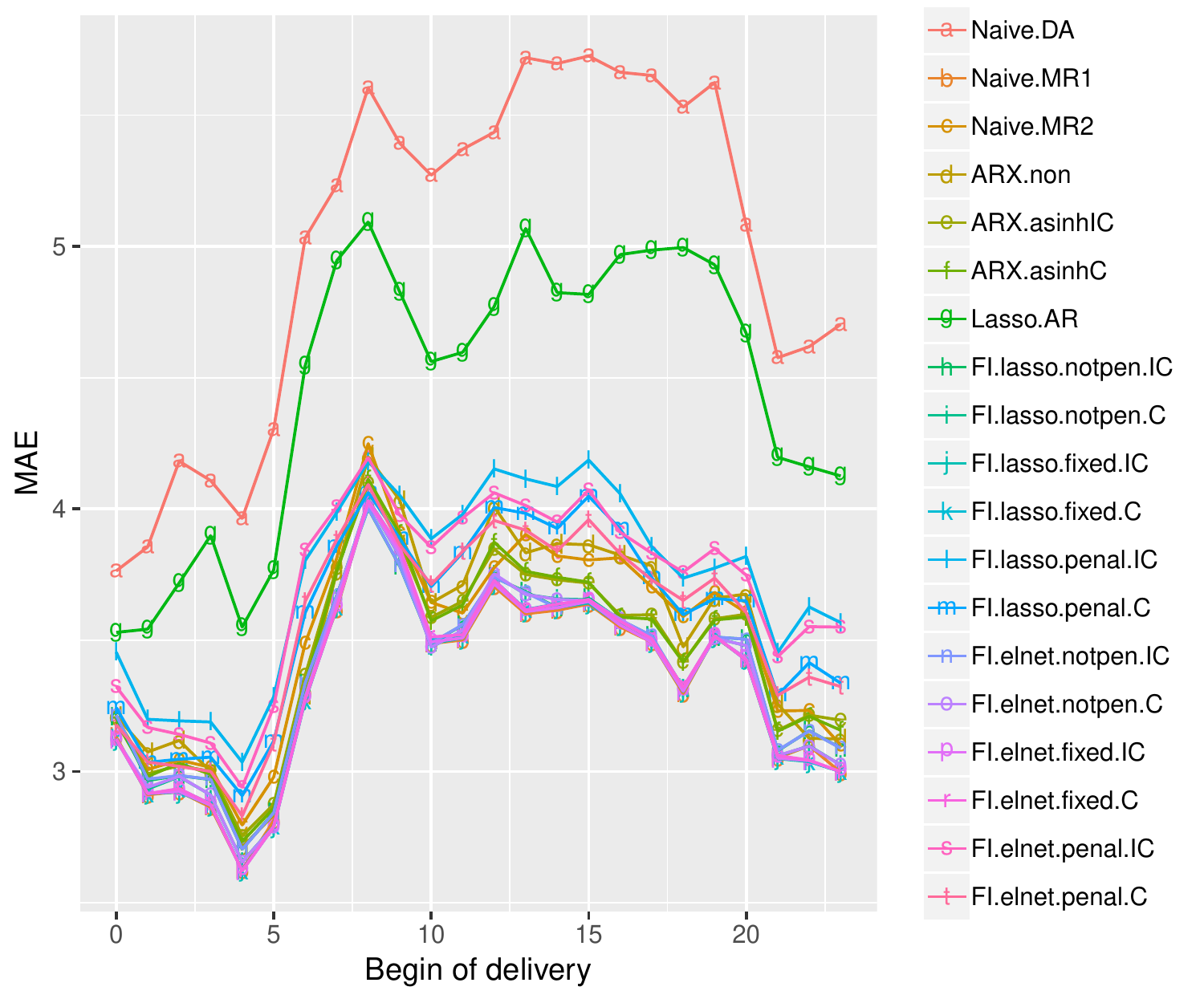}
			\label{fig:MAEh}
		}
		\caption{Performance measures for hourly products}
		\label{fig:RMSEMAEh}
	\end{figure*}
	
	\begin{figure*}[t!]
		\centering
		\subfloat[]{
			\includegraphics[width = 0.475\linewidth]{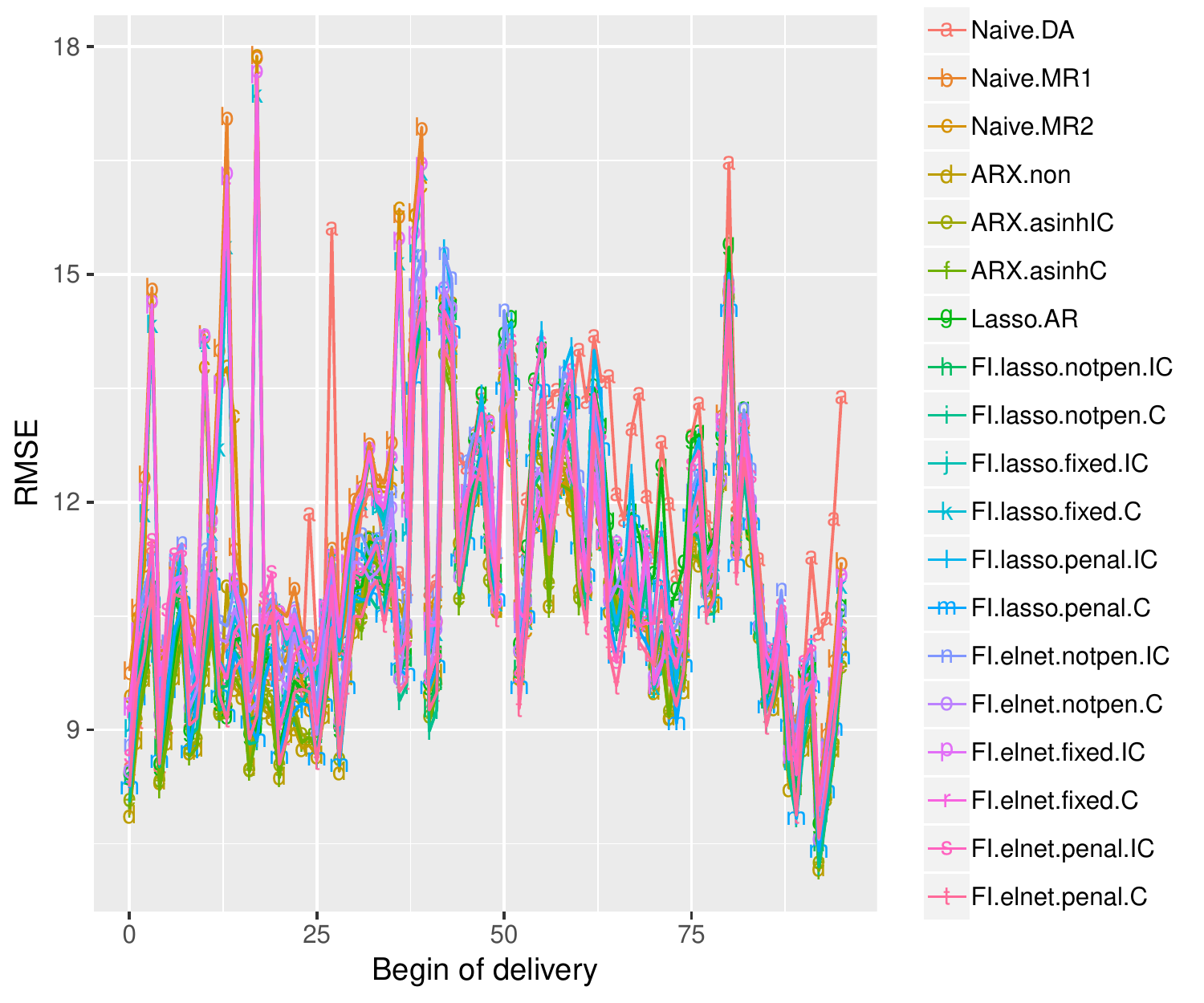}
			\label{fig:RMSEqh}
		}%
		~ 
		\subfloat[]{
			\includegraphics[width = 0.475\linewidth]{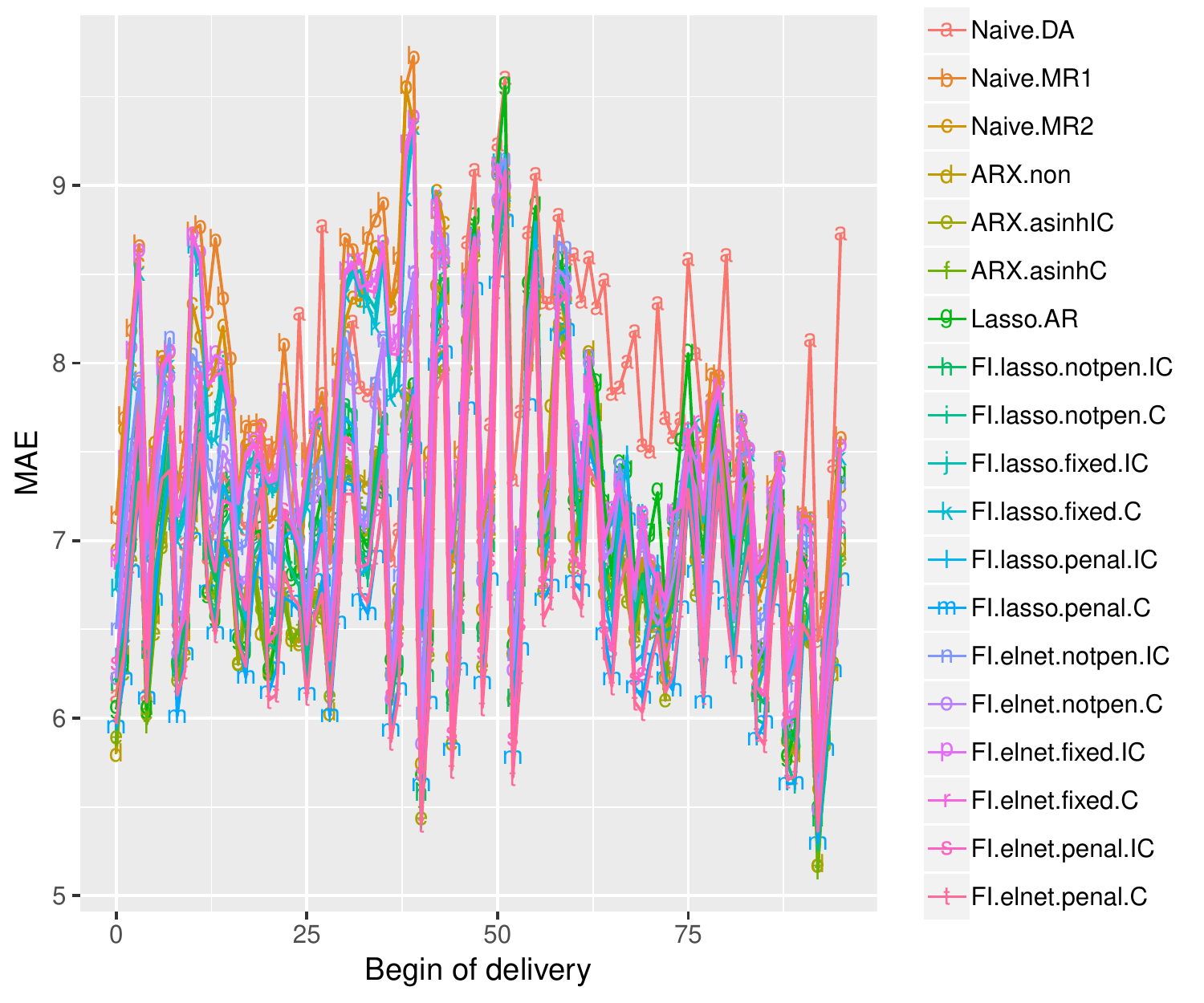}
			\label{fig:MAEqh}
		}
		\caption{Performance measures for quarter-hourly products}
		\label{fig:RMSEMAEqh}
	\end{figure*}

	Figure \ref{fig:RMSEMAEqh} shows the models' performance for the quarter-hourly products over the day. Let us note that here the results are not that explicit, as in Figure \ref{fig:RMSEMAEh}. The error values exhibit the jigsaw pattern that is present for the quarter-hourly prices in Figure \ref{fig:ID3weeklyqh}. It means that the models do not handle well this pattern. The considered models perform much worse for the quarter-hourly than for hourly products. The reason for this may be the higher volatility of the quarter-hourly market, especially the higher rate of the outliers occurrence. Another reason may be the lower liquidity of this market 3 hours and 15 minutes before the delivery. Figure \ref{fig:numberoftrans} shows that by this time, a very little number of transactions takes place in this market. This may explain also the poor performance of the \textbf{Naive.MR1}.

	\subsection{Variable selection}

	\begin{table}[!hb]
		\centering
		\begingroup\footnotesize
		\begin{tabular}{rllll}
			\hline
			& Importance: 1 & Importance: 2 & Importance: 3  & Importance: 4\\ 
			\hline
			00:00 & \cellcolor[rgb]{0.5,0.9,0.5} {$\prescript{H}{3.25}{\text{ID}}_{0.25}^{d, \text{00:00}}$ (99.86)} & \cellcolor[rgb]{1,0.505,0.5} {$\text{BV}^{d-1,\text{17:15}}$ (0.04)} & \cellcolor[rgb]{1,0.504,0.5} {$\prescript{H}{4.75}{\text{ID}}_{0.25}^{d, \text{00:00}}$ (0.03)} & \cellcolor[rgb]{1,0.503,0.5} {$\prescript{QH}{9.25}{\text{ID}}_{0.25}^{d-1, \text{20:45}}$ (0.03)} \\ 
			01:00 & \cellcolor[rgb]{0.5,0.9,0.5} {$\prescript{H}{3.25}{\text{ID}}_{0.25}^{d, \text{01:00}}$ (99.7)} & \cellcolor[rgb]{1,0.523,0.5} {$\prescript{QH}{0.5}{\text{ID}}_{0.25}^{d-1, \text{22:15}}$ (0.18)} & \cellcolor[rgb]{1,0.507,0.5} {$\prescript{QH}{}{\text{ID}}_{3}^{d-9,\text{07:30}}$ (0.06)} & \cellcolor[rgb]{1,0.505,0.5} {$\text{IA}^{d-2,\text{08:00}}$ (0.04)} \\ 
			02:00 & \cellcolor[rgb]{0.5,0.9,0.5} {$\prescript{H}{3.25}{\text{ID}}_{0.25}^{d, \text{02:00}}$ (99.09)} & \cellcolor[rgb]{1,0.569,0.5} {$\text{DA}^{d-13,\text{06:00}}$ (0.54)} & \cellcolor[rgb]{1,0.532,0.5} {$\prescript{QH}{14.5}{\text{ID}}_{0.25}^{d, \text{06:45}}$ (0.25)} & \cellcolor[rgb]{1,0.506,0.5} {$\text{BV}^{d-1,\text{22:15}}$ (0.04)} \\ 
			03:00 & \cellcolor[rgb]{0.5,0.9,0.5} {$\prescript{H}{3.25}{\text{ID}}_{0.25}^{d, \text{03:00}}$ (99.92)} & \cellcolor[rgb]{1,0.507,0.5} {$\text{BV}^{d-8,\text{06:00}}$ (0.06)} & \cellcolor[rgb]{1,0.502,0.5} {$\prescript{QH}{14.5}{\text{ID}}_{0.25}^{d-1, \text{07:00}}$ (0.02)} & \cellcolor[rgb]{1,0.501,0.5} {$\prescript{QH}{5.25}{\text{ID}}_{0.25}^{d-1, \text{04:45}}$ (0)} \\ 
			04:00 & \cellcolor[rgb]{0.5,0.9,0.5} {$\prescript{H}{3.25}{\text{ID}}_{0.25}^{d, \text{04:00}}$ (99.92)} & \cellcolor[rgb]{1,0.505,0.5} {$\text{BV}^{d-3,\text{15:45}}$ (0.04)} & \cellcolor[rgb]{1,0.502,0.5} {$\prescript{QH}{0}{\text{ID}}_{0.25}^{d-1, \text{13:45}}$ (0.02)} & \cellcolor[rgb]{1,0.502,0.5} {$\text{IA}^{d-14,\text{21:45}}$ (0.02)} \\ 
			05:00 & \cellcolor[rgb]{0.5,0.9,0.5} {$\prescript{H}{3.25}{\text{ID}}_{0.25}^{d, \text{05:00}}$ (99.9)} & \cellcolor[rgb]{1,0.508,0.5} {$\text{BV}^{d-2,\text{02:30}}$ (0.06)} & \cellcolor[rgb]{1,0.502,0.5} {$\text{IA}^{d,\text{00:45}}$ (0.02)} & \cellcolor[rgb]{1,0.501,0.5} {$\prescript{QH}{8.25}{\text{ID}}_{0.25}^{d, \text{02:45}}$ (0.01)} \\ 
			06:00 & \cellcolor[rgb]{0.5,0.9,0.5} {$\prescript{H}{3.25}{\text{ID}}_{0.25}^{d, \text{06:00}}$ (99.47)} & \cellcolor[rgb]{1,0.522,0.5} {$\prescript{QH}{12.75}{\text{ID}}_{0.25}^{d-1, \text{23:15}}$ (0.17)} & \cellcolor[rgb]{1,0.519,0.5} {$\prescript{QH}{6}{\text{ID}}_{0.25}^{d-1, \text{23:45}}$ (0.15)} & \cellcolor[rgb]{1,0.515,0.5} {$\prescript{QH}{6.25}{\text{ID}}_{0.25}^{d-1, \text{23:45}}$ (0.11)} \\ 
			07:00 & \cellcolor[rgb]{0.5,0.9,0.5} {$\prescript{H}{3.25}{\text{ID}}_{0.25}^{d, \text{07:00}}$ (99.7)} & \cellcolor[rgb]{1,0.516,0.5} {$\prescript{QH}{5.75}{\text{ID}}_{0.25}^{d-1, \text{19:30}}$ (0.12)} & \cellcolor[rgb]{1,0.509,0.5} {$\prescript{QH}{8.5}{\text{ID}}_{0.25}^{d, \text{00:45}}$ (0.07)} & \cellcolor[rgb]{1,0.503,0.5} {$\prescript{QH}{12.5}{\text{ID}}_{0.25}^{d-1, \text{22:00}}$ (0.03)} \\ 
			08:00 & \cellcolor[rgb]{0.5,0.9,0.5} {$\prescript{H}{3.25}{\text{ID}}_{0.25}^{d, \text{08:00}}$ (99.86)} & \cellcolor[rgb]{1,0.503,0.5} {$\prescript{QH}{9.25}{\text{ID}}_{0.25}^{d, \text{02:45}}$ (0.03)} & \cellcolor[rgb]{1,0.503,0.5} {$\prescript{QH}{5}{\text{ID}}_{0.25}^{d-1, \text{15:45}}$ (0.02)} & \cellcolor[rgb]{1,0.503,0.5} {$\prescript{QH}{14}{\text{ID}}_{0.25}^{d-1, \text{21:30}}$ (0.02)} \\ 
			09:00 & \cellcolor[rgb]{0.5,0.9,0.5} {$\prescript{H}{3.25}{\text{ID}}_{0.25}^{d, \text{09:00}}$ (99.46)} & \cellcolor[rgb]{1,0.561,0.5} {$\prescript{QH}{}{\text{ID}}_{3}^{d-6,\text{21:30}}$ (0.48)} & \cellcolor[rgb]{1,0.505,0.5} {$\prescript{QH}{16.5}{\text{ID}}_{0.25}^{d, \text{09:00}}$ (0.04)} & \cellcolor[rgb]{1,0.501,0.5} {$\prescript{H}{26}{\text{ID}}_{0.25}^{d-1, \text{23:00}}$ (0)} \\ 
			10:00 & \cellcolor[rgb]{0.5,0.9,0.5} {$\prescript{H}{3.25}{\text{ID}}_{0.25}^{d, \text{10:00}}$ (99.08)} & \cellcolor[rgb]{1,0.53,0.5} {$\prescript{QH}{4}{\text{ID}}_{0.25}^{d, \text{09:45}}$ (0.24)} & \cellcolor[rgb]{1,0.514,0.5} {$\prescript{QH}{4.25}{\text{ID}}_{0.25}^{d, \text{10:00}}$ (0.11)} & \cellcolor[rgb]{1,0.514,0.5} {$\prescript{QH}{2.75}{\text{ID}}_{0.25}^{d, \text{02:45}}$ (0.11)} \\ 
			11:00 & \cellcolor[rgb]{0.5,0.9,0.5} {$\prescript{H}{3.25}{\text{ID}}_{0.25}^{d, \text{11:00}}$ (99.78)} & \cellcolor[rgb]{1,0.528,0.5} {$\prescript{QH}{3}{\text{ID}}_{0.25}^{d, \text{00:00}}$ (0.22)} & \cellcolor[rgb]{1,0.5,0.5} { } & \cellcolor[rgb]{1,0.5,0.5} { } \\ 
			12:00 & \cellcolor[rgb]{0.5,0.9,0.5} {$\prescript{H}{3.25}{\text{ID}}_{0.25}^{d, \text{12:00}}$ (99.73)} & \cellcolor[rgb]{1,0.534,0.5} {$\text{BV}^{d-7,\text{20:15}}$ (0.27)} & \cellcolor[rgb]{1,0.5,0.5} {$\prescript{QH}{}{\text{ID}}_{3}^{d-9,\text{00:15}}$ (0)} & \cellcolor[rgb]{1,0.5,0.5} {$\text{BV}^{d-8,\text{19:15}}$ (0)} \\ 
			13:00 & \cellcolor[rgb]{0.5,0.9,0.5} {$\prescript{H}{3.25}{\text{ID}}_{0.25}^{d, \text{13:00}}$ (99.82)} & \cellcolor[rgb]{1,0.516,0.5} {$\prescript{QH}{}{\text{ID}}_{3}^{d-9,\text{07:15}}$ (0.12)} & \cellcolor[rgb]{1,0.503,0.5} {$\prescript{QH}{}{\text{ID}}_{3}^{d-9,\text{09:30}}$ (0.03)} & \cellcolor[rgb]{1,0.502,0.5} {$\prescript{QH}{}{\text{ID}}_{3}^{d-12,\text{21:15}}$ (0.02)} \\ 
			14:00 & \cellcolor[rgb]{0.5,0.9,0.5} {$\prescript{H}{3.25}{\text{ID}}_{0.25}^{d, \text{14:00}}$ (99.92)} & \cellcolor[rgb]{1,0.506,0.5} {$\prescript{QH}{3}{\text{ID}}_{0.25}^{d-1, \text{11:00}}$ (0.05)} & \cellcolor[rgb]{1,0.502,0.5} {$\text{BV}^{d-1,\text{21:00}}$ (0.01)} & \cellcolor[rgb]{1,0.502,0.5} {$\text{BV}^{d-1,\text{16:30}}$ (0.01)} \\ 
			15:00 & \cellcolor[rgb]{0.5,0.9,0.5} {$\prescript{H}{3.25}{\text{ID}}_{0.25}^{d, \text{15:00}}$ (97.57)} & \cellcolor[rgb]{1,0.64,0.5} {$\prescript{QH}{}{\text{ID}}_{3}^{d-7,\text{06:45}}$ (1.09)} & \cellcolor[rgb]{1,0.553,0.5} {$\text{IA}^{d-9,\text{06:00}}$ (0.42)} & \cellcolor[rgb]{1,0.532,0.5} {$\text{BV}^{d-9,\text{23:30}}$ (0.25)} \\ 
			16:00 & \cellcolor[rgb]{0.5,0.9,0.5} {$\prescript{H}{3.25}{\text{ID}}_{0.25}^{d, \text{16:00}}$ (99.03)} & \cellcolor[rgb]{1,0.53,0.5} {$\prescript{QH}{}{\text{ID}}_{3}^{d-4,\text{13:45}}$ (0.23)} & \cellcolor[rgb]{1,0.521,0.5} {$\text{IA}^{d-7,\text{20:00}}$ (0.16)} & \cellcolor[rgb]{1,0.513,0.5} {$\text{IA}^{d-7,\text{17:00}}$ (0.1)} \\ 
			17:00 & \cellcolor[rgb]{0.5,0.9,0.5} {$\prescript{H}{3.25}{\text{ID}}_{0.25}^{d, \text{17:00}}$ (98.53)} & \cellcolor[rgb]{1,0.642,0.5} {$\prescript{QH}{5.25}{\text{ID}}_{0.25}^{d, \text{06:15}}$ (1.11)} & \cellcolor[rgb]{1,0.511,0.5} {$\prescript{QH}{4.75}{\text{ID}}_{0.25}^{d, \text{05:30}}$ (0.09)} & \cellcolor[rgb]{1,0.506,0.5} {$\prescript{QH}{}{\text{ID}}_{3}^{d-7,\text{21:30}}$ (0.05)} \\ 
			18:00 & \cellcolor[rgb]{0.5,0.9,0.5} {$\prescript{H}{3.25}{\text{ID}}_{0.25}^{d, \text{18:00}}$ (99.67)} & \cellcolor[rgb]{1,0.514,0.5} {$\text{IA}^{d-10,\text{09:00}}$ (0.11)} & \cellcolor[rgb]{1,0.509,0.5} {$\prescript{QH}{0.25}{\text{ID}}_{0.25}^{d-1, \text{15:30}}$ (0.07)} & \cellcolor[rgb]{1,0.507,0.5} {$\prescript{QH}{16}{\text{ID}}_{0.25}^{d, \text{08:30}}$ (0.06)} \\ 
			19:00 & \cellcolor[rgb]{0.5,0.9,0.5} {$\prescript{H}{3.25}{\text{ID}}_{0.25}^{d, \text{19:00}}$ (98.68)} & \cellcolor[rgb]{1,0.557,0.5} {$\prescript{QH}{3.5}{\text{ID}}_{0.25}^{d-1, \text{16:45}}$ (0.44)} & \cellcolor[rgb]{1,0.548,0.5} {$\prescript{QH}{2.75}{\text{ID}}_{0.25}^{d-1, \text{16:45}}$ (0.38)} & \cellcolor[rgb]{1,0.519,0.5} {$\prescript{QH}{1.75}{\text{ID}}_{0.25}^{d-1, \text{16:45}}$ (0.15)} \\ 
			20:00 & \cellcolor[rgb]{0.5,0.9,0.5} {$\prescript{H}{3.25}{\text{ID}}_{0.25}^{d, \text{20:00}}$ (99.92)} & \cellcolor[rgb]{1,0.505,0.5} {$\prescript{QH}{7.25}{\text{ID}}_{0.25}^{d, \text{07:30}}$ (0.04)} & \cellcolor[rgb]{1,0.502,0.5} {$\prescript{QH}{2.5}{\text{ID}}_{0.25}^{d-1, \text{16:45}}$ (0.01)} & \cellcolor[rgb]{1,0.501,0.5} {$\text{BV}^{d-14,\text{19:30}}$ (0.01)} \\ 
			21:00 & \cellcolor[rgb]{0.5,0.9,0.5} {$\prescript{H}{3.25}{\text{ID}}_{0.25}^{d, \text{21:00}}$ (98.55)} & \cellcolor[rgb]{1,0.564,0.5} {$\text{IA}^{d1,\text{07:00}}$ (0.5)} & \cellcolor[rgb]{1,0.546,0.5} {$\text{IA}^{d-10,\text{18:00}}$ (0.36)} & \cellcolor[rgb]{1,0.543,0.5} {$\prescript{QH}{2.25}{\text{ID}}_{0.25}^{d, \text{19:45}}$ (0.34)} \\ 
			22:00 & \cellcolor[rgb]{0.5,0.9,0.5} {$\prescript{H}{3.25}{\text{ID}}_{0.25}^{d, \text{22:00}}$ (99.02)} & \cellcolor[rgb]{1,0.526,0.5} {$\prescript{QH}{25.5}{\text{ID}}_{0.25}^{d1, \text{20:00}}$ (0.21)} & \cellcolor[rgb]{1,0.523,0.5} {$\text{BV}^{d-6,\text{00:00}}$ (0.18)} & \cellcolor[rgb]{1,0.519,0.5} {$\text{IA}^{d1,\text{01:45}}$ (0.15)} \\ 
			23:00 & \cellcolor[rgb]{0.5,0.9,0.5} {$\prescript{H}{3.25}{\text{ID}}_{0.25}^{d, \text{23:00}}$ (99.96)} & \cellcolor[rgb]{1,0.504,0.5} {$\prescript{QH}{3.25}{\text{ID}}_{0.25}^{d, \text{03:15}}$ (0.03)} & \cellcolor[rgb]{1,0.5,0.5} {$\text{BV}^{d-5,\text{22:00}}$ (0)} & \cellcolor[rgb]{1,0.5,0.5} {$\text{BV}^{d-10,\text{03:00}}$ (0)} \\ 
			\hline
		\end{tabular}
		\endgroup
		\caption{Most relevant coefficients in the model \textbf{FI.lasso.fixed.IC} for hourly products} 
		\label{tab:beta_h}
	\end{table}
	
	We perform a variable selection analysis for the best full information model for the hourly products, i.e. \textbf{FI.lasso.fixed.IC} and for the best performing model for the quarter-hourly products, i.e. \textbf{FI.elnet.penal.C}. Table \ref{tab:beta_h} shows the four most relevant coefficients in the model \textbf{FI.lasso.fixed.IC} for each hourly product. Let us note that for almost every hour the most recent value has the importance of more than $99\%$. This explains why this model is not significantly different from the \textbf{Naive.MR1}. This may be an important hint that there is no more information that we can get from this data, despite the most recent value. There is no clear pattern in the selection of the second and the third most relevant coefficient, especially that the corresponding values are very small.

	The tables regarding the importance of parameters for the model \textbf{FI.elnet.penal.C} for the quarter-hourly products can be found in the Appendix. Table \ref{tab:beta_qh_1} shows that for the night and early-morning products the corresponding Intraday Auction value is a very relevant variable. This pattern appears, but is not that strong in Tables \ref{tab:beta_qh_2} and \ref{tab:beta_qh_3}. The most recent value of the corresponding product and the most recent value of the closest hourly product seem relevant for many products too. This suggests that the most recent value is an informative variable, even though it is not that good alone, as it is for the hourly products. Let us note that the information from the balancing market and from the hourly day-ahead market is non-existent in the 4 most important variables. Most of the regressors that appear there are the $_x\text{ID}_y$ prices from the Intraday Continuous market. The pattern of choice of the $_x\text{ID}_y$ prices is not clear, except the fact that most of the important variables have their delivery before the corresponding product.
	
	\section{Summary and Conclusion}
	We conducted an electricity price forecasting study in the German Intraday Continuous market. We utilized a new model that makes use of the market's continuity and estimated it using well-known techniques. We compared it with seven benchmark models to measure its performance. We performed the analysis separately for the hourly and quarter-hourly products. The results for the hourly products suggest that there is no more information that we can get from the transactions data, despite the most recent price, which means that here we deal with a weak-form efficient market. Therefore, none of the considered models performed better than the naive most recent value. This is very similar to the relation between the German-Austrian EXAA and EPEX day-ahead markets that is exhibited by \cite{ziel2015forecasting}.
	
	On the other hand, the results for the quarter-hourly products have shown that there is some space for improvement. For this market, the most recent value did not give satisfying results. The reason for this may be a lower number of transactions by the time of forecasting when comparing with the market of the hourly products. 
	In the case of the quarter-hourly products, the full information model estimated using the elastic net with a standard penalty and correctly back-transformed performed the best. The variable selection analysis of this model has shown that the most relevant regressors are: the corresponding Intraday Auction price, the most recent value of the corresponding product, and the most recent value of the closest hourly product.
	
	An important outcome of the following paper is the analysis of the asinh's backward transformation. We have shown that the mathematically correct approach to this problem gives significantly better forecasts than the common, incorrect one. This is clearly depicted in Figures \ref{fig:DMtest_h} and \ref{fig:DMtest_qh}. We see that comparing the same models back-transformed correctly and incorrectly results in most cases in significantly better forecasts when using the correct backward transformation. Having this and the approach's simplicity on our minds, we strongly encourage to use it.
	
	Future research can go in different directions. Likely the proposed model can be improved using other estimation methods. The next natural step forward is the inclusion of fundamental variables in the model, like e.g. weather forecasts or outages of power plants. Another possible direction is probabilistic forecasting, including a detailed volatility analysis. It is certain that due to the growth of the intraday market and the scarcity of the literature on this subject, still a lot of research needs to be conducted.
	
	\section*{Acknowledgments}
	
	This research article was partially supported by the German Research Foundation (DFG, Germany) and the National Science Center (NCN, Poland) through BEETHOVEN grant no. 2016/23/G/HS4/01005.

	\section*{Appendix}
	\subsection*{The $_x\text{ID}_y$ weighted-average additivity property}
	\begin{proposition}
		Let us consider a disjoint split of the $\mathbb{T}_{x,y}^{d,s}$ period
		\begin{equation*}
		\mathbb{T}_{x,y}^{d,s}  =  \dot{\bigcup_{j}} \mathbb{T}_{x_j,y_j}^{d,s} 
		= \dot{\bigcup_{j}} \left[b(d,s) - x_j - y_j, b(d,s) - x_j \right),			
		\end{equation*}
		where $j \in \{0, 1, \dots, J\}$, $x_0 + y_0 = x + y$ and $x_J = x$. Then
		\begin{equation}
		{}_{x}\text{ID}^{d,s}_y = \frac{  \sum_{j} {}_{x_j}\text{ID}^{d,s}_{y_j} \mathbb{V}_{x_j,y_j}^{d,s}}{\sum_{j} \mathbb{V}_{x_j,y_j}^{d,s}},
		\end{equation}
		where $\mathbb{V}_{x,y}^{d,s} = \sum_{k \in \mathbb{T}_{x,y}^{d,s} \cap \mathcal{T}^{d,s}} V_k^{d,s}$.
	\end{proposition}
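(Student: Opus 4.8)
The plan is to reduce the identity to the elementary fact that a volume-weighted mean over a transaction set equals the volume-weighted combination of the means taken over the pieces of any partition of that set. First I would abbreviate the numerator of the definition \eqref{eq:xIDy} by writing $\mathbb{W}_{x,y}^{d,s} := \sum_{k \in \mathbb{T}_{x,y}^{d,s}\cap \mathcal{T}^{d,s}} V_k^{d,s}P_k^{d,s}$, so that by definition ${}_x\text{ID}_y^{d,s} = \mathbb{W}_{x,y}^{d,s}/\mathbb{V}_{x,y}^{d,s}$ and, equivalently, $\mathbb{W}_{x,y}^{d,s} = {}_x\text{ID}_y^{d,s}\,\mathbb{V}_{x,y}^{d,s}$. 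The same relation holds blockwise, $\mathbb{W}_{x_j,y_j}^{d,s} = {}_{x_j}\text{ID}_{y_j}^{d,s}\,\mathbb{V}_{x_j,y_j}^{d,s}$, which is precisely the bookkeeping that will let me reassemble the right-hand side.

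The key step is to transfer the disjoint split of the time intervals to a disjoint split of the transaction index sets. Since $\mathcal{T}^{d,s}$ is a fixed set of timestamps and intersection distributes over disjoint unions, the hypothesis $\mathbb{T}_{x,y}^{d,s} = \dot{\bigcup}_j \mathbb{T}_{x_j,y_j}^{d,s}$ yields $\mathbb{T}_{x,y}^{d,s}\cap\mathcal{T}^{d,s} = \dot{\bigcup}_j (\mathbb{T}_{x_j,y_j}^{d,s}\cap\mathcal{T}^{d,s})$, again a disjoint union. Each transaction timestamp thus lies in exactly one block, so every finite sum indexed over $\mathbb{T}_{x,y}^{d,s}\cap\mathcal{T}^{d,s}$ decomposes as the sum of the corresponding sums over the blocks. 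Applying this to both the volume-weighted-price sum and the plain volume sum gives the two additivity relations $\mathbb{W}_{x,y}^{d,s} = \sum_j \mathbb{W}_{x_j,y_j}^{d,s}$ and $\mathbb{V}_{x,y}^{d,s} = \sum_j \mathbb{V}_{x_j,y_j}^{d,s}$.

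Combining these is then immediate: substituting $\mathbb{W}_{x_j,y_j}^{d,s} = {}_{x_j}\text{ID}_{y_j}^{d,s}\,\mathbb{V}_{x_j,y_j}^{d,s}$ into the first additivity relation and dividing by $\mathbb{V}_{x,y}^{d,s} = \sum_j \mathbb{V}_{x_j,y_j}^{d,s}$ reproduces the claimed formula. I expect the only genuine subtlety, and hence the main obstacle, to be the blocks that contain no trades, for which $\mathbb{V}_{x_j,y_j}^{d,s}=0$ and ${}_{x_j}\text{ID}_{y_j}^{d,s}$ is defined only through the no-trade convention (carry-forward of the last preceding price, or the auction fallback). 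I would handle this by observing that such a block contributes an empty sum, hence $\mathbb{W}_{x_j,y_j}^{d,s}=0$, and that the matching weight $\mathbb{V}_{x_j,y_j}^{d,s}=0$ annihilates the finite factor ${}_{x_j}\text{ID}_{y_j}^{d,s}$ in the weighted average; empty blocks therefore drop out of both sides consistently and leave the identity unaffected, provided the full window satisfies $\mathbb{V}_{x,y}^{d,s}>0$ so that the left-hand side is itself trade-defined.
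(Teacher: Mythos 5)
Your proposal is correct and follows essentially the same route as the paper's proof: distribute the intersection with $\mathcal{T}^{d,s}$ over the disjoint union, split the volume and volume-times-price sums blockwise, and recognise each block's contribution as ${}_{x_j}\text{ID}^{d,s}_{y_j}\,\mathbb{V}_{x_j,y_j}^{d,s}$. Your explicit treatment of trade-free blocks is a welcome extra precaution, since the paper's step of multiplying and dividing by $\mathbb{V}_{x_j,y_j}^{d,s}$ silently assumes that quantity is nonzero.
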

	
	\begin{proof}
		To prove the property, we apply the above disjoint of the time frame $\mathbb{T}_{x,y}^{d,s}$ to the definition (\ref{eq:xIDy}) of the $_x\text{ID} _y$. 
		\begin{equation}
		\begin{aligned}
		{}^{}_x\text{ID}_y^{d,s} & := \frac{1}{\sum_{k \in \mathbb{T}_{x,y}^{d,s}\cap \mathcal{T}^{d,s}} V_k^{d,s}} \sum_{k \in \mathbb{T}_{x,y}^{d,s}\cap \mathcal{T}^{d,s}} V_k^{d,s}P_k^{d,s} \\
		& = \frac{1}{\sum_{k \in \left(\dot{\bigcup}_{j} \mathbb{T}_{x_j,y_j}^{d,s} \right) \cap \mathcal{T}^{d,s}} V_k^{d,s}} \sum_{k \in \left(\dot{\bigcup}_{j} \mathbb{T}_{x_j,y_j}^{d,s} \right) \cap \mathcal{T}^{d,s}} V_k^{d,s}P_k^{d,s}	\\	
		& = \frac{1}{\sum_{k \in \dot{\bigcup}_{j} \left( \mathbb{T}_{x_j,y_j}^{d,s}  \cap \mathcal{T}^{d,s} \right)} V_k^{d,s}} \sum_{k \in \dot{\bigcup}_{j} \left( \mathbb{T}_{x_j,y_j}^{d,s}  \cap \mathcal{T}^{d,s} \right)} V_k^{d,s}P_k^{d,s}	\\
		& = \frac{1}{\sum_j \sum_{k \in   \mathbb{T}_{x_j,y_j}^{d,s}  \cap \mathcal{T}^{d,s} } V_k^{d,s}}  \sum_j \sum_{k \in \mathbb{T}_{x_j,y_j}^{d,s}  \cap \mathcal{T}^{d,s} } V_k^{d,s}P_k^{d,s}	\\
		& = \frac{1}{\sum_j \mathbb{V}_{x_j,y_j}^{d,s}}  \sum_j \left( \frac{\mathbb{V}_{x_j,y_j}^{d,s}}{\mathbb{V}_{x_j,y_j}^{d,s}} \sum_{k \in \mathbb{T}_{x_j,y_j}^{d,s}  \cap \mathcal{T}^{d,s} } V_k^{d,s}P_k^{d,s}  \right)\\
		& = \frac{1}{\sum_j \mathbb{V}_{x_j,y_j}^{d,s}}  \sum_j \mathbb{V}_{x_j,y_j}^{d,s} {}_{x_j}\text{ID}^{d,s}_{y_j} = \frac{\sum_j {}_{x_j}\text{ID}^{d,s}_{y_j} \mathbb{V}_{x_j,y_j}^{d,s}}{\sum_j \mathbb{V}_{x_j,y_j}^{d,s}}.
		\end{aligned}
		\end{equation}
	\end{proof}
	
	\subsection*{Quarter-hourly products coefficients relevance}
	In this section, we present the tables consisting of the most relevant coefficients in the model \textbf{FI.elnet.penal.C} for each quarter-hourly product. 
	\begin{table}[!ht]
		\centering
		\begingroup\footnotesize
		\begin{tabular}{rllll}
			\hline
			& Importance: 1 & Importance: 2 & Importance: 3  & Importance: 4\\ 
	\hline
	00:00 & \cellcolor[rgb]{0.5,0.9,0.5} {$\text{IA}^{d,\text{00:00}}$ (27.23)} & \cellcolor[rgb]{0.826,1,0.5} {$\prescript{QH}{3.25}{\text{ID}}_{0.25}^{d, \text{00:00}}$ (11.35)} & \cellcolor[rgb]{0.996,1,0.5} {$\prescript{QH}{0.5}{\text{ID}}_{0.25}^{d-1, \text{21:15}}$ (7.09)} & \cellcolor[rgb]{1,0.926,0.5} {$\prescript{QH}{3.5}{\text{ID}}_{0.25}^{d, \text{00:00}}$ (5.52)} \\ 
	00:15 & \cellcolor[rgb]{0.952,1,0.5} {$\prescript{H}{1}{\text{ID}}_{0.25}^{d-1, \text{22:00}}$ (8.2)} & \cellcolor[rgb]{0.972,1,0.5} {$\text{IA}^{d,\text{00:15}}$ (7.71)} & \cellcolor[rgb]{1,1,0.5} {$\prescript{H}{3}{\text{ID}}_{0.25}^{d, \text{00:00}}$ (6.99)} & \cellcolor[rgb]{1,0.954,0.5} {$\prescript{H}{2}{\text{ID}}_{0.25}^{d-1, \text{23:00}}$ (6.09)} \\ 
	00:30 & \cellcolor[rgb]{0.973,1,0.5} {$\prescript{H}{2.75}{\text{ID}}_{0.25}^{d, \text{00:00}}$ (7.67)} & \cellcolor[rgb]{1,0.99,0.5} {$\prescript{H}{3.75}{\text{ID}}_{0.25}^{d, \text{01:00}}$ (6.8)} & \cellcolor[rgb]{1,0.982,0.5} {$\prescript{H}{4}{\text{ID}}_{0.25}^{d, \text{01:00}}$ (6.65)} & \cellcolor[rgb]{1,0.854,0.5} {$\text{IA}^{d,\text{00:30}}$ (4.07)} \\ 
	00:45 & \cellcolor[rgb]{0.506,0.902,0.5} {$\text{IA}^{d,\text{00:45}}$ (24.74)} & \cellcolor[rgb]{0.941,1,0.5} {$\prescript{H}{3.5}{\text{ID}}_{0.25}^{d, \text{01:00}}$ (8.48)} & \cellcolor[rgb]{0.995,1,0.5} {$\prescript{H}{2.5}{\text{ID}}_{0.25}^{d, \text{00:00}}$ (7.12)} & \cellcolor[rgb]{1,0.995,0.5} {$\prescript{QH}{3.25}{\text{ID}}_{0.25}^{d, \text{00:45}}$ (6.9)} \\ 
	01:00 & \cellcolor[rgb]{0.713,0.971,0.5} {$\text{IA}^{d,\text{01:00}}$ (15.75)} & \cellcolor[rgb]{0.958,1,0.5} {$\prescript{H}{2.25}{\text{ID}}_{0.25}^{d, \text{00:00}}$ (8.06)} & \cellcolor[rgb]{1,0.952,0.5} {$\prescript{QH}{3.25}{\text{ID}}_{0.25}^{d, \text{01:00}}$ (6.04)} & \cellcolor[rgb]{1,0.843,0.5} {$\prescript{H}{1.25}{\text{ID}}_{0.25}^{d-1, \text{23:00}}$ (3.86)} \\ 
	01:15 & \cellcolor[rgb]{0.664,0.955,0.5} {$\prescript{H}{2.25}{\text{ID}}_{0.25}^{d, \text{00:00}}$ (17.91)} & \cellcolor[rgb]{0.755,0.985,0.5} {$\prescript{H}{2}{\text{ID}}_{0.25}^{d, \text{00:00}}$ (13.95)} & \cellcolor[rgb]{0.799,1,0.5} {$\prescript{H}{3}{\text{ID}}_{0.25}^{d, \text{01:00}}$ (12.03)} & \cellcolor[rgb]{0.875,1,0.5} {$\prescript{H}{3.25}{\text{ID}}_{0.25}^{d, \text{01:00}}$ (10.13)} \\ 
	01:30 & \cellcolor[rgb]{0.795,0.998,0.5} {$\prescript{H}{2.75}{\text{ID}}_{0.25}^{d, \text{01:00}}$ (12.22)} & \cellcolor[rgb]{0.927,1,0.5} {$\prescript{H}{1.75}{\text{ID}}_{0.25}^{d, \text{00:00}}$ (8.82)} & \cellcolor[rgb]{0.958,1,0.5} {$\prescript{H}{3}{\text{ID}}_{0.25}^{d, \text{01:00}}$ (8.04)} & \cellcolor[rgb]{0.998,1,0.5} {$\prescript{H}{3.25}{\text{ID}}_{0.25}^{d, \text{01:00}}$ (7.06)} \\ 
	01:45 & \cellcolor[rgb]{0.618,0.939,0.5} {$\text{IA}^{d,\text{01:45}}$ (19.91)} & \cellcolor[rgb]{0.751,0.984,0.5} {$\prescript{H}{2.5}{\text{ID}}_{0.25}^{d, \text{01:00}}$ (14.14)} & \cellcolor[rgb]{1,0.955,0.5} {$\prescript{H}{1.5}{\text{ID}}_{0.25}^{d, \text{00:00}}$ (6.09)} & \cellcolor[rgb]{1,0.848,0.5} {$\prescript{QH}{2.25}{\text{ID}}_{0.25}^{d, \text{00:45}}$ (3.96)} \\ 
	02:00 & \cellcolor[rgb]{0.832,1,0.5} {$\text{IA}^{d,\text{02:00}}$ (11.19)} & \cellcolor[rgb]{0.88,1,0.5} {$\prescript{H}{1.25}{\text{ID}}_{0.25}^{d, \text{00:00}}$ (10)} & \cellcolor[rgb]{0.987,1,0.5} {$\prescript{H}{2.25}{\text{ID}}_{0.25}^{d, \text{01:00}}$ (7.33)} & \cellcolor[rgb]{1,0.932,0.5} {$\prescript{QH}{9.75}{\text{ID}}_{0.25}^{d, \text{02:00}}$ (5.64)} \\ 
	02:15 & \cellcolor[rgb]{0.828,1,0.5} {$\prescript{H}{1}{\text{ID}}_{0.25}^{d, \text{00:00}}$ (11.29)} & \cellcolor[rgb]{0.934,1,0.5} {$\prescript{H}{2}{\text{ID}}_{0.25}^{d, \text{01:00}}$ (8.64)} & \cellcolor[rgb]{0.954,1,0.5} {$\prescript{H}{3}{\text{ID}}_{0.25}^{d, \text{02:00}}$ (8.14)} & \cellcolor[rgb]{1,0.88,0.5} {$\prescript{H}{2.25}{\text{ID}}_{0.25}^{d, \text{01:00}}$ (4.59)} \\ 
	02:30 & \cellcolor[rgb]{0.87,1,0.5} {$\prescript{H}{1.75}{\text{ID}}_{0.25}^{d, \text{01:00}}$ (10.26)} & \cellcolor[rgb]{1,0.976,0.5} {$\prescript{H}{2.75}{\text{ID}}_{0.25}^{d, \text{02:00}}$ (6.53)} & \cellcolor[rgb]{1,0.905,0.5} {$\prescript{H}{5.75}{\text{ID}}_{0.25}^{d, \text{05:00}}$ (5.1)} & \cellcolor[rgb]{1,0.865,0.5} {$\prescript{H}{2.5}{\text{ID}}_{0.25}^{d, \text{01:00}}$ (4.29)} \\ 
	02:45 & \cellcolor[rgb]{0.88,1,0.5} {$\prescript{QH}{1.25}{\text{ID}}_{0.25}^{d, \text{00:45}}$ (9.99)} & \cellcolor[rgb]{0.886,1,0.5} {$\text{IA}^{d,\text{02:45}}$ (9.86)} & \cellcolor[rgb]{1,0.961,0.5} {$\text{IA}^{d,\text{03:45}}$ (6.21)} & \cellcolor[rgb]{1,0.92,0.5} {$\prescript{H}{2.5}{\text{ID}}_{0.25}^{d, \text{02:00}}$ (5.41)} \\ 
	03:00 & \cellcolor[rgb]{0.789,0.996,0.5} {$\text{IA}^{d,\text{03:00}}$ (12.46)} & \cellcolor[rgb]{0.999,1,0.5} {$\prescript{H}{1.25}{\text{ID}}_{0.25}^{d, \text{01:00}}$ (7.03)} & \cellcolor[rgb]{1,0.997,0.5} {$\prescript{QH}{0.5}{\text{ID}}_{0.25}^{d, \text{00:15}}$ (6.95)} & \cellcolor[rgb]{1,0.946,0.5} {$\prescript{QH}{3.25}{\text{ID}}_{0.25}^{d, \text{03:00}}$ (5.93)} \\ 
	03:15 & \cellcolor[rgb]{0.792,0.997,0.5} {$\prescript{H}{1}{\text{ID}}_{0.25}^{d, \text{01:00}}$ (12.36)} & \cellcolor[rgb]{0.926,1,0.5} {$\prescript{H}{1.25}{\text{ID}}_{0.25}^{d, \text{01:00}}$ (8.85)} & \cellcolor[rgb]{1,0.947,0.5} {$\prescript{H}{3}{\text{ID}}_{0.25}^{d, \text{03:00}}$ (5.95)} & \cellcolor[rgb]{1,0.87,0.5} {$\prescript{H}{2}{\text{ID}}_{0.25}^{d, \text{02:00}}$ (4.39)} \\ 
	03:30 & \cellcolor[rgb]{0.963,1,0.5} {$\prescript{H}{2.75}{\text{ID}}_{0.25}^{d, \text{03:00}}$ (7.92)} & \cellcolor[rgb]{1,0.905,0.5} {$\prescript{H}{4.75}{\text{ID}}_{0.25}^{d, \text{05:00}}$ (5.11)} & \cellcolor[rgb]{1,0.873,0.5} {$\prescript{H}{5}{\text{ID}}_{0.25}^{d, \text{05:00}}$ (4.45)} & \cellcolor[rgb]{1,0.845,0.5} {$\prescript{H}{1.75}{\text{ID}}_{0.25}^{d, \text{02:00}}$ (3.91)} \\ 
	03:45 & \cellcolor[rgb]{0.84,1,0.5} {$\text{IA}^{d,\text{03:45}}$ (10.99)} & \cellcolor[rgb]{0.956,1,0.5} {$\prescript{QH}{3.25}{\text{ID}}_{0.25}^{d, \text{03:45}}$ (8.11)} & \cellcolor[rgb]{1,0.973,0.5} {$\prescript{QH}{2.25}{\text{ID}}_{0.25}^{d, \text{02:45}}$ (6.46)} & \cellcolor[rgb]{1,0.967,0.5} {$\prescript{H}{4.5}{\text{ID}}_{0.25}^{d, \text{05:00}}$ (6.34)} \\ 
	04:00 & \cellcolor[rgb]{0.594,0.931,0.5} {$\text{IA}^{d,\text{04:00}}$ (20.91)} & \cellcolor[rgb]{0.88,1,0.5} {$\prescript{QH}{11.75}{\text{ID}}_{0.25}^{d, \text{04:00}}$ (9.99)} & \cellcolor[rgb]{1,0.989,0.5} {$\prescript{QH}{3.25}{\text{ID}}_{0.25}^{d, \text{04:00}}$ (6.78)} & \cellcolor[rgb]{1,0.829,0.5} {$\prescript{H}{2.75}{\text{ID}}_{0.25}^{d, \text{03:00}}$ (3.57)} \\ 
	04:15 & \cellcolor[rgb]{0.846,1,0.5} {$\prescript{H}{2}{\text{ID}}_{0.25}^{d, \text{03:00}}$ (10.84)} & \cellcolor[rgb]{0.923,1,0.5} {$\prescript{H}{3}{\text{ID}}_{0.25}^{d, \text{04:00}}$ (8.93)} & \cellcolor[rgb]{0.934,1,0.5} {$\prescript{H}{1}{\text{ID}}_{0.25}^{d, \text{02:00}}$ (8.64)} & \cellcolor[rgb]{1,0.873,0.5} {$\prescript{H}{2.25}{\text{ID}}_{0.25}^{d, \text{03:00}}$ (4.46)} \\ 
	04:30 & \cellcolor[rgb]{0.818,1,0.5} {$\prescript{H}{2.75}{\text{ID}}_{0.25}^{d, \text{04:00}}$ (11.54)} & \cellcolor[rgb]{0.977,1,0.5} {$\prescript{H}{3}{\text{ID}}_{0.25}^{d, \text{04:00}}$ (7.57)} & \cellcolor[rgb]{1,0.971,0.5} {$\prescript{H}{4.25}{\text{ID}}_{0.25}^{d, \text{05:00}}$ (6.43)} & \cellcolor[rgb]{1,0.959,0.5} {$\prescript{H}{4.5}{\text{ID}}_{0.25}^{d, \text{05:00}}$ (6.17)} \\ 
	04:45 & \cellcolor[rgb]{0.832,1,0.5} {$\text{IA}^{d,\text{04:45}}$ (11.21)} & \cellcolor[rgb]{0.938,1,0.5} {$\prescript{QH}{3.25}{\text{ID}}_{0.25}^{d, \text{04:45}}$ (8.54)} & \cellcolor[rgb]{1,0.899,0.5} {$\prescript{H}{3.5}{\text{ID}}_{0.25}^{d, \text{05:00}}$ (4.98)} & \cellcolor[rgb]{1,0.878,0.5} {$\prescript{H}{4.5}{\text{ID}}_{0.25}^{d, \text{05:00}}$ (4.55)} \\ 
	05:00 & \cellcolor[rgb]{0.664,0.955,0.5} {$\text{IA}^{d,\text{05:00}}$ (17.9)} & \cellcolor[rgb]{0.945,1,0.5} {$\prescript{QH}{3.25}{\text{ID}}_{0.25}^{d, \text{05:00}}$ (8.38)} & \cellcolor[rgb]{1,0.981,0.5} {$\prescript{H}{1.25}{\text{ID}}_{0.25}^{d, \text{03:00}}$ (6.62)} & \cellcolor[rgb]{1,0.798,0.5} {$\prescript{H}{2.25}{\text{ID}}_{0.25}^{d, \text{04:00}}$ (2.98)} \\ 
	05:15 & \cellcolor[rgb]{0.932,1,0.5} {$\prescript{H}{3}{\text{ID}}_{0.25}^{d, \text{05:00}}$ (8.7)} & \cellcolor[rgb]{0.967,1,0.5} {$\prescript{QH}{3.25}{\text{ID}}_{0.25}^{d, \text{05:15}}$ (7.82)} & \cellcolor[rgb]{1,0.901,0.5} {$\prescript{H}{2}{\text{ID}}_{0.25}^{d, \text{04:00}}$ (5.02)} & \cellcolor[rgb]{1,0.87,0.5} {$\prescript{H}{1}{\text{ID}}_{0.25}^{d, \text{03:00}}$ (4.39)} \\ 
	05:30 & \cellcolor[rgb]{0.932,1,0.5} {$\prescript{H}{3.75}{\text{ID}}_{0.25}^{d, \text{06:00}}$ (8.7)} & \cellcolor[rgb]{1,0.994,0.5} {$\prescript{H}{3}{\text{ID}}_{0.25}^{d, \text{05:00}}$ (6.87)} & \cellcolor[rgb]{1,0.913,0.5} {$\prescript{H}{2.75}{\text{ID}}_{0.25}^{d, \text{05:00}}$ (5.26)} & \cellcolor[rgb]{1,0.886,0.5} {$\prescript{QH}{3.25}{\text{ID}}_{0.25}^{d, \text{05:30}}$ (4.71)} \\ 
	05:45 & \cellcolor[rgb]{0.923,1,0.5} {$\text{IA}^{d,\text{05:45}}$ (8.93)} & \cellcolor[rgb]{0.963,1,0.5} {$\prescript{QH}{3.25}{\text{ID}}_{0.25}^{d, \text{05:45}}$ (7.93)} & \cellcolor[rgb]{1,0.972,0.5} {$\prescript{QH}{4.25}{\text{ID}}_{0.25}^{d, \text{05:45}}$ (6.43)} & \cellcolor[rgb]{1,0.964,0.5} {$\prescript{H}{3.5}{\text{ID}}_{0.25}^{d, \text{06:00}}$ (6.28)} \\ 
	06:00 & \cellcolor[rgb]{0.709,0.97,0.5} {$\prescript{QH}{3.25}{\text{ID}}_{0.25}^{d, \text{06:00}}$ (15.94)} & \cellcolor[rgb]{0.725,0.975,0.5} {$\text{IA}^{d,\text{06:00}}$ (15.24)} & \cellcolor[rgb]{1,0.916,0.5} {$\prescript{QH}{4.75}{\text{ID}}_{0.25}^{d, \text{06:00}}$ (5.32)} & \cellcolor[rgb]{1,0.864,0.5} {$\prescript{QH}{4.5}{\text{ID}}_{0.25}^{d, \text{06:00}}$ (4.27)} \\ 
	06:15 & \cellcolor[rgb]{0.706,0.969,0.5} {$\prescript{H}{3}{\text{ID}}_{0.25}^{d, \text{06:00}}$ (16.08)} & \cellcolor[rgb]{0.74,0.98,0.5} {$\prescript{H}{3.25}{\text{ID}}_{0.25}^{d, \text{06:00}}$ (14.59)} & \cellcolor[rgb]{0.912,1,0.5} {$\prescript{H}{3.5}{\text{ID}}_{0.25}^{d, \text{06:00}}$ (9.19)} & \cellcolor[rgb]{0.936,1,0.5} {$\prescript{QH}{3.25}{\text{ID}}_{0.25}^{d, \text{06:15}}$ (8.6)} \\ 
	06:30 & \cellcolor[rgb]{0.901,1,0.5} {$\prescript{H}{2.75}{\text{ID}}_{0.25}^{d, \text{06:00}}$ (9.46)} & \cellcolor[rgb]{0.943,1,0.5} {$\prescript{QH}{3.25}{\text{ID}}_{0.25}^{d, \text{06:30}}$ (8.42)} & \cellcolor[rgb]{1,0.959,0.5} {$\prescript{H}{3}{\text{ID}}_{0.25}^{d, \text{06:00}}$ (6.19)} & \cellcolor[rgb]{1,0.896,0.5} {$\prescript{H}{3.75}{\text{ID}}_{0.25}^{d, \text{07:00}}$ (4.91)} \\ 
	06:45 & \cellcolor[rgb]{0.748,0.983,0.5} {$\text{IA}^{d,\text{06:45}}$ (14.27)} & \cellcolor[rgb]{0.786,0.995,0.5} {$\prescript{QH}{3.25}{\text{ID}}_{0.25}^{d, \text{06:45}}$ (12.63)} & \cellcolor[rgb]{0.982,1,0.5} {$\prescript{QH}{3.75}{\text{ID}}_{0.25}^{d, \text{06:45}}$ (7.45)} & \cellcolor[rgb]{1,1,0.5} {$\prescript{H}{2.5}{\text{ID}}_{0.25}^{d, \text{06:00}}$ (7)} \\ 
	07:00 & \cellcolor[rgb]{0.52,0.907,0.5} {$\text{IA}^{d,\text{07:00}}$ (24.12)} & \cellcolor[rgb]{0.834,1,0.5} {$\prescript{H}{2.25}{\text{ID}}_{0.25}^{d, \text{06:00}}$ (11.14)} & \cellcolor[rgb]{0.919,1,0.5} {$\prescript{QH}{3.25}{\text{ID}}_{0.25}^{d, \text{07:00}}$ (9.02)} & \cellcolor[rgb]{0.98,1,0.5} {$\prescript{H}{2.5}{\text{ID}}_{0.25}^{d, \text{06:00}}$ (7.5)} \\ 
	07:15 & \cellcolor[rgb]{0.744,0.981,0.5} {$\prescript{H}{3}{\text{ID}}_{0.25}^{d, \text{07:00}}$ (14.41)} & \cellcolor[rgb]{0.849,1,0.5} {$\prescript{H}{3.25}{\text{ID}}_{0.25}^{d, \text{07:00}}$ (10.77)} & \cellcolor[rgb]{0.879,1,0.5} {$\prescript{H}{2}{\text{ID}}_{0.25}^{d, \text{06:00}}$ (10.03)} & \cellcolor[rgb]{1,0.934,0.5} {$\prescript{H}{3.5}{\text{ID}}_{0.25}^{d, \text{07:00}}$ (5.68)} \\ 
	07:30 & \cellcolor[rgb]{0.828,1,0.5} {$\prescript{H}{2.75}{\text{ID}}_{0.25}^{d, \text{07:00}}$ (11.29)} & \cellcolor[rgb]{0.863,1,0.5} {$\prescript{H}{3}{\text{ID}}_{0.25}^{d, \text{07:00}}$ (10.43)} & \cellcolor[rgb]{1,0.991,0.5} {$\prescript{H}{3.25}{\text{ID}}_{0.25}^{d, \text{07:00}}$ (6.81)} & \cellcolor[rgb]{1,0.977,0.5} {$\prescript{H}{1.75}{\text{ID}}_{0.25}^{d, \text{06:00}}$ (6.53)} \\ 
	07:45 & \cellcolor[rgb]{0.847,1,0.5} {$\text{IA}^{d,\text{07:45}}$ (10.82)} & \cellcolor[rgb]{0.853,1,0.5} {$\prescript{H}{2.5}{\text{ID}}_{0.25}^{d, \text{07:00}}$ (10.67)} & \cellcolor[rgb]{0.938,1,0.5} {$\prescript{QH}{3.25}{\text{ID}}_{0.25}^{d, \text{07:45}}$ (8.55)} & \cellcolor[rgb]{1,0.913,0.5} {$\prescript{QH}{2.25}{\text{ID}}_{0.25}^{d, \text{06:45}}$ (5.25)} \\ 
			\hline
		\end{tabular}
		\endgroup
		\caption{Most relevant coefficients in the model \textbf{FI.elnet.penal.C} for each quarter-hourly product from 00:00 to 07:45} 
		\label{tab:beta_qh_1}
	\end{table}
	
	\begin{table}[!ht]
		\centering
		\begingroup\footnotesize
		\begin{tabular}{rllll}
			\hline
			& Importance: 1 & Importance: 2 & Importance: 3  & Importance: 4\\ 
			\hline
			08:00 & \cellcolor[rgb]{0.611,0.937,0.5} {$\text{IA}^{d,\text{08:00}}$ (20.21)} & \cellcolor[rgb]{0.787,0.996,0.5} {$\prescript{H}{2.25}{\text{ID}}_{0.25}^{d, \text{07:00}}$ (12.55)} & \cellcolor[rgb]{0.99,1,0.5} {$\prescript{H}{2.5}{\text{ID}}_{0.25}^{d, \text{07:00}}$ (7.26)} & \cellcolor[rgb]{1,0.988,0.5} {$\prescript{H}{1.25}{\text{ID}}_{0.25}^{d, \text{06:00}}$ (6.77)} \\ 
			08:15 & \cellcolor[rgb]{0.784,0.995,0.5} {$\prescript{H}{2}{\text{ID}}_{0.25}^{d, \text{07:00}}$ (12.69)} & \cellcolor[rgb]{0.785,0.995,0.5} {$\prescript{H}{3}{\text{ID}}_{0.25}^{d, \text{08:00}}$ (12.65)} & \cellcolor[rgb]{0.924,1,0.5} {$\prescript{H}{2.25}{\text{ID}}_{0.25}^{d, \text{07:00}}$ (8.89)} & \cellcolor[rgb]{0.931,1,0.5} {$\prescript{H}{3.25}{\text{ID}}_{0.25}^{d, \text{08:00}}$ (8.72)} \\ 
			08:30 & \cellcolor[rgb]{0.924,1,0.5} {$\prescript{H}{2.75}{\text{ID}}_{0.25}^{d, \text{08:00}}$ (8.9)} & \cellcolor[rgb]{0.932,1,0.5} {$\prescript{H}{3.75}{\text{ID}}_{0.25}^{d, \text{09:00}}$ (8.71)} & \cellcolor[rgb]{0.96,1,0.5} {$\prescript{H}{1.75}{\text{ID}}_{0.25}^{d, \text{07:00}}$ (8.01)} & \cellcolor[rgb]{1,0.94,0.5} {$\prescript{H}{4}{\text{ID}}_{0.25}^{d, \text{09:00}}$ (5.79)} \\ 
			08:45 & \cellcolor[rgb]{0.959,1,0.5} {$\prescript{QH}{3.25}{\text{ID}}_{0.25}^{d, \text{08:45}}$ (8.03)} & \cellcolor[rgb]{1,0.956,0.5} {$\prescript{QH}{3.5}{\text{ID}}_{0.25}^{d, \text{08:45}}$ (6.11)} & \cellcolor[rgb]{1,0.908,0.5} {$\prescript{H}{4.5}{\text{ID}}_{0.25}^{d, \text{10:00}}$ (5.16)} & \cellcolor[rgb]{1,0.9,0.5} {$\prescript{H}{3.5}{\text{ID}}_{0.25}^{d, \text{09:00}}$ (4.99)} \\ 
			09:00 & \cellcolor[rgb]{0.793,0.998,0.5} {$\prescript{H}{2.25}{\text{ID}}_{0.25}^{d, \text{08:00}}$ (12.31)} & \cellcolor[rgb]{0.868,1,0.5} {$\text{IA}^{d,\text{09:00}}$ (10.3)} & \cellcolor[rgb]{0.881,1,0.5} {$\prescript{H}{1.25}{\text{ID}}_{0.25}^{d, \text{07:00}}$ (9.98)} & \cellcolor[rgb]{1,0.978,0.5} {$\prescript{H}{2.5}{\text{ID}}_{0.25}^{d, \text{08:00}}$ (6.55)} \\ 
			09:15 & \cellcolor[rgb]{0.787,0.996,0.5} {$\prescript{H}{3}{\text{ID}}_{0.25}^{d, \text{09:00}}$ (12.57)} & \cellcolor[rgb]{0.95,1,0.5} {$\prescript{H}{2}{\text{ID}}_{0.25}^{d, \text{08:00}}$ (8.24)} & \cellcolor[rgb]{0.986,1,0.5} {$\prescript{H}{3.25}{\text{ID}}_{0.25}^{d, \text{09:00}}$ (7.34)} & \cellcolor[rgb]{1,0.912,0.5} {$\prescript{H}{3.5}{\text{ID}}_{0.25}^{d, \text{09:00}}$ (5.24)} \\ 
			09:30 & \cellcolor[rgb]{0.819,1,0.5} {$\prescript{H}{2.75}{\text{ID}}_{0.25}^{d, \text{09:00}}$ (11.52)} & \cellcolor[rgb]{0.861,1,0.5} {$\prescript{H}{3.75}{\text{ID}}_{0.25}^{d, \text{10:00}}$ (10.47)} & \cellcolor[rgb]{1,0.998,0.5} {$\prescript{H}{4.75}{\text{ID}}_{0.25}^{d, \text{11:00}}$ (6.95)} & \cellcolor[rgb]{1,0.925,0.5} {$\prescript{QH}{1.5}{\text{ID}}_{0.25}^{d, \text{07:45}}$ (5.5)} \\ 
			09:45 & \cellcolor[rgb]{0.854,1,0.5} {$\prescript{QH}{2.25}{\text{ID}}_{0.25}^{d, \text{08:45}}$ (10.65)} & \cellcolor[rgb]{0.864,1,0.5} {$\prescript{H}{3.5}{\text{ID}}_{0.25}^{d, \text{10:00}}$ (10.41)} & \cellcolor[rgb]{0.95,1,0.5} {$\prescript{H}{4.5}{\text{ID}}_{0.25}^{d, \text{11:00}}$ (8.26)} & \cellcolor[rgb]{1,0.978,0.5} {$\prescript{H}{2.5}{\text{ID}}_{0.25}^{d, \text{09:00}}$ (6.57)} \\ 
			10:00 & \cellcolor[rgb]{0.918,1,0.5} {$\prescript{H}{2.25}{\text{ID}}_{0.25}^{d, \text{09:00}}$ (9.05)} & \cellcolor[rgb]{1,0.986,0.5} {$\prescript{H}{1.25}{\text{ID}}_{0.25}^{d, \text{08:00}}$ (6.71)} & \cellcolor[rgb]{1,0.972,0.5} {$\prescript{QH}{3.25}{\text{ID}}_{0.25}^{d, \text{10:00}}$ (6.44)} & \cellcolor[rgb]{1,0.878,0.5} {$\prescript{H}{3.25}{\text{ID}}_{0.25}^{d, \text{10:00}}$ (4.55)} \\ 
			10:15 & \cellcolor[rgb]{0.845,1,0.5} {$\prescript{H}{3}{\text{ID}}_{0.25}^{d, \text{10:00}}$ (10.87)} & \cellcolor[rgb]{0.891,1,0.5} {$\prescript{H}{2}{\text{ID}}_{0.25}^{d, \text{09:00}}$ (9.71)} & \cellcolor[rgb]{1,0.99,0.5} {$\prescript{H}{3.25}{\text{ID}}_{0.25}^{d, \text{10:00}}$ (6.8)} & \cellcolor[rgb]{1,0.965,0.5} {$\prescript{H}{2.25}{\text{ID}}_{0.25}^{d, \text{09:00}}$ (6.3)} \\ 
			10:30 & \cellcolor[rgb]{0.977,1,0.5} {$\prescript{H}{3.75}{\text{ID}}_{0.25}^{d, \text{11:00}}$ (7.57)} & \cellcolor[rgb]{1,0.976,0.5} {$\prescript{H}{4}{\text{ID}}_{0.25}^{d, \text{11:00}}$ (6.52)} & \cellcolor[rgb]{1,0.966,0.5} {$\prescript{QH}{1.5}{\text{ID}}_{0.25}^{d, \text{08:45}}$ (6.31)} & \cellcolor[rgb]{1,0.949,0.5} {$\prescript{H}{2.75}{\text{ID}}_{0.25}^{d, \text{10:00}}$ (5.97)} \\ 
			10:45 & \cellcolor[rgb]{0.891,1,0.5} {$\text{IA}^{d,\text{10:45}}$ (9.72)} & \cellcolor[rgb]{0.929,1,0.5} {$\prescript{H}{3.5}{\text{ID}}_{0.25}^{d, \text{11:00}}$ (8.78)} & \cellcolor[rgb]{0.938,1,0.5} {$\prescript{QH}{3.25}{\text{ID}}_{0.25}^{d, \text{10:45}}$ (8.55)} & \cellcolor[rgb]{0.977,1,0.5} {$\prescript{QH}{1.25}{\text{ID}}_{0.25}^{d, \text{08:45}}$ (7.58)} \\ 
			11:00 & \cellcolor[rgb]{0.914,1,0.5} {$\prescript{H}{2.25}{\text{ID}}_{0.25}^{d, \text{10:00}}$ (9.14)} & \cellcolor[rgb]{0.956,1,0.5} {$\prescript{H}{3.25}{\text{ID}}_{0.25}^{d, \text{11:00}}$ (8.1)} & \cellcolor[rgb]{0.988,1,0.5} {$\prescript{H}{1.25}{\text{ID}}_{0.25}^{d, \text{09:00}}$ (7.3)} & \cellcolor[rgb]{1,0.881,0.5} {$\prescript{QH}{0.75}{\text{ID}}_{0.25}^{d, \text{08:30}}$ (4.61)} \\ 
			11:15 & \cellcolor[rgb]{0.82,1,0.5} {$\prescript{H}{3}{\text{ID}}_{0.25}^{d, \text{11:00}}$ (11.51)} & \cellcolor[rgb]{0.943,1,0.5} {$\prescript{H}{2}{\text{ID}}_{0.25}^{d, \text{10:00}}$ (8.43)} & \cellcolor[rgb]{0.964,1,0.5} {$\prescript{H}{3.25}{\text{ID}}_{0.25}^{d, \text{11:00}}$ (7.91)} & \cellcolor[rgb]{1,0.913,0.5} {$\prescript{H}{4}{\text{ID}}_{0.25}^{d, \text{12:00}}$ (5.25)} \\ 
			11:30 & \cellcolor[rgb]{0.926,1,0.5} {$\prescript{H}{2.75}{\text{ID}}_{0.25}^{d, \text{11:00}}$ (8.85)} & \cellcolor[rgb]{0.937,1,0.5} {$\prescript{QH}{2.5}{\text{ID}}_{0.25}^{d, \text{10:45}}$ (8.58)} & \cellcolor[rgb]{0.967,1,0.5} {$\prescript{H}{3.75}{\text{ID}}_{0.25}^{d, \text{12:00}}$ (7.82)} & \cellcolor[rgb]{1,0.97,0.5} {$\prescript{H}{3}{\text{ID}}_{0.25}^{d, \text{11:00}}$ (6.4)} \\ 
			11:45 & \cellcolor[rgb]{0.975,1,0.5} {$\prescript{QH}{3.25}{\text{ID}}_{0.25}^{d, \text{11:45}}$ (7.62)} & \cellcolor[rgb]{1,0.976,0.5} {$\prescript{QH}{2.25}{\text{ID}}_{0.25}^{d, \text{10:45}}$ (6.51)} & \cellcolor[rgb]{1,0.92,0.5} {$\prescript{H}{4.5}{\text{ID}}_{0.25}^{d, \text{13:00}}$ (5.39)} & \cellcolor[rgb]{1,0.915,0.5} {$\prescript{H}{3.5}{\text{ID}}_{0.25}^{d, \text{12:00}}$ (5.3)} \\ 
			12:00 & \cellcolor[rgb]{0.985,1,0.5} {$\prescript{QH}{3.25}{\text{ID}}_{0.25}^{d, \text{12:00}}$ (7.38)} & \cellcolor[rgb]{1,0.953,0.5} {$\prescript{H}{2.25}{\text{ID}}_{0.25}^{d, \text{11:00}}$ (6.06)} & \cellcolor[rgb]{1,0.898,0.5} {$\prescript{H}{1.25}{\text{ID}}_{0.25}^{d, \text{10:00}}$ (4.97)} & \cellcolor[rgb]{1,0.873,0.5} {$\prescript{QH}{1.75}{\text{ID}}_{0.25}^{d, \text{10:15}}$ (4.47)} \\ 
			12:15 & \cellcolor[rgb]{0.925,1,0.5} {$\prescript{QH}{1.75}{\text{ID}}_{0.25}^{d, \text{10:45}}$ (8.87)} & \cellcolor[rgb]{0.974,1,0.5} {$\prescript{H}{3}{\text{ID}}_{0.25}^{d, \text{12:00}}$ (7.66)} & \cellcolor[rgb]{1,0.894,0.5} {$\prescript{H}{4}{\text{ID}}_{0.25}^{d, \text{13:00}}$ (4.88)} & \cellcolor[rgb]{1,0.893,0.5} {$\prescript{H}{2}{\text{ID}}_{0.25}^{d, \text{11:00}}$ (4.85)} \\ 
			12:30 & \cellcolor[rgb]{0.886,1,0.5} {$\prescript{H}{3.75}{\text{ID}}_{0.25}^{d, \text{13:00}}$ (9.85)} & \cellcolor[rgb]{1,0.983,0.5} {$\prescript{H}{2.75}{\text{ID}}_{0.25}^{d, \text{12:00}}$ (6.66)} & \cellcolor[rgb]{1,0.958,0.5} {$\prescript{QH}{2.5}{\text{ID}}_{0.25}^{d, \text{11:45}}$ (6.17)} & \cellcolor[rgb]{1,0.932,0.5} {$\prescript{H}{4.75}{\text{ID}}_{0.25}^{d, \text{14:00}}$ (5.64)} \\ 
			12:45 & \cellcolor[rgb]{0.909,1,0.5} {$\prescript{QH}{2.25}{\text{ID}}_{0.25}^{d, \text{11:45}}$ (9.27)} & \cellcolor[rgb]{0.925,1,0.5} {$\prescript{H}{3.5}{\text{ID}}_{0.25}^{d, \text{13:00}}$ (8.86)} & \cellcolor[rgb]{0.964,1,0.5} {$\prescript{H}{4.5}{\text{ID}}_{0.25}^{d, \text{14:00}}$ (7.9)} & \cellcolor[rgb]{1,0.876,0.5} {$\prescript{QH}{1.25}{\text{ID}}_{0.25}^{d, \text{10:45}}$ (4.52)} \\ 
			13:00 & \cellcolor[rgb]{1,0.997,0.5} {$\prescript{QH}{3.25}{\text{ID}}_{0.25}^{d, \text{13:00}}$ (6.93)} & \cellcolor[rgb]{1,0.959,0.5} {$\prescript{H}{2.25}{\text{ID}}_{0.25}^{d, \text{12:00}}$ (6.17)} & \cellcolor[rgb]{1,0.911,0.5} {$\prescript{QH}{0.75}{\text{ID}}_{0.25}^{d, \text{10:30}}$ (5.23)} & \cellcolor[rgb]{1,0.886,0.5} {$\prescript{QH}{2.25}{\text{ID}}_{0.25}^{d, \text{12:00}}$ (4.72)} \\ 
			13:15 & \cellcolor[rgb]{0.866,1,0.5} {$\prescript{H}{3}{\text{ID}}_{0.25}^{d, \text{13:00}}$ (10.34)} & \cellcolor[rgb]{1,0.992,0.5} {$\prescript{H}{2}{\text{ID}}_{0.25}^{d, \text{12:00}}$ (6.85)} & \cellcolor[rgb]{1,0.9,0.5} {$\prescript{QH}{1.5}{\text{ID}}_{0.25}^{d, \text{11:30}}$ (5.01)} & \cellcolor[rgb]{1,0.874,0.5} {$\prescript{H}{3.25}{\text{ID}}_{0.25}^{d, \text{13:00}}$ (4.48)} \\ 
			13:30 & \cellcolor[rgb]{0.972,1,0.5} {$\prescript{H}{2.75}{\text{ID}}_{0.25}^{d, \text{13:00}}$ (7.7)} & \cellcolor[rgb]{0.994,1,0.5} {$\prescript{QH}{2.5}{\text{ID}}_{0.25}^{d, \text{12:45}}$ (7.14)} & \cellcolor[rgb]{1,0.947,0.5} {$\prescript{H}{3.75}{\text{ID}}_{0.25}^{d, \text{14:00}}$ (5.94)} & \cellcolor[rgb]{1,0.917,0.5} {$\prescript{H}{3}{\text{ID}}_{0.25}^{d, \text{13:00}}$ (5.34)} \\ 
			13:45 & \cellcolor[rgb]{0.852,1,0.5} {$\prescript{QH}{3.25}{\text{ID}}_{0.25}^{d, \text{13:45}}$ (10.69)} & \cellcolor[rgb]{0.927,1,0.5} {$\prescript{QH}{2.25}{\text{ID}}_{0.25}^{d, \text{12:45}}$ (8.82)} & \cellcolor[rgb]{1,0.928,0.5} {$\text{IA}^{d,\text{13:45}}$ (5.56)} & \cellcolor[rgb]{1,0.878,0.5} {$\prescript{H}{3.5}{\text{ID}}_{0.25}^{d, \text{14:00}}$ (4.57)} \\ 
			14:00 & \cellcolor[rgb]{0.926,1,0.5} {$\prescript{QH}{1}{\text{ID}}_{0.25}^{d, \text{11:45}}$ (8.86)} & \cellcolor[rgb]{0.983,1,0.5} {$\prescript{QH}{3.25}{\text{ID}}_{0.25}^{d, \text{14:00}}$ (7.43)} & \cellcolor[rgb]{1,0.937,0.5} {$\prescript{QH}{0.75}{\text{ID}}_{0.25}^{d, \text{11:30}}$ (5.75)} & \cellcolor[rgb]{1,0.931,0.5} {$\prescript{H}{1.25}{\text{ID}}_{0.25}^{d, \text{12:00}}$ (5.62)} \\ 
			14:15 & \cellcolor[rgb]{0.954,1,0.5} {$\prescript{QH}{1.75}{\text{ID}}_{0.25}^{d, \text{12:45}}$ (8.16)} & \cellcolor[rgb]{0.955,1,0.5} {$\prescript{H}{3}{\text{ID}}_{0.25}^{d, \text{14:00}}$ (8.12)} & \cellcolor[rgb]{1,0.953,0.5} {$\prescript{QH}{0.75}{\text{ID}}_{0.25}^{d, \text{11:45}}$ (6.06)} & \cellcolor[rgb]{1,0.899,0.5} {$\prescript{H}{2}{\text{ID}}_{0.25}^{d, \text{13:00}}$ (4.97)} \\ 
			14:30 & \cellcolor[rgb]{1,0.922,0.5} {$\prescript{QH}{2.5}{\text{ID}}_{0.25}^{d, \text{13:45}}$ (5.43)} & \cellcolor[rgb]{1,0.851,0.5} {$\prescript{QH}{1.5}{\text{ID}}_{0.25}^{d, \text{12:45}}$ (4.02)} & \cellcolor[rgb]{1,0.85,0.5} {$\prescript{H}{3.75}{\text{ID}}_{0.25}^{d, \text{15:00}}$ (4)} & \cellcolor[rgb]{1,0.8,0.5} {$\prescript{H}{2.75}{\text{ID}}_{0.25}^{d, \text{14:00}}$ (3.01)} \\ 
			14:45 & \cellcolor[rgb]{0.913,1,0.5} {$\prescript{QH}{3.25}{\text{ID}}_{0.25}^{d, \text{14:45}}$ (9.19)} & \cellcolor[rgb]{0.913,1,0.5} {$\text{IA}^{d,\text{14:45}}$ (9.18)} & \cellcolor[rgb]{1,0.97,0.5} {$\prescript{QH}{2.25}{\text{ID}}_{0.25}^{d, \text{13:45}}$ (6.4)} & \cellcolor[rgb]{1,0.852,0.5} {$\prescript{H}{4.5}{\text{ID}}_{0.25}^{d, \text{16:00}}$ (4.04)} \\ 
			15:00 & \cellcolor[rgb]{0.893,1,0.5} {$\text{IA}^{d,\text{15:00}}$ (9.68)} & \cellcolor[rgb]{0.965,1,0.5} {$\prescript{QH}{1}{\text{ID}}_{0.25}^{d, \text{12:45}}$ (7.86)} & \cellcolor[rgb]{0.975,1,0.5} {$\prescript{QH}{3.25}{\text{ID}}_{0.25}^{d, \text{15:00}}$ (7.63)} & \cellcolor[rgb]{0.993,1,0.5} {$\prescript{QH}{3.5}{\text{ID}}_{0.25}^{d, \text{15:00}}$ (7.16)} \\ 
			15:15 & \cellcolor[rgb]{0.935,1,0.5} {$\prescript{H}{3}{\text{ID}}_{0.25}^{d, \text{15:00}}$ (8.62)} & \cellcolor[rgb]{1,0.913,0.5} {$\prescript{H}{2}{\text{ID}}_{0.25}^{d, \text{14:00}}$ (5.26)} & \cellcolor[rgb]{1,0.849,0.5} {$\prescript{QH}{1.75}{\text{ID}}_{0.25}^{d, \text{13:45}}$ (3.98)} & \cellcolor[rgb]{1,0.834,0.5} {$\prescript{H}{3.25}{\text{ID}}_{0.25}^{d, \text{15:00}}$ (3.68)} \\ 
			15:30 & \cellcolor[rgb]{0.975,1,0.5} {$\prescript{H}{3.75}{\text{ID}}_{0.25}^{d, \text{16:00}}$ (7.62)} & \cellcolor[rgb]{1,0.875,0.5} {$\prescript{QH}{4.75}{\text{ID}}_{0.25}^{d, \text{17:00}}$ (4.5)} & \cellcolor[rgb]{1,0.856,0.5} {$\prescript{QH}{3.25}{\text{ID}}_{0.25}^{d, \text{15:30}}$ (4.12)} & \cellcolor[rgb]{1,0.835,0.5} {$\prescript{QH}{4.25}{\text{ID}}_{0.25}^{d, \text{16:30}}$ (3.69)} \\ 
			15:45 & \cellcolor[rgb]{0.835,1,0.5} {$\prescript{QH}{3.25}{\text{ID}}_{0.25}^{d, \text{15:45}}$ (11.13)} & \cellcolor[rgb]{0.965,1,0.5} {$\prescript{QH}{2.25}{\text{ID}}_{0.25}^{d, \text{14:45}}$ (7.88)} & \cellcolor[rgb]{1,0.97,0.5} {$\text{IA}^{d,\text{15:45}}$ (6.41)} & \cellcolor[rgb]{1,0.831,0.5} {$\prescript{QH}{23.25}{\text{ID}}_{0.25}^{d, \text{15:45}}$ (3.62)} \\ 
			\hline
		\end{tabular}
		\endgroup
		\caption{Most relevant coefficients in the model \textbf{FI.elnet.penal.C} for each quarter-hourly product from 08:00 to 15:45} 
		\label{tab:beta_qh_2}
	\end{table}

	\begin{table}[!ht]
		\centering
		\begingroup\footnotesize
		\begin{tabular}{rllll}
			\hline
			& Importance: 1 & Importance: 2 & Importance: 3  & Importance: 4\\ 
			\hline
			16:00 & \cellcolor[rgb]{0.93,1,0.5} {$\prescript{QH}{3.25}{\text{ID}}_{0.25}^{d, \text{16:00}}$ (8.75)} & \cellcolor[rgb]{1,0.946,0.5} {$\prescript{QH}{2.25}{\text{ID}}_{0.25}^{d, \text{15:00}}$ (5.93)} & \cellcolor[rgb]{1,0.938,0.5} {$\text{IA}^{d,\text{16:00}}$ (5.76)} & \cellcolor[rgb]{1,0.919,0.5} {$\prescript{QH}{1}{\text{ID}}_{0.25}^{d, \text{13:45}}$ (5.37)} \\ 
			16:15 & \cellcolor[rgb]{0.855,1,0.5} {$\prescript{H}{3}{\text{ID}}_{0.25}^{d, \text{16:00}}$ (10.64)} & \cellcolor[rgb]{0.934,1,0.5} {$\prescript{H}{2}{\text{ID}}_{0.25}^{d, \text{15:00}}$ (8.65)} & \cellcolor[rgb]{1,0.895,0.5} {$\prescript{H}{3.25}{\text{ID}}_{0.25}^{d, \text{16:00}}$ (4.9)} & \cellcolor[rgb]{1,0.891,0.5} {$\prescript{QH}{0.75}{\text{ID}}_{0.25}^{d, \text{13:45}}$ (4.83)} \\ 
			16:30 & \cellcolor[rgb]{0.926,1,0.5} {$\prescript{QH}{2.5}{\text{ID}}_{0.25}^{d, \text{15:45}}$ (8.85)} & \cellcolor[rgb]{1,0.99,0.5} {$\prescript{QH}{1.5}{\text{ID}}_{0.25}^{d, \text{14:45}}$ (6.8)} & \cellcolor[rgb]{1,0.985,0.5} {$\prescript{QH}{3.25}{\text{ID}}_{0.25}^{d, \text{16:30}}$ (6.69)} & \cellcolor[rgb]{1,0.971,0.5} {$\prescript{H}{3.75}{\text{ID}}_{0.25}^{d, \text{17:00}}$ (6.43)} \\ 
			16:45 & \cellcolor[rgb]{0.75,0.983,0.5} {$\prescript{QH}{3.25}{\text{ID}}_{0.25}^{d, \text{16:45}}$ (14.18)} & \cellcolor[rgb]{1,0.963,0.5} {$\prescript{QH}{2.25}{\text{ID}}_{0.25}^{d, \text{15:45}}$ (6.27)} & \cellcolor[rgb]{1,0.961,0.5} {$\prescript{QH}{3.5}{\text{ID}}_{0.25}^{d, \text{16:45}}$ (6.22)} & \cellcolor[rgb]{1,0.937,0.5} {$\text{IA}^{d,\text{16:45}}$ (5.75)} \\ 
			17:00 & \cellcolor[rgb]{0.788,0.996,0.5} {$\prescript{QH}{3.25}{\text{ID}}_{0.25}^{d, \text{17:00}}$ (12.52)} & \cellcolor[rgb]{0.979,1,0.5} {$\text{IA}^{d,\text{17:00}}$ (7.51)} & \cellcolor[rgb]{1,0.928,0.5} {$\prescript{QH}{0.75}{\text{ID}}_{0.25}^{d, \text{14:30}}$ (5.56)} & \cellcolor[rgb]{1,0.903,0.5} {$\prescript{QH}{2.25}{\text{ID}}_{0.25}^{d, \text{16:00}}$ (5.07)} \\ 
			17:15 & \cellcolor[rgb]{0.902,1,0.5} {$\prescript{H}{3}{\text{ID}}_{0.25}^{d, \text{17:00}}$ (9.46)} & \cellcolor[rgb]{1,0.993,0.5} {$\prescript{H}{2}{\text{ID}}_{0.25}^{d, \text{16:00}}$ (6.86)} & \cellcolor[rgb]{1,0.948,0.5} {$\prescript{QH}{3.25}{\text{ID}}_{0.25}^{d, \text{17:15}}$ (5.95)} & \cellcolor[rgb]{1,0.945,0.5} {$\prescript{H}{3.25}{\text{ID}}_{0.25}^{d, \text{17:00}}$ (5.89)} \\ 
			17:30 & \cellcolor[rgb]{0.824,1,0.5} {$\prescript{QH}{3.25}{\text{ID}}_{0.25}^{d, \text{17:30}}$ (11.39)} & \cellcolor[rgb]{0.935,1,0.5} {$\prescript{H}{2.75}{\text{ID}}_{0.25}^{d, \text{17:00}}$ (8.62)} & \cellcolor[rgb]{0.956,1,0.5} {$\prescript{H}{3.75}{\text{ID}}_{0.25}^{d, \text{18:00}}$ (8.1)} & \cellcolor[rgb]{0.971,1,0.5} {$\prescript{QH}{1.5}{\text{ID}}_{0.25}^{d, \text{15:45}}$ (7.71)} \\ 
			17:45 & \cellcolor[rgb]{0.686,0.962,0.5} {$\prescript{QH}{3.25}{\text{ID}}_{0.25}^{d, \text{17:45}}$ (16.95)} & \cellcolor[rgb]{0.974,1,0.5} {$\prescript{H}{3.5}{\text{ID}}_{0.25}^{d, \text{18:00}}$ (7.65)} & \cellcolor[rgb]{0.983,1,0.5} {$\prescript{QH}{3.5}{\text{ID}}_{0.25}^{d, \text{17:45}}$ (7.41)} & \cellcolor[rgb]{1,0.999,0.5} {$\prescript{QH}{4}{\text{ID}}_{0.25}^{d, \text{17:45}}$ (6.98)} \\ 
			18:00 & \cellcolor[rgb]{0.803,1,0.5} {$\prescript{QH}{3.25}{\text{ID}}_{0.25}^{d, \text{18:00}}$ (11.91)} & \cellcolor[rgb]{0.857,1,0.5} {$\text{IA}^{d,\text{18:00}}$ (10.56)} & \cellcolor[rgb]{0.932,1,0.5} {$\prescript{QH}{3.5}{\text{ID}}_{0.25}^{d, \text{18:00}}$ (8.7)} & \cellcolor[rgb]{1,0.92,0.5} {$\prescript{QH}{3.75}{\text{ID}}_{0.25}^{d, \text{18:00}}$ (5.39)} \\ 
			18:15 & \cellcolor[rgb]{0.788,0.996,0.5} {$\prescript{H}{3}{\text{ID}}_{0.25}^{d, \text{18:00}}$ (12.5)} & \cellcolor[rgb]{0.868,1,0.5} {$\prescript{H}{3.25}{\text{ID}}_{0.25}^{d, \text{18:00}}$ (10.3)} & \cellcolor[rgb]{1,0.95,0.5} {$\prescript{H}{3.5}{\text{ID}}_{0.25}^{d, \text{18:00}}$ (6)} & \cellcolor[rgb]{1,0.89,0.5} {$\prescript{H}{3.75}{\text{ID}}_{0.25}^{d, \text{18:00}}$ (4.8)} \\ 
			18:30 & \cellcolor[rgb]{0.864,1,0.5} {$\prescript{H}{3.75}{\text{ID}}_{0.25}^{d, \text{19:00}}$ (10.41)} & \cellcolor[rgb]{0.872,1,0.5} {$\prescript{H}{2.75}{\text{ID}}_{0.25}^{d, \text{18:00}}$ (10.2)} & \cellcolor[rgb]{0.978,1,0.5} {$\prescript{QH}{3.5}{\text{ID}}_{0.25}^{d, \text{18:45}}$ (7.56)} & \cellcolor[rgb]{1,0.967,0.5} {$\prescript{H}{4}{\text{ID}}_{0.25}^{d, \text{19:00}}$ (6.34)} \\ 
			18:45 & \cellcolor[rgb]{0.682,0.961,0.5} {$\prescript{QH}{3.25}{\text{ID}}_{0.25}^{d, \text{18:45}}$ (17.11)} & \cellcolor[rgb]{0.8,1,0.5} {$\prescript{QH}{3.5}{\text{ID}}_{0.25}^{d, \text{18:45}}$ (12)} & \cellcolor[rgb]{0.92,1,0.5} {$\prescript{H}{3.5}{\text{ID}}_{0.25}^{d, \text{19:00}}$ (8.99)} & \cellcolor[rgb]{1,0.994,0.5} {$\text{IA}^{d,\text{18:45}}$ (6.89)} \\ 
			19:00 & \cellcolor[rgb]{0.78,0.993,0.5} {$\text{IA}^{d,\text{19:00}}$ (12.86)} & \cellcolor[rgb]{0.862,1,0.5} {$\prescript{H}{2.25}{\text{ID}}_{0.25}^{d, \text{18:00}}$ (10.45)} & \cellcolor[rgb]{0.967,1,0.5} {$\prescript{QH}{3.5}{\text{ID}}_{0.25}^{d, \text{19:00}}$ (7.83)} & \cellcolor[rgb]{0.999,1,0.5} {$\prescript{H}{3.25}{\text{ID}}_{0.25}^{d, \text{19:00}}$ (7.02)} \\ 
			19:15 & \cellcolor[rgb]{0.688,0.963,0.5} {$\prescript{H}{3}{\text{ID}}_{0.25}^{d, \text{19:00}}$ (16.86)} & \cellcolor[rgb]{0.896,1,0.5} {$\prescript{H}{3.25}{\text{ID}}_{0.25}^{d, \text{19:00}}$ (9.59)} & \cellcolor[rgb]{0.959,1,0.5} {$\prescript{H}{3.5}{\text{ID}}_{0.25}^{d, \text{19:00}}$ (8.02)} & \cellcolor[rgb]{0.966,1,0.5} {$\prescript{H}{3.75}{\text{ID}}_{0.25}^{d, \text{19:00}}$ (7.86)} \\ 
			19:30 & \cellcolor[rgb]{0.842,1,0.5} {$\prescript{H}{2.75}{\text{ID}}_{0.25}^{d, \text{19:00}}$ (10.95)} & \cellcolor[rgb]{0.933,1,0.5} {$\prescript{H}{3.75}{\text{ID}}_{0.25}^{d, \text{20:00}}$ (8.68)} & \cellcolor[rgb]{0.96,1,0.5} {$\prescript{QH}{2.5}{\text{ID}}_{0.25}^{d, \text{18:45}}$ (8)} & \cellcolor[rgb]{1,0.929,0.5} {$\prescript{H}{4}{\text{ID}}_{0.25}^{d, \text{20:00}}$ (5.59)} \\ 
			19:45 & \cellcolor[rgb]{0.696,0.965,0.5} {$\prescript{QH}{3.25}{\text{ID}}_{0.25}^{d, \text{19:45}}$ (16.51)} & \cellcolor[rgb]{0.943,1,0.5} {$\text{IA}^{d,\text{19:45}}$ (8.42)} & \cellcolor[rgb]{0.977,1,0.5} {$\prescript{H}{3.5}{\text{ID}}_{0.25}^{d, \text{20:00}}$ (7.58)} & \cellcolor[rgb]{0.997,1,0.5} {$\prescript{QH}{3.5}{\text{ID}}_{0.25}^{d, \text{19:45}}$ (7.08)} \\ 
			20:00 & \cellcolor[rgb]{0.767,0.989,0.5} {$\prescript{QH}{3.25}{\text{ID}}_{0.25}^{d, \text{20:00}}$ (13.41)} & \cellcolor[rgb]{0.866,1,0.5} {$\prescript{H}{2.25}{\text{ID}}_{0.25}^{d, \text{19:00}}$ (10.34)} & \cellcolor[rgb]{0.95,1,0.5} {$\text{IA}^{d,\text{20:00}}$ (8.26)} & \cellcolor[rgb]{1,0.986,0.5} {$\prescript{H}{3.25}{\text{ID}}_{0.25}^{d, \text{20:00}}$ (6.71)} \\ 
			20:15 & \cellcolor[rgb]{0.75,0.983,0.5} {$\prescript{H}{3}{\text{ID}}_{0.25}^{d, \text{20:00}}$ (14.15)} & \cellcolor[rgb]{0.847,1,0.5} {$\prescript{H}{2}{\text{ID}}_{0.25}^{d, \text{19:00}}$ (10.82)} & \cellcolor[rgb]{0.941,1,0.5} {$\prescript{H}{3.25}{\text{ID}}_{0.25}^{d, \text{20:00}}$ (8.46)} & \cellcolor[rgb]{1,0.946,0.5} {$\prescript{H}{3.5}{\text{ID}}_{0.25}^{d, \text{20:00}}$ (5.92)} \\ 
			20:30 & \cellcolor[rgb]{0.787,0.996,0.5} {$\prescript{QH}{2.5}{\text{ID}}_{0.25}^{d, \text{19:45}}$ (12.58)} & \cellcolor[rgb]{0.857,1,0.5} {$\prescript{H}{3.75}{\text{ID}}_{0.25}^{d, \text{21:00}}$ (10.58)} & \cellcolor[rgb]{0.883,1,0.5} {$\prescript{H}{2.75}{\text{ID}}_{0.25}^{d, \text{20:00}}$ (9.93)} & \cellcolor[rgb]{0.989,1,0.5} {$\prescript{QH}{2.75}{\text{ID}}_{0.25}^{d, \text{19:45}}$ (7.27)} \\ 
			20:45 & \cellcolor[rgb]{0.734,0.978,0.5} {$\text{IA}^{d,\text{20:45}}$ (14.84)} & \cellcolor[rgb]{0.834,1,0.5} {$\prescript{QH}{2.25}{\text{ID}}_{0.25}^{d, \text{19:45}}$ (11.15)} & \cellcolor[rgb]{0.933,1,0.5} {$\prescript{H}{3.5}{\text{ID}}_{0.25}^{d, \text{21:00}}$ (8.67)} & \cellcolor[rgb]{0.988,1,0.5} {$\prescript{QH}{3.25}{\text{ID}}_{0.25}^{d, \text{20:45}}$ (7.31)} \\ 
			21:00 & \cellcolor[rgb]{0.773,0.991,0.5} {$\prescript{H}{3.25}{\text{ID}}_{0.25}^{d, \text{21:00}}$ (13.19)} & \cellcolor[rgb]{1,0.991,0.5} {$\prescript{H}{2.25}{\text{ID}}_{0.25}^{d, \text{20:00}}$ (6.82)} & \cellcolor[rgb]{1,0.966,0.5} {$\prescript{H}{3.5}{\text{ID}}_{0.25}^{d, \text{21:00}}$ (6.32)} & \cellcolor[rgb]{1,0.897,0.5} {$\prescript{QH}{0.75}{\text{ID}}_{0.25}^{d, \text{18:30}}$ (4.94)} \\ 
			21:15 & \cellcolor[rgb]{0.704,0.968,0.5} {$\prescript{H}{3}{\text{ID}}_{0.25}^{d, \text{21:00}}$ (16.15)} & \cellcolor[rgb]{0.978,1,0.5} {$\prescript{H}{3.25}{\text{ID}}_{0.25}^{d, \text{21:00}}$ (7.56)} & \cellcolor[rgb]{1,0.969,0.5} {$\prescript{QH}{3.25}{\text{ID}}_{0.25}^{d, \text{21:15}}$ (6.37)} & \cellcolor[rgb]{1,0.866,0.5} {$\prescript{H}{3.5}{\text{ID}}_{0.25}^{d, \text{21:00}}$ (4.32)} \\ 
			21:30 & \cellcolor[rgb]{0.87,1,0.5} {$\prescript{H}{2.75}{\text{ID}}_{0.25}^{d, \text{21:00}}$ (10.25)} & \cellcolor[rgb]{0.899,1,0.5} {$\prescript{H}{3}{\text{ID}}_{0.25}^{d, \text{21:00}}$ (9.52)} & \cellcolor[rgb]{0.927,1,0.5} {$\prescript{H}{3.75}{\text{ID}}_{0.25}^{d, \text{22:00}}$ (8.83)} & \cellcolor[rgb]{1,0.955,0.5} {$\prescript{H}{4}{\text{ID}}_{0.25}^{d, \text{22:00}}$ (6.1)} \\ 
			21:45 & \cellcolor[rgb]{0.928,1,0.5} {$\prescript{QH}{3.25}{\text{ID}}_{0.25}^{d, \text{21:45}}$ (8.81)} & \cellcolor[rgb]{0.941,1,0.5} {$\prescript{QH}{2.25}{\text{ID}}_{0.25}^{d, \text{20:45}}$ (8.47)} & \cellcolor[rgb]{0.942,1,0.5} {$\prescript{QH}{3.5}{\text{ID}}_{0.25}^{d, \text{21:45}}$ (8.45)} & \cellcolor[rgb]{1,0.886,0.5} {$\prescript{QH}{1.25}{\text{ID}}_{0.25}^{d, \text{19:45}}$ (4.72)} \\ 
			22:00 & \cellcolor[rgb]{0.806,1,0.5} {$\text{IA}^{d,\text{22:00}}$ (11.85)} & \cellcolor[rgb]{0.88,1,0.5} {$\prescript{QH}{3.25}{\text{ID}}_{0.25}^{d, \text{22:00}}$ (10.01)} & \cellcolor[rgb]{0.907,1,0.5} {$\prescript{H}{3.25}{\text{ID}}_{0.25}^{d, \text{22:00}}$ (9.32)} & \cellcolor[rgb]{0.945,1,0.5} {$\prescript{H}{3.5}{\text{ID}}_{0.25}^{d, \text{22:00}}$ (8.37)} \\ 
			22:15 & \cellcolor[rgb]{0.704,0.968,0.5} {$\prescript{H}{3}{\text{ID}}_{0.25}^{d, \text{22:00}}$ (16.17)} & \cellcolor[rgb]{1,0.993,0.5} {$\prescript{H}{3.25}{\text{ID}}_{0.25}^{d, \text{22:00}}$ (6.85)} & \cellcolor[rgb]{1,0.872,0.5} {$\prescript{QH}{3.25}{\text{ID}}_{0.25}^{d, \text{22:15}}$ (4.45)} & \cellcolor[rgb]{1,0.851,0.5} {$\prescript{H}{3.5}{\text{ID}}_{0.25}^{d, \text{22:00}}$ (4.03)} \\ 
			22:30 & \cellcolor[rgb]{0.77,0.99,0.5} {$\prescript{H}{2.75}{\text{ID}}_{0.25}^{d, \text{22:00}}$ (13.31)} & \cellcolor[rgb]{0.978,1,0.5} {$\prescript{H}{3}{\text{ID}}_{0.25}^{d, \text{22:00}}$ (7.54)} & \cellcolor[rgb]{1,0.964,0.5} {$\prescript{QH}{3.25}{\text{ID}}_{0.25}^{d, \text{22:30}}$ (6.29)} & \cellcolor[rgb]{1,0.896,0.5} {$\prescript{H}{4}{\text{ID}}_{0.25}^{d, \text{23:00}}$ (4.92)} \\ 
			22:45 & \cellcolor[rgb]{0.825,1,0.5} {$\prescript{QH}{3.25}{\text{ID}}_{0.25}^{d, \text{22:45}}$ (11.39)} & \cellcolor[rgb]{0.88,1,0.5} {$\prescript{QH}{3.5}{\text{ID}}_{0.25}^{d, \text{22:45}}$ (10.01)} & \cellcolor[rgb]{0.913,1,0.5} {$\prescript{H}{2.5}{\text{ID}}_{0.25}^{d, \text{22:00}}$ (9.18)} & \cellcolor[rgb]{0.936,1,0.5} {$\prescript{H}{3.5}{\text{ID}}_{0.25}^{d, \text{23:00}}$ (8.59)} \\ 
			23:00 & \cellcolor[rgb]{0.608,0.936,0.5} {$\prescript{QH}{3.25}{\text{ID}}_{0.25}^{d, \text{23:00}}$ (20.34)} & \cellcolor[rgb]{0.972,1,0.5} {$\text{IA}^{d,\text{23:00}}$ (7.69)} & \cellcolor[rgb]{1,0.98,0.5} {$\prescript{H}{2.25}{\text{ID}}_{0.25}^{d, \text{22:00}}$ (6.6)} & \cellcolor[rgb]{1,0.968,0.5} {$\prescript{QH}{3.5}{\text{ID}}_{0.25}^{d, \text{23:00}}$ (6.36)} \\ 
			23:15 & \cellcolor[rgb]{0.822,1,0.5} {$\prescript{H}{3}{\text{ID}}_{0.25}^{d, \text{23:00}}$ (11.44)} & \cellcolor[rgb]{0.883,1,0.5} {$\prescript{H}{2}{\text{ID}}_{0.25}^{d, \text{22:00}}$ (9.93)} & \cellcolor[rgb]{0.985,1,0.5} {$\prescript{QH}{3.25}{\text{ID}}_{0.25}^{d, \text{23:15}}$ (7.38)} & \cellcolor[rgb]{1,0.979,0.5} {$\prescript{H}{3.5}{\text{ID}}_{0.25}^{d, \text{23:00}}$ (6.58)} \\ 
			23:30 & \cellcolor[rgb]{0.742,0.981,0.5} {$\prescript{H}{3}{\text{ID}}_{0.25}^{d, \text{23:00}}$ (14.49)} & \cellcolor[rgb]{0.749,0.983,0.5} {$\prescript{H}{2.75}{\text{ID}}_{0.25}^{d, \text{23:00}}$ (14.23)} & \cellcolor[rgb]{0.833,1,0.5} {$\prescript{H}{3.25}{\text{ID}}_{0.25}^{d, \text{23:00}}$ (11.18)} & \cellcolor[rgb]{0.927,1,0.5} {$\prescript{H}{3.5}{\text{ID}}_{0.25}^{d, \text{23:00}}$ (8.83)} \\ 
			23:45 & \cellcolor[rgb]{0.726,0.975,0.5} {$\prescript{QH}{3.25}{\text{ID}}_{0.25}^{d, \text{23:45}}$ (15.19)} & \cellcolor[rgb]{0.842,1,0.5} {$\prescript{H}{2.5}{\text{ID}}_{0.25}^{d, \text{23:00}}$ (10.95)} & \cellcolor[rgb]{0.897,1,0.5} {$\prescript{H}{2.75}{\text{ID}}_{0.25}^{d, \text{23:00}}$ (9.58)} & \cellcolor[rgb]{1,0.935,0.5} {$\text{IA}^{d,\text{23:45}}$ (5.7)} \\ 
			\hline
		\end{tabular}
		\endgroup
		\caption{Most relevant coefficients in the model \textbf{FI.elnet.penal.C} for each quarter-hourly product from 16:00 to 23:45} 
		\label{tab:beta_qh_3}
	\end{table}

	\clearpage
	
	\bibliographystyle{chicago}
	
	\bibliography{bibliography}	

\begin{thebibliography}{}

\bibitem[\protect\citeauthoryear{A{\"\i}d, Gruet, and Pham}{A{\"\i}d
  et~al.}{2016}]{aid2016optimal}
A{\"\i}d, R., P.~Gruet, and H.~Pham (2016).
\newblock An optimal trading problem in intraday electricity markets.
\newblock {\em Mathematics and Financial Economics\/}~{\em 10\/}(1), 49--85.

\bibitem[\protect\citeauthoryear{Andrade, Filipe, Reis, and Bessa}{Andrade
  et~al.}{2017}]{andrade2017probabilistic}
Andrade, J.~R., J.~Filipe, M.~Reis, and R.~J. Bessa (2017).
\newblock {Probabilistic Price Forecasting for Day-Ahead and Intraday Markets:
  Beyond the Statistical Model}.
\newblock {\em Sustainability\/}~{\em 9\/}(11), 1990.

\bibitem[\protect\citeauthoryear{Dickey and Fuller}{Dickey and
  Fuller}{1979}]{dickey1979distribution}
Dickey, D.~A. and W.~A. Fuller (1979).
\newblock Distribution of the estimators for autoregressive time series with a
  unit root.
\newblock {\em Journal of the American statistical association\/}~{\em
  74\/}(366a), 427--431.

\bibitem[\protect\citeauthoryear{Diebold and Mariano}{Diebold and
  Mariano}{1995}]{diebold1995comparing}
Diebold, F. and R.~Mariano (1995).
\newblock {Comparing Predictive Accuracy}.
\newblock {\em Journal of Business \& Economic Statistics\/}~{\em 13\/}(3),
  253--63.

\bibitem[\protect\citeauthoryear{EEX}{EEX}{2018}]{EEX2018}
EEX (2018).
\newblock German intraday cap/floor futures.
\newblock
  \url{https://www.eex.com/en/products/energiewende-products/german-intraday-cap-futures}.

\bibitem[\protect\citeauthoryear{EPEX}{EPEX}{2018}]{EPEX2018}
EPEX (2018).
\newblock {EPEX SPOT Documentation}.
\newblock
  \url{http://www.epexspot.com/en/extras/download-center/documentation}.

\bibitem[\protect\citeauthoryear{Friedman, Hastie, and Tibshirani}{Friedman
  et~al.}{2010}]{friedman2010glmnet}
Friedman, J., T.~Hastie, and R.~Tibshirani (2010).
\newblock {Regularization Paths for Generalized Linear Models via Coordinate
  Descent}.
\newblock {\em Journal of Statistical Software\/}~{\em 33\/}(1), 1--22.

\bibitem[\protect\citeauthoryear{Gonz{\'a}lez-Aparicio and
  Zucker}{Gonz{\'a}lez-Aparicio and Zucker}{2015}]{gonzalez2015impact}
Gonz{\'a}lez-Aparicio, I. and A.~Zucker (2015).
\newblock {Impact of wind power uncertainty forecasting on the market
  integration of wind energy in Spain}.
\newblock {\em Applied energy\/}~{\em 159}, 334--349.

\bibitem[\protect\citeauthoryear{Kiesel and Paraschiv}{Kiesel and
  Paraschiv}{2017}]{Kiesel2017}
Kiesel, R. and F.~Paraschiv (2017).
\newblock Econometric analysis of 15-minute intraday electricity prices.
\newblock {\em Energy Economics\/}~{\em 64}, 77--90.

\bibitem[\protect\citeauthoryear{Monteiro, Ramirez-Rosado, Fernandez-Jimenez,
  and Conde}{Monteiro et~al.}{2016}]{monteiro2016short}
Monteiro, C., I.~J. Ramirez-Rosado, L.~A. Fernandez-Jimenez, and P.~Conde
  (2016).
\newblock {Short-Term Price Forecasting Models Based on Artificial Neural
  Networks for Intraday Sessions in the Iberian Electricity Market}.
\newblock {\em Energies\/}~{\em 9\/}(9), 721.

\bibitem[\protect\citeauthoryear{Pape, Hagemann, and Weber}{Pape
  et~al.}{2016}]{pape2016fundamentals}
Pape, C., S.~Hagemann, and C.~Weber (2016).
\newblock {Are fundamentals enough? Explaining price variations in the German
  day-ahead and intraday power market}.
\newblock {\em Energy Economics\/}~{\em 54}, 376--387.

\bibitem[\protect\citeauthoryear{Phillips and Perron}{Phillips and
  Perron}{1988}]{phillips1988testing}
Phillips, P.~C. and P.~Perron (1988).
\newblock Testing for a unit root in time series regression.
\newblock {\em Biometrika\/}~{\em 75\/}(2), 335--346.

\bibitem[\protect\citeauthoryear{Said and Dickey}{Said and
  Dickey}{1984}]{said1984testing}
Said, S.~E. and D.~A. Dickey (1984).
\newblock Testing for unit roots in autoregressive-moving average models of
  unknown order.
\newblock {\em Biometrika\/}~{\em 71\/}(3), 599--607.

\bibitem[\protect\citeauthoryear{Tibshirani}{Tibshirani}{1996}]{Tibshirani1996}
Tibshirani, R. (1996).
\newblock {Regression Shrinkage and Selection via the Lasso}.
\newblock {\em Journal of the Royal Statistical Society. Series B
  (Methodological)\/}~{\em 58\/}(1), 267--288.

\bibitem[\protect\citeauthoryear{Uniejewski, Marcjasz, and Weron}{Uniejewski
  et~al.}{2019}]{uniejewski2018understanding}
Uniejewski, B., G.~Marcjasz, and R.~Weron (2019).
\newblock {Understanding intraday electricity markets: Variable selection and
  very short-term price forecasting using LASSO}.
\newblock {\em International Journal of Forecasting\/}.

\bibitem[\protect\citeauthoryear{Uniejewski and Weron}{Uniejewski and
  Weron}{2018}]{Uniejewski2018efficient}
Uniejewski, B. and R.~Weron (2018).
\newblock {Efficient Forecasting of Electricity Spot Prices with Expert and
  LASSO Models}.
\newblock {\em Energies\/}~{\em 11\/}(8), 2039.

\bibitem[\protect\citeauthoryear{Uniejewski, Weron, and Ziel}{Uniejewski
  et~al.}{2018}]{Uniejewski2018}
Uniejewski, B., R.~Weron, and F.~Ziel (2018).
\newblock {Variance Stabilizing Transformations for Electricity Spot Price
  Forecasting}.
\newblock {\em IEEE Transactions on Power Systems\/}~{\em 33\/}(2), 2219--2229.

\bibitem[\protect\citeauthoryear{Viehmann}{Viehmann}{2017}]{Viehmann2017}
Viehmann, J. (2017, Jun).
\newblock {State of the German Short-Term Power Market}.
\newblock {\em Zeitschrift f{\"u}r Energiewirtschaft\/}~{\em 41\/}(2), 87--103.

\bibitem[\protect\citeauthoryear{Weron}{Weron}{2014}]{weron2014electricity}
Weron, R. (2014).
\newblock Electricity price forecasting: A review of the state-of-the-art with
  a look into the future.
\newblock {\em International journal of forecasting\/}~{\em 30\/}(4),
  1030--1081.

\bibitem[\protect\citeauthoryear{Ziel}{Ziel}{2016}]{Ziel2016}
Ziel, F. (2016).
\newblock {Forecasting Electricity Spot Prices Using Lasso: On Capturing the
  Autoregressive Intraday Structure}.
\newblock {\em IEEE Transactions on Power Systems\/}~{\em 31\/}(6), 4977--4987.

\bibitem[\protect\citeauthoryear{Ziel}{Ziel}{2017}]{ziel2017modeling}
Ziel, F. (2017).
\newblock Modeling the impact of wind and solar power forecasting errors on
  intraday electricity prices.
\newblock In {\em 2017 14th International Conference on the European Energy
  Market (EEM)}, pp.\  1--5. IEEE.

\bibitem[\protect\citeauthoryear{Ziel, Steinert, and Husmann}{Ziel
  et~al.}{2015a}]{Ziel2015}
Ziel, F., R.~Steinert, and S.~Husmann (2015a).
\newblock Efficient modeling and forecasting of electricity spot prices.
\newblock {\em Energy Economics\/}~{\em 47}, 98--111.

\bibitem[\protect\citeauthoryear{Ziel, Steinert, and Husmann}{Ziel
  et~al.}{2015b}]{ziel2015forecasting}
Ziel, F., R.~Steinert, and S.~Husmann (2015b).
\newblock {Forecasting day ahead electricity spot prices: The impact of the
  EXAA to other European electricity markets}.
\newblock {\em Energy Economics\/}~{\em 51}, 430--444.

\bibitem[\protect\citeauthoryear{Ziel and Weron}{Ziel and
  Weron}{2018}]{ziel2018day}
Ziel, F. and R.~Weron (2018).
\newblock Day-ahead electricity price forecasting with high-dimensional
  structures: Univariate vs. multivariate modeling frameworks.
\newblock {\em Energy Economics\/}~{\em 70}, 396--420.

\bibitem[\protect\citeauthoryear{Zou and Hastie}{Zou and
  Hastie}{2005}]{Zou05regularizationand}
Zou, H. and T.~Hastie (2005).
\newblock Regularization and variable selection via the elastic net.
\newblock {\em Journal of the Royal Statistical Society, Series B\/}~{\em 67},
  301--320.

\end{thebibliography}
	
\end{document}